\newcommand{\indep}{\mathrel{\text{\scalebox{1.07}{$\perp\mkern-10mu\perp$}}}}
\newcommand{\xhdr}[1]{\vspace{1mm}{\bf #1}}
\newcommand{\invalidiv}{Invalid-IV }
\newcommand{\validiv}{Valid-IV }
\newcommand{\excl}{Exclusion }
\newcommand{\air}{As-if-random }
\begin{document}

\title{Necessary and Probably Sufficient Test for Finding Valid Instrumental Variables}


\author{\name Amit Sharma \email amshar@microsoft.com\\
       \addr Microsoft Research India\\
       Vigyan, Bangalore 560008
        }
\editor{In-progress draft}

\maketitle

\begin{abstract}
    Can instrumental variables be found from data? While instrumental variable (IV) methods are widely used to identify causal effect, testing their validity from observed data remains a challenge. This is because validity of an IV depends on two assumptions, \emph{exclusion} and \emph{as-if-random}, that are largely believed to be untestable from data.  
In this paper, we show that under certain conditions,  testing for instrumental variables is possible. We build upon prior work on necessary tests to derive a test that characterizes the odds of being a valid instrument, thus yielding the name ``necessary and \emph{probably} sufficient''. 
The test works by defining the class of invalid-IV and valid-IV causal models as Bayesian generative models and comparing their marginal likelihood based on observed data. When all variables are discrete, we also provide a method to efficiently compute these marginal likelihoods.
\\
We evaluate the test on an extensive set of simulations for binary data, inspired by an open problem for IV testing proposed in past work. We find that the test is most powerful when an instrument follows monotonicity---effect on treatment is either non-decreasing or non-increasing---and has moderate-to-weak strength; incidentally, such instruments are  commonly used in observational studies. Among as-if-random and exclusion, it detects exclusion violations with higher power. 
Applying the test to IVs from two seminal studies on instrumental variables and five recent studies from the American Economic Review shows that many of the instruments may be flawed, at least when all variables are discretized. 
The proposed test opens the possibility of  data-driven validation and search for instrumental variables. 
\end{abstract}

\begin{keywords}
    Instrumental variable, Sensitivity analysis, Bayesian model comparison
\end{keywords}



\section{INTRODUCTION}
The method of \textit{instrumental variables} is one of the most popular ways to estimate causal effects from observational data in the social and biomedical sciences. The key idea is to find subsets of the data that resemble a randomized experiment, and use those subsets to estimate causal effect. For example, instrumental variables have been used in economics to study the effect of policies such as military conscription and compulsory schooling  on future earnings \citep{angrist1991,angrist2008}, and in epidemiology (under the name \textit{Mendelian} randomization) to study the effect of risk factors on disease outcomes \citep{lawlor2008}. 

Inspite of their popularity, basic questions about the design, analysis and evaluation of instrumental variable (IV) studies remain elusive. 
In the design phase, it is unclear how to find a suitable instrumental variable.  Even with access to bigger and more granular datasets, finding an instrument requires ingenuity and a laborious manual search, thereby restricting most IV studies to instruments derived from a small set of events such as the weather, lotteries or sudden shocks~\citep{dunning2012natural}. 
In the analysis phase, arguments are proposed to justify selection of an instrument, but it is hard to ascertain the extent to, or for which populations,  the instrument is likely to be valid. 
Finally, in the evaluation phase, it is unclear how to evaluate the estimates produced by multiple IV studies due to the absence of any objective criteria of comparing relative validity of instruments, even on the same dataset.

Specifically, consider the canonical causal inference problem shown in  Figure~\ref{fig:std-iv-model}a. The goal is to estimate the effect of a variable $X$ on another variable $Y$ 
based on observed data, where $X$ is commonly referred to as the \emph{treatment} and $Y$ as the \emph{outcome}.
However, there are unobserved (and possibly unknown) common causes for X and Y that confound observed association between X and Y, making the isolation of X's effect on Y a non-trivial problem.
Unlike methods such as stratification or matching that condition on all observed common causes~\citep{stuart2010matching}, the instrumental variable method relies on finding an additional variable $Z$ that acts as an \textit{instrument} to modify the distribution of $X$, as shown by the arrow $Z \rightarrow X$ in Figure~\ref{fig:std-iv-model}a.  The advantage is that we do not need to assume that all confounding common causes are observed to estimate the causal effect. 
To be a valid instrument, however,  $Z$ should satisfy three conditions \citep{angrist2008}. First, $Z$ should have a substantial effect on $X$. That is, $Z$ causes $X$ \textit{(Relevance)}. Second, $Z$ should not cause $Y$ directly (\textit{Exclusion}); the only association between $Z$ and $Y$ should be through $X$. Third, $Z$ should be independent of all the common causes $U$ of $X$ and $Y$ (\textit{As-if-random}). The latter two conditions are shown in the graphical model in Figure~\ref{fig:std-iv-model}b. These conditions can also be expressed as conditional independence constraints: exclusion and as-if-random conditions imply $Z \indep Y | X, U $ and $Z\indep U$ respectively.

However, the Achilles' heel of any instrumental variable analysis is that these core conditions are never tested systematically. Except for relevance (which  can be tested by measuring  observed correlation between $Z$ and $X$), the other two conditions depend on unobserved variables $U$ and thus are harder to check. 
Although necessary tests do exist that can weed out bad instruments~\citep{pearl1995-iv,bonet2001}, in practice exclusion and as-if-random as considered as \textit{assumptions} and often defended with qualitative domain knowledge.
This can be problematic because the entire validity of the IV estimate depends on the exclusion and as-if-random conditions. 

In this paper, therefore we propose a test for validating instrumental variables that can be used to find, evaluate and compare potential instruments for their validity. Although instruments are untestable in general~\citep{morgan2014counterfactuals,dunning2012natural}, we find that in many cases it is possible to distinguish between invalid and valid instruments. 
To do so, the proposed test applies the principles of Bayesian model comparison to causal models and estimates marginal likelihood of an valid instrument given the observed data. Comparing this to the corresponding marginal likelihood for an invalid instrument provides a metric for evaluating the validity of an instrument. The intuition is that if the instrument is valid, then causal models with an instrument as in Figure~\ref{fig:std-iv-model}a should be able to generate observed data with higher likelihood than all other causal models.  
Specifically, let \textit{Valid-IV} refer to the class of all causal models that yield a valid instrument and \textit{Invalid-IV} to the class of causal models that yield an invalid instrument. Given an observed data distribution $P(X, Y, Z)$, the proposed method computes the ratio of marginal likelihoods for Valid-IV and Invalid-IV models. Whenever this marginal likelihood ratio is above a pre-determined acceptance threshold, we can conclude that the instrument is likely to be valid. To distinguish this probabilistic notion from deterministic sufficiency---conditions that would determine in absolute whether an instrument is valid or not---we call an instrument that passes the marginal likelihood ratio test as \textit{probably sufficient}.

\begin{figure}%
	\centering
	\subfloat[Valid-IV causal model]{\includegraphics[scale=0.5]{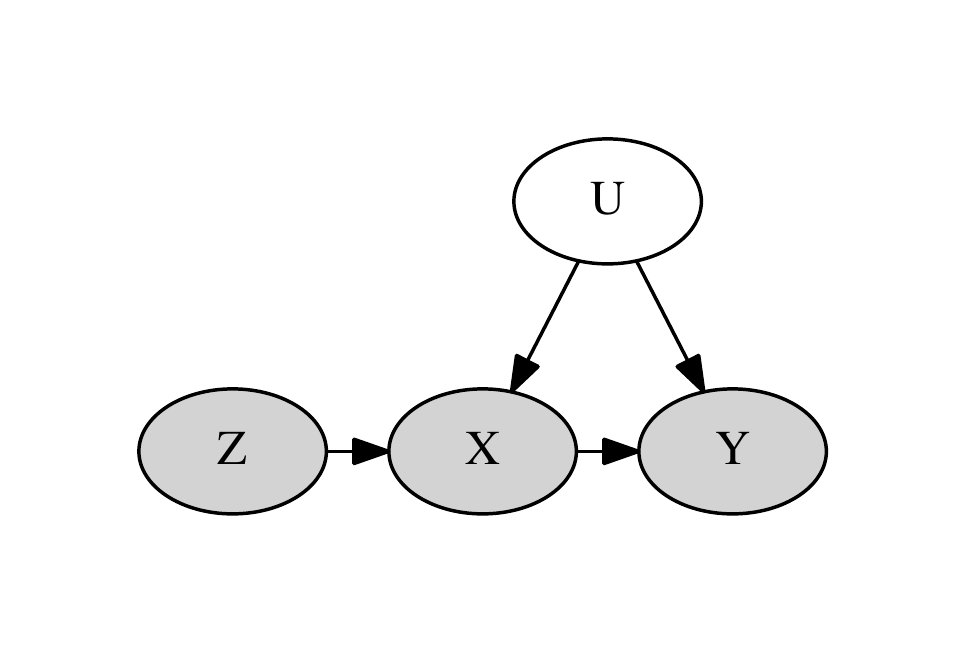}}%
	\qquad
	\subfloat[Invalid-IV causal model]{\includegraphics[scale=0.5]{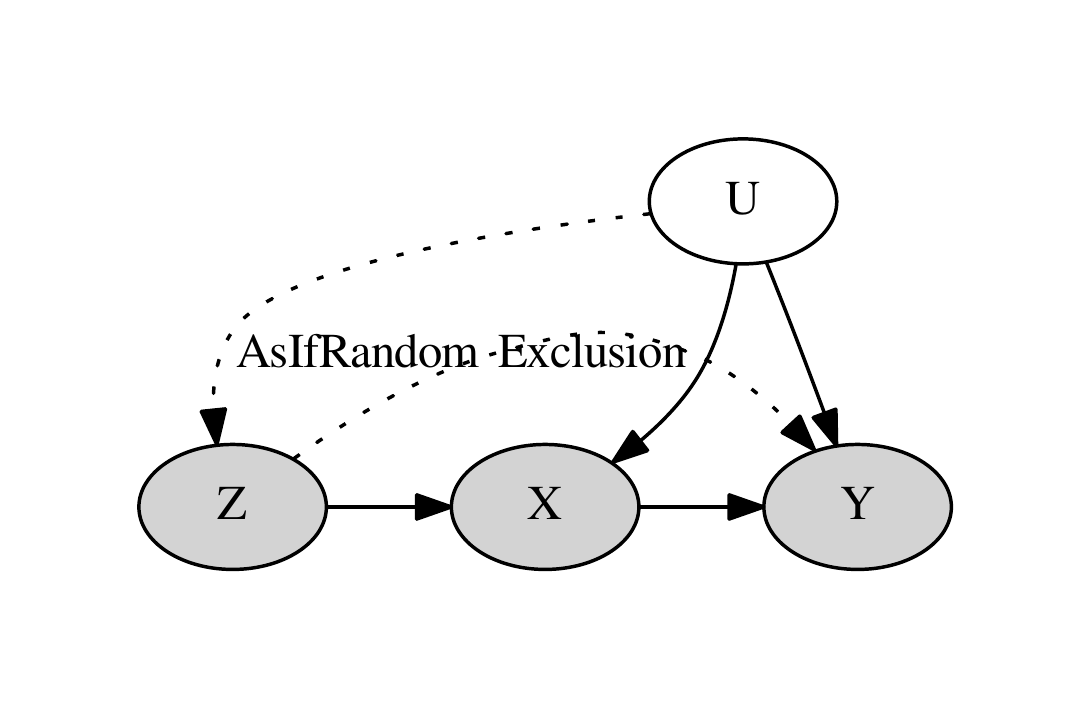}}%
	\caption{Standard instrumental variable causal model and common violations that lead to an invalid-IV model. Exclusion condition is violated when the instrumental variable $Z$ directly affects the outcome $Y$. As-if-random condition is violated when unobserved confounders $U$ also affect the instrumental variable $Z$. }
	\label{fig:std-iv-model}
\end{figure}

Combining the above approach with  necessary tests proposed in past work leads to a \textit{Necessary and Probably Sufficient (NPS)} test for instrumental variables.  
The combined NPS test proceeds as follows. 
If the observed data does not satisfy the necessary conditions, then it is declared invalid. If it does, then we proceed to estimate the marginal likelihood ratio over Valid-IV and Invalid-IV models. When all variables are discrete, we provide a general implementation of this test that makes no assumptions about the nature of functional relationships between the treatment, outcome and instrument.


Finally, any statistical test is only as good as the decisions it helps to support.   Among the two IV assumptions, simulations show that the NPS test is more effective at detecting violations of the exclusion restriction.
In particular, for certain instrumental variable designs where we restrict the direction of causal effects, such as stipulating that the instrument can only have a monotonic effect on treatment and possibly also the outcome,  the NPS test is able to detect invalid instruments with high power. We also find that 
the proposed NPS test is most effective for validating instruments having low correlation with the treatment $X$. 
Incidentally, most of the instruments used in observational studies in the social and biomedical sciences have weak to moderate strength, well-suited to the NPS test.   To demonstrate the test's usefulness in practice, we first consider an open problem proposed by \cite{palmer2011ivbounds} for validating an instrumental variable and show that the NPS test is able to identify valid instruments in that setting. Second,  we apply the test to datasets from two seminal studies and five recent papers on instrumental variables from the American Economic Review, a premier economics journal. In many cases, we find that instrumental assumptions used in the corresponding papers were possibly flawed, at least when variables are binarized. 
We provide an R package \textit{ivtest} that provides an implementation of the NPS test.

\section{BACKGROUND: TESTABILITY OF AN INSTRUMENTAL VARIABLE}
\label{sec:related-work}
Since sufficient conditions for validity of an instrument ($Z\indep U$ and $Z \indep Y | X, U$) depend on an unobserved variable $U$, the validity of an instrumental variable is largely believed to be untestable from observational data alone \citep{morgan2014counterfactuals}. \cite{pearl1995-iv}, however,  discovered that the specific causal graph structure in Figure~\ref{fig:std-iv-model} imposes constraints on the observed probability distribution over $Z$, $X$ and $Y$. These constraints can be used to derive a necessary test for IV conditions \citep{pearl1995-iv,bonet2001}. Such a test weeds out bad instruments, but is inconclusive whenever an instrument passes the test. 
Sufficient tests exist, but require prohibitive assumptions such as knowing another valid instrumental variable as in the  Durbin-Wu-Hausman test \citep{nakamura1981}, or stipulating that confounders have no effect on the outcome.  
Due to these shortcomings, the dominant method to evaluate IV estimates involves a sensitivity analysis~\citep{small2007-ivsensitivity}, where we estimate the impact on the causal estimate when assumptions are violated with varying severity. 
We review prior tests and sensitivity analysis for instrumental variables below.

\subsection{Tests for an instrumental variable}
A classic test for instrumental variables is the Durbin-Wu-Hausman test~\citep{nakamura1981}. Given a subset of valid instrumental variables, it  can  identify whether other potential candidates are also valid instrumental variables. However, it provides no guidance on how to find the initial set of valid instrumental variables.  

Without having an initial set of valid instrumental variables, an intuitive idea is to test whether the instrument is independent of the outcome whenever the treatment is constant ($Z \indep Y|X$)~\citep{balke1993nonparametric}. This can be seen as an approximation of the exclusion restriction using only observed data. It is sufficient under the assumption that either all common causes $U$ are constant throughout the data measurement period or $U$ trivially are not common causes---each of them can affect either $X$ or $Y$, but not both\footnote{Here we make the standard assumption of \textit{faithfulness}. That is, observed conditional independence between variables implies causal independence. }.  
However, under any non-trivial common cause $U$, it will be implausible that $Z$ and $Y$ are independent, because conditioning on $X$ induces a correlation between $Z$ and $U$ (and thus $Y$).
Further, because we ignore the contribution of unobserved confounders, this test is not a necessary condition for a valid instrument.

\label{sec:rw-graphical-tests}
More admissive tests can be obtained by considering the restriction on probability distribution for $(Z, X, Y)$ imposed by a valid IV model. Consider the causal IV model from Figure~\ref{fig:std-iv-model}a. In structural equations, the model can be equivalently expressed as: 
\begin{align} \label{eqn:structural-iv}
y &= f(x, u); \text{  \ \ }x = g(z, u)
\end{align}
where $f$ and $g$ are arbitrary deterministic functions and $U$ represents arbitrary, unobserved random variables that are independent of $Z$. 
Using this framework, Pearl derived conditions that any observed data generated from a valid instrumental variable model must satisfy~\citep{pearl1995-iv}. For binary variables $Z$, $X$ and $Y$, the Pearl's IV test can be written as the following set of \textit{instrumental inequalities}.
\begin{align}
P(Y=0, X=0|Z=0) + P(Y=1,X=0|Z=1) &\leq 1 \nonumber \\
P(Y=0, X=1|Z=0) + P(Y=1,X=1|Z=1) &\leq 1 \nonumber \\
P(Y=1, X=0|Z=0) + P(Y=0,X=0|Z=1) &\leq 1 \nonumber \\
P(Y=1, X=1|Z=0) + P(Y=0,X=1|Z=1) &\leq 1 
\end{align}

Typically, researchers make an additional assumption that helps to derive a point estimate for the  Local Average Treatment Effect (LATE). This assumption, called monotonicity \citep{angrist1994identification}, 
precludes any \textit{defiers} to treatment in the population \citep{angrist2008}. That is, we assume that $g(z_1, u) \geq g(z_2, u)$ whenever $z_1 \geq z_2$.  Under these conditions, Pearl showed that we can obtain tighter inequalities.  For binary variables $Z$, $X$ and $Y$, the instrumental inequalities become:
\begin{align} \label{eqn:binary-monotonicity}
P(Y=y, X=1|Z=1) & \geq P(Y=y, X=1|Z=0)  &\forall y \in \{0,1\} \\
P(Y=y, X=0|Z=0) & \geq P(Y=y, X=0|Z=1) &\forall y \in \{0,1\}
\end{align}

Whenever any of these inequalities are violated, it implies that one or more of the IV assumptions---exclusion, as-if-random or monotonicity---are violated. Based on these instrumental inequalities, different hypothesis tests have been proposed to account for sampling variability in observing the true conditional distributions. For example, a null hypothesis tests based on the chi-squared statistic \citep{ramsahai2011} or the Kolmogorov-Smirnov test statistic \citep{kitagawa2015-ivtest} can be applied.   

Moreover, when $X$, $Y$ and $Z$ are binary, this test is not only necessary, it is the strongest necessary test possible \citep{bonet2001,kitagawa2015-ivtest}. In other words, if an observed data distribution satisfies the test, then there does exist at least one valid-IV  causal model that could have generated the data; we call this the \textit{existence} property. However, the test does not satisfy the existence property when all variables are not binary, allowing probability distributions that cannot be generated by any valid-IV model. To rectify this, Bonet proposed a more general version of the test that ensures the existence property for any discrete-valued $X$, $Y$ and $Z$. \citep{bonet2001}. We refer to this version of the test as the Pearl-Bonet necessary and existence test for instrumental variables, or simply the \textit{Pearl-Bonet test}.

While Bonet presented theoretical properties of the test for discrete variables, implementing the test in practice is non-trivial because it involves testing membership of a convex polytope in high-dimensional space. Further, the test does not support the monotonicity assumption, a popular assumption in instrumental variable studies. In this paper, therefore,  we extend Bonet's work by incorporating monotonicity and present a practical method for testing IVs when variables can have arbitrary number of discrete levels. 


\begin{figure}[ht]
    \centering
	\subfloat[Data distributions generated by Valid-IV and Invalid-IV models]{\includegraphics[scale=0.25]{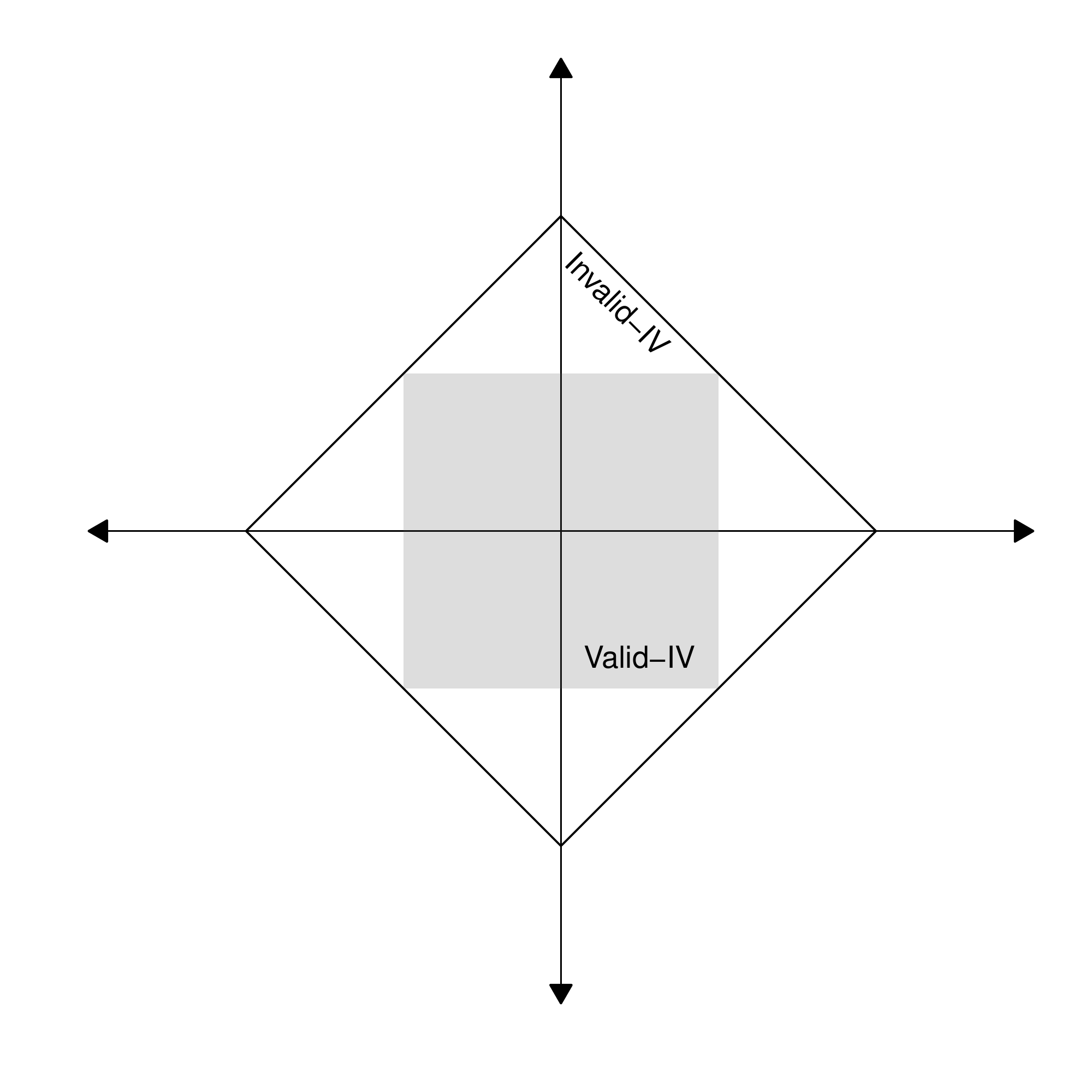}}%
	\qquad
	\subfloat[Data distributions that pass or fail Pearl's necessary test]{\includegraphics[scale=0.42]{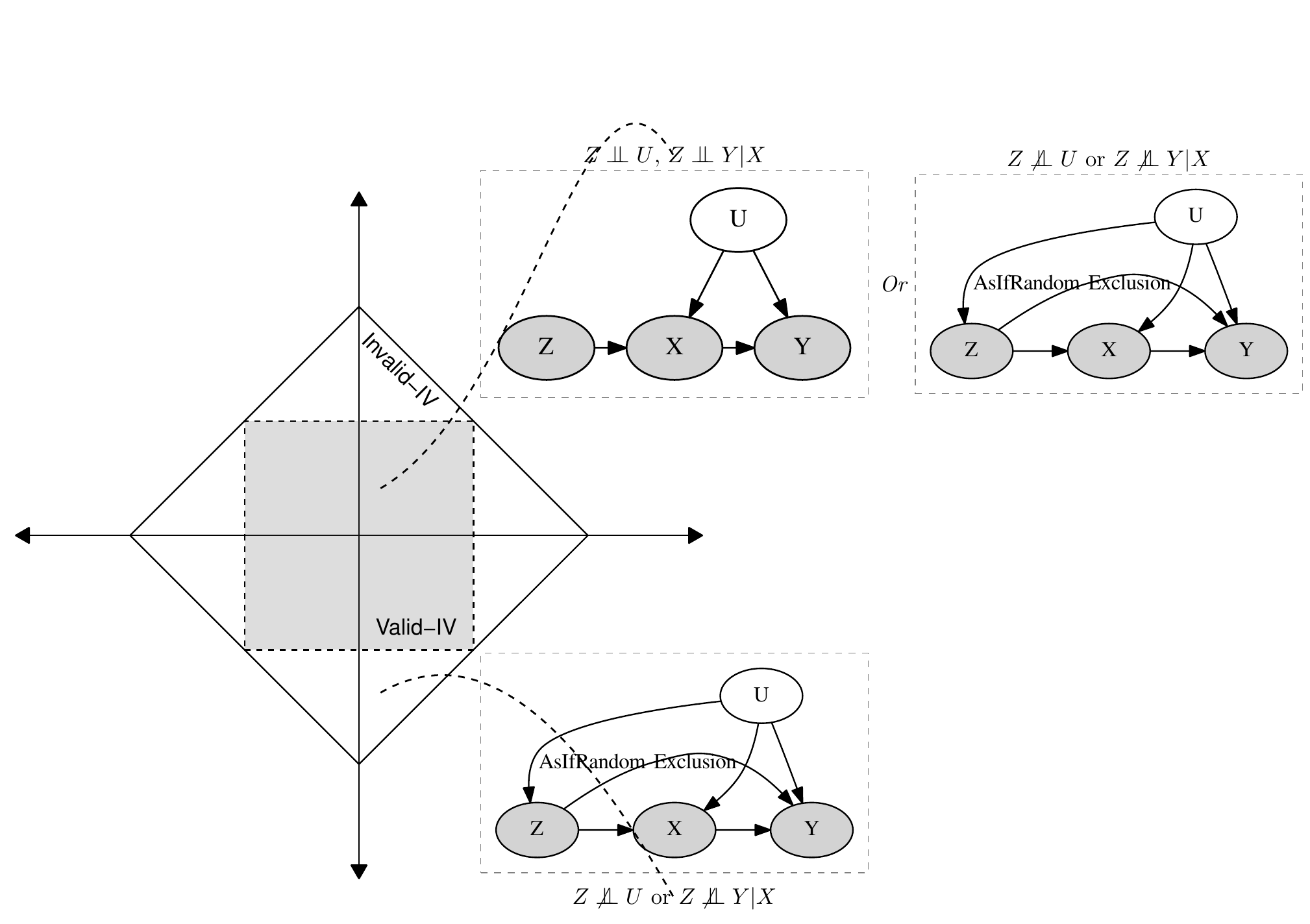}}%
	\qquad
    \caption{\textbf{(a)} A 2D schematic of the probability space $P(X,Y|Z)$. 
	Each point on the schematic plot is a conditional probability distribution over $X$, $Y$ given $Z$. The two squares show polytope boundaries for all distributions generated by invalid-IV and valid-IV models. 
Note that data distributions generated by valid-IV are a subset of  the data distributions generated by invalid-IV. \textbf{(b)} The right panel shows the test boundary for Pearl-Bonet necessary test in dotted lines, which coincides with the Valid-IV boundary for binary $X$. Thus, if a data distribution $P(X,Y|Z)$ fails the test, it has to be generated from a invalid-IV model, as shown in the bottom right inset. Still, Pearl-Bonet test is not sufficient because a data distribution $P(X, Y|Z)$ that passes the test--- and thus lies inside the valid-IV polytope---may be generated by either a valid-IV or invalid-IV model, as shown in the top right inset.}
	\label{fig:nectest-iv-schematic}
\end{figure}

\subsection{Sensitivity analysis for instrumental variables}
Still, the above tests can refute an invalid-IV model, but are unable to \textit{verify} a valid-IV model \citep{kitagawa2015-ivtest}. That is, even when an observed data distribution passes the necessary test, it does not exclude the possibility that data was generated by an invalid-IV model. As an example, consider the following data distribution presented by \cite{palmer2011ivbounds} which passes the Pearl-Bonet test, but violates the exclusion restriction because the instrument $Z$ causes $Y$ directly. 
\begin{align}
    \centering
    Z \sim & Bern(0.5) \nonumber \\
    U \sim & Bern(0.5) \nonumber\\
    X \sim & Bern(p_X); p_X = 0.05 + 0.1Z+0.1U \nonumber\\
    Y \sim & Bern(p_Y); p_2 = 0.1 + 0.05Z + 0.05X + 0.1U 
\end{align}

More generally, let us look at Figure~\ref{fig:nectest-iv-schematic} that shows the relationship between an observed data distribution and Valid-IV or Invalid-IV models. Each point in Figure~\ref{fig:nectest-iv-schematic}a represents a probability distribution over $X$, $Y$ and $Z$.\footnote{The axes represent the space of conditional probabilities $P(X, Y|Z)$. We use the fact that any observed probability distribution $\mathcal{P}$ over $X$, $Y$ and $Z$ can be specified by a set of conditional probabilities of the form $P(X=x, Y=y|Z=z)$. For example, for binary variables, this would be a set of eight conditional probabilities  \citep{bonet2001}. The corresponding 8-dimensional real vector would be:
\begin{align}
F (\mathcal{P}) = \big( &P(X=0,Y=0|Z=0), P(X=0,Y=1|Z=0), P(X=1,Y=0|Z=0), P(X=1,Y=1|Z=0), \nonumber \\
				& P(X=0,Y=0|Z=1), P(X=0,Y=1|Z=1), P(X=1,Y=0|Z=1), P(X=1,Y=1|Z=1) \big)
\end{align}
The 2-D squares represented in Figures~\ref{fig:nectest-iv-schematic}a,b are actually polytopes in this multi-dimensional space. The extreme points for $F(\mathcal{P})$, or equivalently for invalid-IV models  are characterized by $P(X=x,Y=y|Z=z)=1 \ \forall z$. The boundary shown for NPS test in Figure~\ref{fig:nectest-iv-schematic}c is however, an oversimplification. The set of conditional distributions (or equivalently, instruments) that can be validated by the NPS test is unknown and most likely will constitute many regions in the probability space, instead of a single bounded region as shown.} The two squares bound the probability distributions that can be generated by any Valid-IV or Invalid-IV model. As can be seen from the figure, probability distributions generatable from  the class of Valid-IV models are a strict subset of the distributions generatable by the class of invalid-IV models. This implies that even if a statistical test can accurately identify the boundary for valid-IV models, as in Figure~\ref{fig:nectest-iv-schematic}b, we can never be sure whether the probability distribution was actually generated by a valid-IV or invalid-IV model.  

As a partial remedy, researchers employ a sensitivity analysis to check the brittleness of a causal estimate when required assumptions are violated. For instance, one may progressively increase the magnitude of violation of the exclusion restriction, and observe when a resultant causal estimate flips in its direction of stated effect~\citep{small2007-ivsensitivity}.
A Bayesian framework presents an intuitive way to conceptualize sensitivity analysis. The idea is to first estimate a causal effect assuming that the data is generated from a valid-IV causal model. One can then change parameters of the causal model to violate key assumptions and observe how fast the causal estimate changes. 
As an example, \cite{kraay2010} proposes a Bayesian analysis to check the sensitivity to assumptions for a linear IV model. This analysis shows that even moderate uncertainty in the prior for the exclusion restriction lead to considerable loss of precision in estimating causal effects.

\subsection{Developing a sufficient test for instrumental variables}
In an ideal world, we would like to reduce uncertainty about assumptions as much as possible and determine precisely whether observed data was generated from a valid-IV model or not. However, as Figure~\ref{fig:nectest-iv-schematic} indicates, establishing \textit{sufficiency} for a validity test is non-trivial. In particular, the usual method of comparing the maximum data likelihood of the two classes of IV models, Valid-IV or Invalid-IV, provides us no information.
This is because \invalidiv class of models (as shown in Figures~\ref{fig:std-iv-model}b and \ref{fig:nectest-iv-schematic}b) is more general than the \validiv class and thus is always as likely (or more) to generate the observed data.

Instead of comparing \emph{maximum} likelihoods of model classes, we turn to estimating likelihoods of individual causal models from \invalidiv and \validiv classes. The intuition is that while the \invalidiv class may always have a causal model that matches likelihood of the \validiv class for a valid instrument, there will be many other \invalidiv models that provide a lower likelihood for the data. By generating models with different violations of the \excl and \air conditions, we can estimate the data likelihood over individual models in the \invalidiv class. Averaging over all models in the \validiv and \invalidiv classes, we expect marginal likelihood to be higher for the \validiv class for data generated from a \validiv model.  The idea of comparing different models from Valid-IV and Invalid-IV classes is similar to sensitivity analysis, except that we are interested in the likelihood of data rather than resultant causal estimates. 

Unlike necessary tests \citep{ramsahai2011,kitagawa2015-ivtest} that refute a null hypothesis that observed data was generated from a valid-IV model, probable sufficiency requires estimating the relative probability of valid-IV and invalid-IV models given observed data. When the relative probability---formally, \emph{marginal likelihood}---is high, the instrument is likely to be valid. Conversely, when it is low, the instrument is likely to be invalid.
Based on this motivation, we now provide a definition for probable sufficiency.

\xhdr{Probable Sufficiency for Instrumental Variables: } If an observed data distribution passes the Pearl-Bonet necessary test, how likely is it that it was generated from a valid-IV model compared to an invalid-IV model?

Intuitively, we wish to find out how often does the Pearl-Bonet test accept an Invalid-IV model. That is, how often does an observed distribution that was generated by an invalid-IV model pass the necessary test? Once we know that, we can compute the probability that a given observed distribution was generated by a valid-IV model, based on the result of Pearl-Bonet test.  

Combined, the Pearl-Bonet test and our probable sufficiency test provide a framework for testing instrumental variables, which we call the \textit{Necessary and Probably Sufficient (NPS)} test for instrumental variables.  Any valid instrument needs to pass Pearl-Bonet test. Therefore, NPS test provides necessity: any instrument that fails instrumentality inequalities is not a valid instrument. Further, NPS test provides sufficiency: any instrument that satisfies Pearl's instrumental inequalities and passes the probable sufficiency test can be accepted as a valid instrument. That said, NPS test will be inconclusive for some instruments: those that satisfy Pearl's inequalities but the marginal likelihood ratio remains close to 1. Figure~\ref{fig:nectest-iv-schematic} shows these possibilities. As shown by the dark grey box in the center, NPS test validates a subset of all possible valid-IV models.



In the next two sections, we describe the NPS test formally. Section~\ref{sec:nps-test-theory} presents a general \textit{Validity Ratio} statistic that can be used to compare Valid-IV and Invalid-IV models. We do so by introducing a probabilistic generative meta-model that formalizes the connection between IV assumptions, causal models and the observed data. The key detail for computing the Validity Ratio is in selecing a suitable sampling strategy for causal models. Section~\ref{sec:nps-implementation} describes one such strategy based on the response variable framework \citep{balke1993nonparametric}.
For the rest of the paper, we assume that $X$, $Y$ and $Z$ are all discrete. In principle, we can apply the NPS testing framework to both continuous and discrete values for $X$, $Y$ and $Z$. However, the test is expected to be most informative when $X$ is discrete. This is because when $X$ is continuous, the region for Valid-IV identified by Pearl's necessary test coincides with the  Invalid-IV region and the necessary test becomes uninformative \citep{bonet2001}. 
For ease of exposition, we present the NPS test using binary $Z$, $X$ and $Y$. In Section~\ref{sec:extend-discrete}, we discuss how the test extends to the case where $Z$, $X$ and $Y$ can be arbitrary discrete variables. 
Finally, Sections~\ref{sec:simulations} and \ref{sec:applications} demonstrate the practical applicability of the test using simulation and data from past IV studies.




\section{A NECESSARY AND PROBABLY SUFFICIENT (NPS) TEST}
\label{sec:nps-test-theory}
The key idea behind the NPS test is that we can compare marginal likelihoods of valid-IV and invalid-IV class of models. To do so, we first present a Bayesian meta-model that describes how observed data can be generated from different values of the exclusion and as-if-random conditions. We then provide our main result that proposes a \textit{Validity Ratio} to compare valid-IV and invalid-IV models, followed by pseudo-code for an algorithm that uses the NPS test to validate an instrumental variable. 
Throughout, we assume that $Z$, $X$ and $Y$ are all discrete variables.

\subsection{Generating valid-IV and invalid-IV causal models }
\label{sec:generative-model}
As mentioned above, our strategy depends on simulating all causal models---both valid-IV and invalid-IV models---that could have generated the observed data. Therefore, we first describe a probabilistic generative \textit{meta-model} of how the observed data is generated from a causal model, which in turn, is generated based on the as-if-random and exclusion assumptions.

Let us first define the valid-IV and invalid-IV models formally in terms of the two IV assumptions: exclusion and as-if-random. A  valid IV model does not contain an edge from $Z \rightarrow Y$ or from $U \rightarrow Z$, as shown in Figure~\ref{fig:std-iv-model}a. This implies that both Exclusion and As-if-random conditions hold for a valid-IV model. 
Conversely, a causal model is an invalid IV model when at least one of Exclusion or As-if-random conditions is violated, as shown  by the dotted arrows in Figure~\ref{fig:std-iv-model}b. Therefore, given the causal structure $Z \rightarrow X \rightarrow Y$,  there are two classes of causal models that can generate observed data distributions over $X$, $Y$ and $Z$: 

\begin{itemize}
	\item Valid-IV model: $E=True$ and $R=True$
	\item Invalid-IV model: \textit{Not} ($E=True$ and $R=True$)
\end{itemize}
where $E$ denotes the exclusion assumption and $R$ denotes the as-if-random assumption.  

 Each of these classes of causal models---valid and invalid IV---in turn contains multiple causal models, based on the specific parameters ($\theta$) describing each edge of the graph.
This one-to-many relationship between conditions for IV validity and causal models can be made precise using a generative meta-model, as shown in Figure~\ref{fig:dgp-pgm}.  We show dotted arrows to distinguish this (probabilistic) generative meta-model from the causal models described earlier. 
The meta-model entails the following generative process: Based on the configuration of the Exclusion and As-if-Random conditions, one of the  causal model classes---Valid or Invalid IV---is selected. A specific model (\textit{Causal Model} node) is then generated by parameterizing  the selected class of causal models, where we use $\theta$ to denote model parameters. The  causal model results in a probability distribution over $Z$, $X$ and $Y$, from which observed data (\textit{Data} node) is sampled. Finally, we can apply Pearl-Bonet necessary test on the observed data, which leads to the binary variable $NecTestResult$.

For a given problem, we observe the data $D$ and result of the Pearl-Bonet necessary test. All other variables in the meta-model are unobserved. 

\begin{figure*}[t]
	\centering
	\includegraphics[scale=0.59]{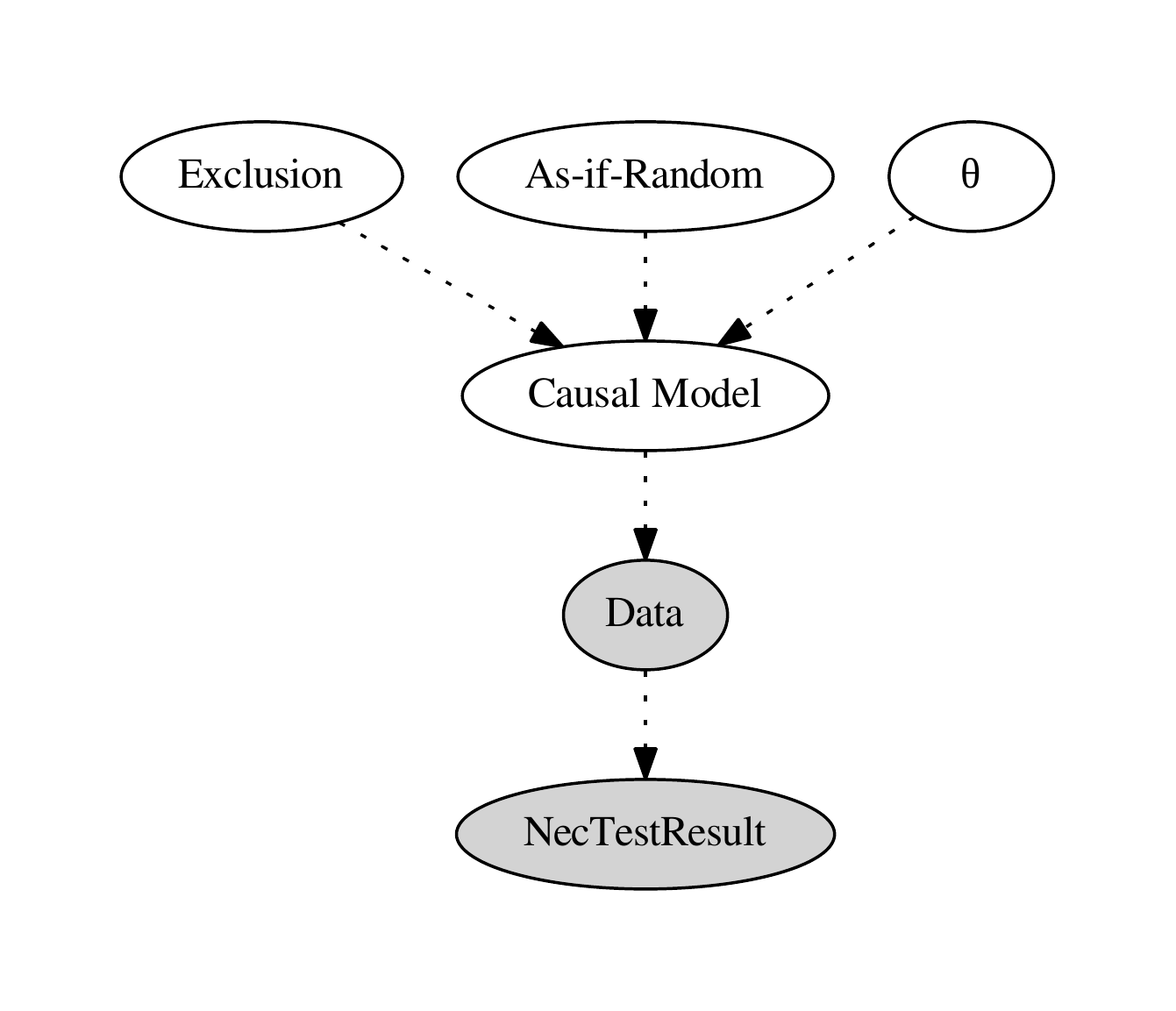}
	\caption{A probabilistic graphical meta-model for describing the connection between IV conditions and specific causal models. Evidence consists of both the results of the necessary test and observed data sample. Therefore, given this evidence, some causal models are expected to become more likely than others. Note that arrows are dotted to distinguish these \textit{probabilistic} diagrams from the causal diagrams in Figure 1.}
	\label{fig:dgp-pgm}
\end{figure*}

\subsection{Comparing marginal likelihood of Valid-IV and Invalid-IV model classes}
Let $PT$ denote whether the observed data passed the necessary test. We wish to estimate whether the data was generated from a Valid IV model. We can compare the likelihood of observing $PT$ and $D$ given that both Exclusion and As-if-random conditions are valid, versus when they are not. 

\begin{theorem}
	Given a representative data sample $D$ drawn from $P(X, Y, Z)$ over variables $X$, $Y$, $Z$, and result of the Pearl-Bonet necessary test $PT$ on the data sample, the validity of $Z$ as an instrument for estimating causal effect of $X$ on $Y$ can be decided using the following evidence-ratio of valid and invalid classes of models:
\begin{align} \label{eqn:model-ratio-mainthm}
	\textit{Validity-Ratio} &=\frac{P(E, R|PT, D)}{P(\neg (E, R)|PT,D)} =  \frac{P(PT,D|E, R)*P(E,R)}{P(PT,D|\neg (E, R))*P(\neg (E,R))} \nonumber \\
						     &= \frac{P(M1)}{P(M2)} \frac{\int_{M_1: m \text{ is valid}} P(m|E,R) P(D|m) dm}{\int_{M_2:m \text{ is invalid}} P(m | \neg (E,R)) P(D|m) dm}
\end{align}
where $M1$ and $M2$ denote classes of valid-IV and invalid-IV causal models respectively. $P(D|m)$ represents the likelihood of the data given a causal model $m$. $P(m|E,R)$ and $P(m | \neg (E,R)$ denote the prior probability of any model $m$ within the class of valid-IV and invalid-IV causal models respectively. 
\end{theorem}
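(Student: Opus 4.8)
The plan is to read the claimed identity as a purely Bayesian derivation whose entire content is encoded in the factorization of the generative meta-model of Figure~\ref{fig:dgp-pgm}; there is no inequality or estimation guarantee to establish, only a careful left-to-right unpacking of conditional probabilities through the three displayed expressions. I would prove each equality in turn, taking the chain $(E,R)\to\text{class}\to m\to D\to PT$ as the ground truth for every conditional-independence statement I invoke.

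The first equality is definitional: the Validity-Ratio is simply the posterior odds $P(E,R\mid PT,D)/P(\neg(E,R)\mid PT,D)$ of the event ``both IV assumptions hold'' against its complement. For the second equality I would apply Bayes' theorem separately to numerator and denominator, writing $P(E,R\mid PT,D)=P(PT,D\mid E,R)\,P(E,R)/P(PT,D)$ and likewise for the complement. The common normalizer $P(PT,D)$ appears in both and cancels when the ratio is formed, leaving exactly the product of a prior-odds factor $P(E,R)/P(\neg(E,R))$ and a likelihood-ratio factor $P(PT,D\mid E,R)/P(PT,D\mid\neg(E,R))$.

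For the third equality I would do two things. First, identify the prior odds with the model-class prior: because the meta-model selects the class deterministically from the conditions---Valid-IV exactly when $E$ and $R$ both hold---we have $P(E,R)=P(M1)$ and $P(\neg(E,R))=P(M2)$, giving the leading factor $P(M1)/P(M2)$. Second, expand each class-conditional likelihood by marginalizing over the specific parameterized causal model $m$, so that $P(PT,D\mid E,R)=\int P(PT,D\mid m)\,P(m\mid E,R)\,dm$. Here $P(m\mid E,R)$ is supported only on the valid-IV class $M_1$ (and $P(m\mid\neg(E,R))$ only on $M_2$), which is precisely what restricts the range of integration. The conditional independence $D\indep(E,R)\mid m$---the data distribution is fully determined by the chosen model---lets me replace $P(D\mid m,E,R)$ by $P(D\mid m)$. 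Finally, since $PT$ is a deterministic function of $D$ (the test result is computed from the sample), $P(PT,D\mid m)=\mathbf{1}[PT=\mathrm{test}(D)]\,P(D\mid m)$; the indicator depends only on $D$, so it is identical in numerator and denominator and cancels, leaving the $\int P(D\mid m)\,P(m\mid\cdot)\,dm$ factors shown.

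I expect the friction to be bookkeeping rather than a substantive obstacle. The two points needing care are (i) justifying $D\indep(E,R)\mid m$ together with the support restriction of the model priors strictly from the meta-model's arrows rather than by assertion, and (ii) making the cancellation of $PT$ precise and, relatedly, explaining why conditioning on $PT$ is non-vacuous in the intended reading even though it is redundant once the full sample $D$ is observed. I would also flag the measure-theoretic point that the integrals over $m$ range over a high-dimensional (response-function) parameter space, so I would write them as integrals against the prior measure $P(m\mid\cdot)$ and defer the concrete parameterization to the implementation section rather than attempt to evaluate them here.
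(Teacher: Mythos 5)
Your proposal is correct and follows essentially the same route as the paper's proof: Bayes' theorem to convert posterior odds into prior odds times a likelihood ratio, cancellation of the $P(PT \mid D)$ factor via the conditional independence $PT \indep (E,R) \mid D$, marginalization over $m$ with the support restriction $P(m_{\text{invalid}} \mid E,R)=0$, and the identification $P(E,R)/P(\neg(E,R)) = P(M1)/P(M2)$. The only cosmetic difference is that you model $PT$ as a deterministic indicator of $D$, whereas the paper keeps a general factor $P(PT\mid D)$ to allow for finite-sample test behavior; both cancel identically in the ratio.
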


While we are additionally using the result of the Pearl-Bonet necessary test to compute evidence, the Validity-Ratio reduces to the Bayes Factor \citep{kass1995bayes}. 
The proof of the theorem follows from the structure of the generative meta-model and properties of the Pearl-Bonet necessary test.

\begin{proof}
	Let us first consider the ratio of marginal likelihoods of the two model classes. 
\begin{align} \label{eqn:model-ratio}
	\textit{ML-Ratio} &= \frac{P(PT=1,D=d|E, R)}{P(PT=1,D=d|\neg (E, R))}
\end{align}

Since the Pearl-Bonet test is a necessary test, we know that $P(PT | E, R) = 1$ if the true data distributions are known. However, in practice, we will have a data sample and apply a statistical test. Therefore in some cases the test may return Fail even if $E$ and $R$ are satisfied, leading to the following expression for the numerator:
\begin{align} \label{eqn:model-ratio-num1}
P&(PT=1,D=d|E, R) \nonumber \\
 &=P(PT=1|D=d, E,R)P(D=d|E, R) \nonumber \\
 &=P(PT=1|D=d)P(D=d|E, R) \\
\end{align}

Further,  for any causal model $m$, we know with certainty whether it follows exclusion and as-if-random restrictions. In particular, $P(m_{invalidIV}|E, R) = 0$. Using this observation,  we can write $P(D|E,R)$ as:
\begin{align} \label{eqn:model-ratio-num2}
P&(D|E,R) \nonumber \\
&= \int_{m} P(D, m|E,R) dm \nonumber \\
&= \int_{m} P(m|E,R) P(D|m) dm \nonumber \\
&=  \int_{M_1:m \text{ is valid}} P(m|E,R) P(D|m) dm
\end{align}

Similarly, the denominator can be expressed by, 
\begin{align} \label{eqn:model-ratio-deno1}
P&(PT,D|\neg (E, R)) \nonumber \\
 &= P(PT|D, \neg (E, R)) P(D|\neg (E, R)) \nonumber \\
 &= P(PT|D) P(D|\neg (E, R)) \nonumber \\
 &= P(PT|D) \int_m  P( D, m| \neg (E, R)) dm  \nonumber \\
&= P(PT|D) \int_m  P(m| \neg (E,R)) P( D|m, \neg (E, R)) dm  \nonumber \\ 
 &=P(PT|D)  \int_{M_{2}:m \text{ is invalid}} P(m | \neg (E,R)) P( D|m) dm
\end{align}

where we use the conditional independencies entailed by the generative meta-model.

Combining Equations~\ref{eqn:model-ratio-num1}, \ref{eqn:model-ratio-num2} and \ref{eqn:model-ratio-deno1}, we obtain the ratio of marginal likelihoods: 
\begin{align} \label{eqn:model-ratio-reduced}
    \textit{ML-Ratio} &=	\frac{P(PT,D|E, R)}{P(PT,D|\neg (E, R))} = \frac{\int_{M1:m \text{ is valid}} P(m|E,R) P(D|m) dm}{\int_{M2:m \text{ is invalid}}  P(m| \neg (E,R)) P(D|m) dm}
\end{align}
Finally, by definition of model classes $M1$ and $M2$, they correspond to valid and invalid classes of causal models. Thus,
\begin{align}
	\frac{P(E,R)}{P(\neg (E, R))} &= \frac{P(M1)}{P(M2)}
\end{align}
The above two equations lead us to the main statement of the theorem:
\begin{align}
 \frac{P(PT,D|E, R)*P(E,R)}{P(PT,D|\neg (E, R))*P(\neg (E,R))}  &= \frac{P(M1)}{P(M2)}  \frac{\int_{M1:m \text{ is valid}} P(m|E,R) P(D|m) dm}{\int_{M2:m \text{ is invalid}} P(m| \neg (E,R)) P(D|m) dm}
\end{align}
\end{proof}

As with the Bayes Factor, estimation of the Validity-Ratio depends on the prior on causal models because the model is not identified. Since  the configuration of Exclusion and As-if-random conditions does not provide any more information apart from restricting the class of causal models, we can assume a uninformative uniform prior on causal models within each of the Valid-IV and Invalid-IV classes. If sufficient data is available, one may use the fractional Bayes Factor \citep{ohagan1995fractional} to split the sample and use the first subsample to find a prior on causal models using data likelihood, and the second to estimate the Validity Ratio. We discuss the effect of using other model priors in Section~\ref{sec:discussion}. Using a uniform model prior leads to the following corollary.

\begin{corollary} \label{cor:uniform-prior}
Using a uniform model prior $P(M1|E, R)$ for valid-IV models, $P(M2|\neg (E,R))$ for invalid-IV models, the Validity-Ratio from Theorem 1 reduces to:
\begin{align} \label{eqn:model-ratio-uniformprior}
\textit{Validity-Ratio} &=\frac{P(M1)}{P(M2)}  \frac{K_2 \int_{M1:m \text{ is valid}} P(D|m) dm}{ K_1 \int_{M2:m \text{ is invalid}} P(D|m) dm}
\end{align}
where $K_1$ and $K_2$ are normalization constants.
\end{corollary}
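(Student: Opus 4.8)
The plan is to substitute the uniform model priors directly into the Validity-Ratio established in Theorem~1 and pull the now-constant prior densities outside the integrals. Because the paper assumes $Z$, $X$ and $Y$ are discrete, every causal model in either class is specified by a parameter vector $\theta$ lying in a bounded probability simplex, so both the valid-IV parameter space $M_1$ and the invalid-IV parameter space $M_2$ carry finite measure. A uniform prior on $M_1$ then means $P(m|E,R)$ equals a single constant for every $m \in M_1$; properness of the density forces this constant to equal $1/K_1$ where $K_1 := \int_{M_1} dm$. By the identical argument, $P(m|\neg(E,R)) = 1/K_2$ with $K_2 := \int_{M_2} dm$.

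First I would rewrite the numerator integral from Theorem~1 as $\int_{M_1} P(m|E,R) P(D|m)\, dm = (1/K_1)\int_{M_1} P(D|m)\, dm$, since the constant prior density factors out of the integral. I would treat the denominator the same way to obtain $(1/K_2)\int_{M_2} P(D|m)\, dm$. Forming the ratio of these two expressions and simplifying the resulting nested fraction moves $K_2$ into the numerator and $K_1$ into the denominator, while the overall factor $P(M1)/P(M2)$ passes through unchanged. This yields exactly Equation~\ref{eqn:model-ratio-uniformprior}, with $K_1$ and $K_2$ identified as the normalizing constants (volumes) of the valid-IV and invalid-IV model classes respectively.

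The only step that needs real care is verifying that $K_1$ and $K_2$ are finite and strictly positive, so that the uniform priors are genuine proper densities and the final division is well-defined. This is precisely where the discreteness assumption earns its place: the simplex parameterization of the causal models guarantees bounded, positive-measure parameter spaces and thereby rules out the pathological case of an improper uniform prior that would leave the normalizing constants undefined. Everything else is a single substitution followed by routine algebraic simplification of the ratio.
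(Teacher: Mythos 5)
Your proof is correct and matches the paper's (implicit) argument: the corollary follows from Theorem~1 simply by substituting the constant prior densities $P(m|E,R)=1/K_1$ and $P(m|\neg(E,R))=1/K_2$ and factoring them out of the integrals, which is exactly what you do. Your additional observation that discreteness guarantees bounded, positive-measure parameter spaces (so the uniform priors are proper and $K_1, K_2$ are finite and positive) is a worthwhile refinement of a detail the paper leaves unstated.
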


\subsection{NPS Algorithm for testing IVs}
\label{sec:nps-algorithm}
Based on the above theorem, we present the NPS algorithm for testing the validity of an instrumental variable below. Assume that the observational dataset contains values for three discrete variables: cause $X$, outcome $Y$ and a candidate instrument $Z$.

\begin{enumerate} 
\item Estimate $P(Y, X|Z)$ using observational data and run the Pearl-Bonet necessary test. If the necessary test fails, Return \textit{REJECT-IV}.
\item Else, compute the Validity-Ratio from Equation~\ref{eqn:model-ratio-mainthm} for the one or more of the following types of violations (can exclude violations that are known \emph{apriori} to be impossible):
	\begin{itemize}
		\item \textbf{Exclusion may be violated}: $Z \not \indep Y | X, U$
	\item \textbf{As-if-random may be violated}: $Z \not \indep U$
	\item \textbf{Both may be violated}:$Z \not \indep Y | X, U; Z \not \indep U$
	\end{itemize}
\item If all Validity Ratios are above a pre-determined threshold $\gamma$, then return \textit{ ACCEPT-IV}. Else if any Validity Ratio is less than $\gamma^{-1}$, then return \textit{REJECT-IV}. Else, return \textit{INCONCLUSIVE}.
\end{enumerate}


Although the NPS algorithm seems straightforward, in practice, the first two steps involve a number of smaller steps that we discuss in the next two sections. Section~\ref{sec:nps-implementation} describes how to compute the Validity Ratio for the three kinds of violations listed above and Section~\ref{sec:extend-discrete} discusses extensions of the Pearl-Bonet test that enable its empirical application to discrete variables.

\section{COMPUTING THE VALIDITY RATIO}
\label{sec:nps-implementation}
The key detail in implementing the NPS test is in evaluating the integrals in Equation~\ref{eqn:model-ratio-mainthm}, since there can be infinitely many valid-IV or invalid-IV causal models. 
In this section we first present the \emph{response variables} framework from  \cite{balke1994} that provides a finite representation for any non-parametric causal model with discrete $X$, $Y$ and $Z$. We then extend this framework to also work with invalid-IV causal models.
Armed with this characterization, we describe methods for computing the Validity-Ratio in Section~\ref{sec:compute-validity-ratio}.
Note that neither our finite characterization of causal models nor our methods for computing the integrals are unique; any other suitable strategy can be used to implement the NPS test.

\subsection{The response variable framework}
\label{sec:sample-causal-models}
One way  to characterize causal models is to assume specific functional relationships between observed variables, such as in the popular linear model for instrumental variables~\citep{imbens2015causal}. In most cases, however, the nature of the functional form is not known and thus parameterization in this way arbitrarily restricts the complexity of a causal model.  A more general way to make no assumptions on the functional relationships or the unobserved confounders, but rather reason about the space of all possible functions between observed variables. We will use this approach for characterizing valid-IV and invalid-IV causal models. 

As an example, suppose we observe the following functional relationship between Y and X, $y=k(x)$, where the true causal relationship is $y=f(x,u)$. 
Conceptually, the variables $U$ are simply additional inputs to this function.  
Thus the effect of unobserved confounders can be seen as simply transformation the observed relationship $k$ to another function $k'(x) = f(x, u)$. However, it is hard to reason about this transformation because the confounders are unobserved and may even be unknown. 
Here we make use of a property of discrete variables that stipulates a finite number of functions between any two discrete variables. Because the number of possible functions is finite, the  combined effect of unobserved confounders $U$ can be characterized by a finite set of parameters, \emph{without} making any assumptions about $U$. We will call these parameters \emph{response variables}, in line with past work. 
Note that we make no restriction on U---they can be discrete or continuous---but instead restrict the observed variables to be discrete.

More formally, a response variable acts as a selector on a non-deterministic function and converts it into a a set of deterministic functions, indexed by the response variable. Depending on the value of the response variable, one of the deterministic functions is invoked. Under this transformation, the response variables become the only random variables in the system, and therefore any  causal model can be expressed as a probability distribution over the response variables.

\begin{figure}[tb]
	\centering
	\subfloat[Valid IV]{\includegraphics[scale=0.6]{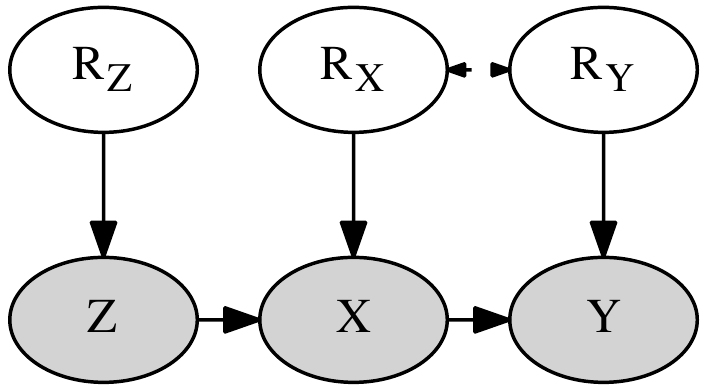}}%
	\qquad
	\subfloat[Invalid IV]{\includegraphics[scale=0.6]{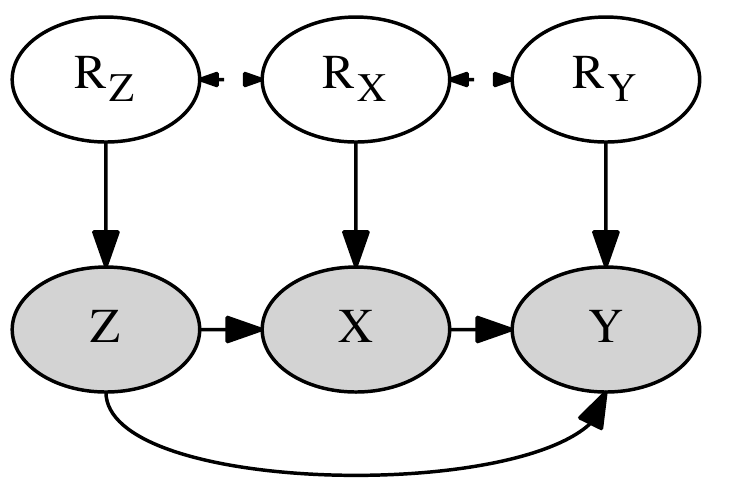}}%
	\caption{Causal graphical model with response variables denoting the effect of unknown, unobserved U.}
	\label{fig:iv-with-respvars}
\end{figure}

\subsubsection{Response variables framework for valid-IV models}
Let us first construct response variables for a valid-IV model, as in \cite{balke1994}. To do so, we will transform $U$ to a different \emph{response variable} for each observed variable in the causal model. For ease of exposition, we will assume that $Z$, $X$ and $Y$ are binary variables; however, the analysis follows through for any number of  discrete levels.

For valid-IV causal models, we can write the following structural equations for observed variables $X$, $Y$ and $Z$ (from Equation~\ref{eqn:structural-iv}).
\begin{align} \label{eqn:validiv-structural-eqns}
y &= f(x,u_y) \nonumber \\
x &= g(z,u_x) \nonumber \\
z &= h(u_z)
\end{align}
where $U_x$, $U_y$ and $U_z$ represent error terms. $U_x$ and $U_y$  are correlated. As-if-random condition ($Z \indep U$) stipulates that $U_z \indep \{U_y, U_x \}$. Exclusion condition is satisfied because function $f$ does not depend on $z$.

Since there are a finite number of functions between discrete variables, we can represent the effect of unknown confounders $U$ as a selection over those functions, indexed by a variable known as a response variable. For example, in Equation~\ref{eqn:validiv-structural-eqns}, $Y$ can be written as a combination of 4 deterministic functions of $x$, after introducing a response variable, $r_y$.
\begin{align} \label{eqn:validiv-structural-y}
y = 
\begin{cases} 
    f_{ry_0}(x) \equiv 0,& \text{if } r_y=0 \\
    f_{ry_1}(x) \equiv x,& \text{if } r_y=1\\
    f_{ry_2}(x) \equiv \tilde{x},& \text{if } r_y=2\\
    f_{ry_3}(x) \equiv 1,& \text{if } r_y=3
\end{cases} 
\end{align}

That is, different values of U change the value of Y from what it would have been without U's effect, which we capture through $r_y$.
Intuitively, these $r_y$ refer to different ways in which individuals may respond to the treatment $X$. Some may have no effect irrespective of treatment($r_y=0$), some may only have an effect when $X=1$ ($r_y=1$), some may only have an effect when X=0 ($r_y=2$), while others would always have an effect irrespective of $X$ ($r_y=3$). Such response behavior, as denoted by $r_y=\{0,1,2,3\}$, is analogous to \textit{never-recover}, \textit{helped}, \textit{hurt}, and \textit{always-recover} behavior in prior instrumental variable studies \citep{heckerman1995decision}.

Similarly, we can write a deterministic functional form for $X$, leading to the transformed causal diagram with response variables in Figure~\ref{fig:iv-with-respvars}a. 
\begin{align}  \label{eqn:validiv-structural-x}
x = 
\begin{cases}
	g_{rx_0}(z) \equiv 0,& \text{if } r_x=0 \\
	g_{rx_1}(z) \equiv z,& \text{if } r_x=1 \\
	g_{rx_2}(z) \equiv \tilde{z},& \text{if } r_x=2 \\
	g_{rx_3}(z) \equiv 1,& \text{if } r_x=3 
\end{cases}
\end{align}

Similar to $r_y$, $r_x=\{0,1,2,3\}$ can be interpreted in terms of a subject's compliance behavior to an instrument: \textit{never-taker}, \textit{complier}, \textit{defier}, and \textit{always-taker} \citep{angrist2008}.

Finally, $z$ can be assumed to be generated by its own response variable, $r_z$.
\begin{align}  
z = 
\begin{cases}
	0,& \text{if } r_z=0 \\
	1,&  \text{if } r_z=1
\end{cases}
\end{align}

Trivially, $Z=R_Z$.

Given this framework, a specific value of the joint probability distribution $P(r_z, r_x, r_y)$ defines a specific, valid causal model for an instrument $Z$.
Exclusion condition is satisfied because the structural equation for $Y$ does not depend on Z. For as-if-random condition, we additionally require that $U_z \indep \{U_x, U_y\}$. Since $R_X$ and $R_Y$ represent the effect of $U$  as shown in Figure~\ref{fig:iv-with-respvars}a, the as-if-random condition translates to $R_Z \indep \{ R_X, R_Y\}$, implying that $P(R_Z, R_X, R_Y)=P(R_Z)P(R_X, R_Y)$. Using this joint probability distribution over $r_z$, $r_x$, and $r_y$, any valid-IV causal model for $x$, $y$ and $z$ can be parameterized. For instance, when all three variables are binary, $R_Z$,  $R_X$ and $R_Y$ will be 2-level, 4-level and 4-level discrete variables respectively. Therefore, each unique causal model can be represented by 2+4x4=18 dimensional probability vector $\theta$ where each $\theta_i \in [0,1]$.
In general, for discrete-valued Z, X and Y with levels $l$, $m$ and $n$ respectively, $\theta$ will be a $(l+m^ln^m)$-dimensional vector.

\subsubsection{Response variable framework for invalid IVs}
\label{sec:resp-framework-invalid-iv}
While past work only considered Valid-IV models, we now show that the same framework can also be used to represent invalid-IV models. As defined in Section~\ref{sec:generative-model},  a causal model is invalid when either of the IV conditions is violated: Exclusion or As-if-random.

\subsubsection*{\textmd{Exclusion is violated}}
When exclusion is violated, it is no longer true that $Z \indep Y | X, U$. This implies that 
$Z$ may affect Y directly. To account for this, we redefine the structural equation for $Y$ to depend on both $Z$ and $X$: $y=h(X, Z)$. This corresponds to adding a direct arrow from $Z$ to $Y$ as shown in Figure~\ref{fig:iv-with-respvars}b. In response variables framework, this translates to:
\begin{align} \label{eqn:invalidiv-structural-y}
y = 
\begin{cases}
    h_{ry_0}(x, z) & \text{if } r_y=0 \\
    h_{ry_1}(x, z) & \text{if } r_y=1\\
    h_{ry_2}(x, z) & \text{if } r_y=2\\
    ... \\
    h_{ry_{15}}(x, z) & \text{if } r_y=15
\end{cases}
\end{align}

where $R_Y$ now has 16 discrete levels, each corresponding to a deterministic function from the tuple $(x, z)$ to $y$.


As with valid-IV causal models, any invalid-IV causal model can be denoted by a probability vector for $P(R_Z)$ and $P(R_X, R_Y)$. However, the dimensions of the probability vector will increase based on the extent of Exclusion violation. For full exclusion violation, dimensions will be $l+m^ln^{lm}$.

\subsubsection*{\textmd{As-if-random is violated}}
Violation of as-if-random does not change the structural equations, but it changes the dependence between $R_Z$ and $(R_X, R_Y)$. If as-if-random assumption does not hold, then $R_Z$ is no longer independent of $(R_X, R_Y)$. Therefore, we can no longer decompose $P(R_Z, R_X, R_Y)$ as the product of independent distributions $P(R_z)$ and $P(R_X, R_Y)$ and dimensions of $\theta$ will be $lm^ln^m$.

\subsubsection*{\textmd{Both exclusion and as-if-random are violated}}
In this case the structural equation for $Y$ is given by Equation~\ref{eqn:invalidiv-structural-y} and $R_Z$ is not independent of $(R_X, R_Y)$. Thus the dimensions of $\theta$ increase to $lm^ln^{lm}$.

\subsection{Computing marginal likelihood for Valid-IV and Invalid-IV models}
\label{sec:compute-validity-ratio}
The response variable framework provides a convenient way to specify an individual causal model:  choosing a causal model is equivalent to  sampling a probability vector $\theta$ from the joint probability distribution $P(r_x$, $r_y$, $r_z)$. 
The dimensions of this probability vector will vary based on the extent of violations of the instrumental variable conditions, from $l+m^ln^m$ for the valid-IV model to $lm^ln^{lm}$ for invalid-IV model in which both exclusion and as-if-random conditions are violated.
Below we describe how to compute the validity ratio for a given observed dataset.

\begin{figure}[h]
\begin{algorithm}[H]
	\KwData{Observed tuples (Z, X, Y), Prior-Ratio=$P(M1)/P(M2)$}
	\KwResult{Validity Ratio for comparing invalid and valid}
	Select appropriate subclass of invalid-IV models based on domain knowledge about the validity of IV conditions. \; 
	\begin{itemize}
		\item \textbf{Only Exclusion may be violated}: Assume $y=h(x, z, u)$. Sample $P(r_z)$, $P(r_x, r_y)$ separately. Use Equation~\ref{eqn:ml-exclusion-violated} to compute marginal likelihood $M_{EXCL}$.
		\item \textbf{Only As-if-random may be violated}: Assume $y=f(x, u)$. Sample $P(r_z, r_x, r_y)$ as a joint distribution. Use Equation~\ref{eqn:ml-air-violated} to compute marginal likelihood $M_{AIR}$.
		\item \textbf{Both conditions may be violated}: Assume $y=h(x, z, u)$. Sample $P(r_z, r_x, r_y)$ as a joint distribution. Use Equation~\ref{eqn:ml-both-violated} to compute marginal likelihood $M_{AIR,EXCL}$
	\end{itemize}
	
	Compute marginal likelihood of invalid-IV models as $ML_{INVALID} = max(M_{EXCL}, M_{AIR}, M_{AIR,EXCL})$ \;
	Compute marginal likelihood of valid-IV models using Equation~\ref{eqn:implement-numerator-valid-iv}, assuming $y=f(x, u)$ and sample $P(r_z)$, $P(r_x, r_y)$ separately \;
	Compute Validity Ratio as $ML_{VALID}/ML_{INVALID} * \textit{PRIOR-RATIO}$
	
	\caption{NPS Algorithm}
	\label{alg:nps}
\end{algorithm}
\caption{NPS Algorithm for validating an instrumental variable.}
\label{fig:alg-nps}
\end{figure}

To compute the Validity-Ratio, we return to Equation~\ref{cor:uniform-prior}.
\begin{align} \label{eqn:implement-ratio-valid-iv}
\textit{Validity-Ratio} &=\frac{P(M1)}{P(M2)}  \frac{K_2 \int_{M1:m \text{ is valid}} P(D|m) dm}{ K_1 \int_{M2:m \text{ is invalid}}  P(D|m) dm}
\end{align}

To compute the integrals in the numerator and the denominator of the above expression, we utilize the fact that there can be a finite number of unique observed data points (Z, X, Y) when all three variables are discrete. For example, for binary Z, X and Y, there can be 2x2x2=8 unique observations. In general, the number of unique data points is $lmn$.
Making the standard assumption of independent data points, we obtain the following likelihood for any causal model $m$,
\begin{align}
	P(D|m) &= \prod_{i=1}^{N} P(Z=z^{(i)}, X=x^{(i)}, Y=y^{(i)} | m )  \nonumber \\
	       &= (P(Z=0, X=0, Y=0 | m ))^{R_0}... P(Z=z_l, X=x_m, Y=y_n | m ))^{R_{lmn}} \nonumber \\
	       &=\prod_{j=1}^{Q} (P(Z=z_j, X=x_j, Y=y_j | m ))^{Q_j}
\end{align}
where N is the number of observed $(z, x, y)$ data points and $Q_j$ the number of times each unique value of $(z, x, y)$ repeats in the dataset. Since the model $m$ can be equivalently represented by its probability vector $\theta_{R_Z, R_X, R_Y}$, we can rewrite the above equation as:
\begin{align} \label{eqn:implement-general-ml}
	P(D|m) = P(D|\theta) &=\prod_{j=1}^{Q} (P(Z=z_j, X=x_j, Y=y_j | \theta ))^{Q_j} \nonumber \\
							 &= \prod_{j=1}^{Q} (\sum_{r_{zxy}=000}^{lmn} P(R_{XYZ} = r_{xyz}) P(Z=z_j, X=x_j, Y=y_j | \theta, r_{zxy} ))^{Q_j} 
\end{align}

For illustration, we derive the closed form expressions for the numerator and denominator of Equation~\ref{eqn:implement-ratio-valid-iv} when all variables are binary, in Appendix A. 
In general the method works for any number of discrete levels.


Finally, based on the above details,  Algorithm~\ref{alg:nps} summarizes the algorithm for computing validity ratio for any observed dataset.

\section{EXTENSIONS TO THE PEARL-BONET TEST}
\label{sec:extend-discrete}
In this section we describe extensions to the Pearl-Bonet test that are required for empirical application of the test for discrete variables. First, we present an efficient way to evaluate the necessary test for any number of discrete levels. Second, we show how to extend the monotonicity condition to more than two levels. Third, we discuss how to use the test in finite samples, by utilizing an exact test proposed by \cite{wang2016binaryivtest}.

\subsection{Implementing Pearl-Bonet test for discrete variables}
Specifying a closed form for the necessary test becomes complicated when we generalize from binary to discrete variables. \cite{bonet2001} showed that Pearl's instrumental inequalities for binary variables do not satisfy the existence requirement from Section~\ref{sec:related-work} and more inequalities are needed.  Further, it is not always feasible to construct analytically all the necessary inequalities for discrete variables. 

To  derive a practical test for IVs with discrete variables, we employ Bonet's framework  that specifies Valid-IV and Invalid-IV class of causal models as convex polytopes in multi-dimensional probability space.  
In Figure~\ref{fig:nectest-iv-schematic}, we showed a schematic of Bonet's framework, representing Valid-IV and Invalid-IV classes as polygons on a 2D surface. We now make these notions precise. The axes represent different dimensions of the probability vector $f=P(X, Y|Z)$. Assuming $l$ discrete levels for $Z$, $n$ for $X$ and $m$ for $Y$, $f$ will be a $lnm$ dimension vector, given by:
\begin{align}
f =& (P(X=x_1, Y=y_1| Z=z_1), \nonumber \\
& P(X=x_1, Y=y_2|Z=z_1),....  \nonumber\\
& P(X=x_1, Y=y_m| Z=z_1),  \nonumber\\
&P(X=x_1, Y=y_1|Z=z_2),...  \nonumber\\
&P(X=x_n, Y=y_m | Z=z_l)
)
\end{align}

$U$ may be either discrete or continuous, we do not impose any restrictions on unobserved variables. Any observed probability distribution over $Z$, $X$ and $Y$ can be expressed as a point in this $lmn$-dimension space. Since $\sum_{i,j} P(X=x_i, Y=y_j|Z=z_k) = 1 \forall k \in \{1...l\}$, the extreme points of for valid probability distributions $P(X, Y|Z)$ are given by $P(X=x_i, Y=y_j|Z=z_k)=1$. We showed a square region as the set of all valid probability distributions in Figure~\ref{fig:nectest-iv-schematic}, but more generally the region constitutes a $lmn$-dimensional convex polytope $\mathcal{F}$ \citep{bonet2001}.

Based on the models defined in Figure~\ref{fig:std-iv-model},  the set of all valid probability distributions $\mathcal{F}$ can be generated by Invalid-IV class of models. Within that set, we are interested in the probability distributions that can be generated by a valid-IV model. Knowing this subset provides a necessary test for instrumental variables; any observed data distribution that cannot be generated from a valid-IV model fails the test. Bonet showed that such a subset forms another convex polytope $\mathcal{B}$ (the Valid-IV region in Figure~\ref{fig:nectest-iv-schematic}) whose extreme points can be enumerated analytically. However, he did not provide an efficient way to check whether a given data distribution lies within this polytope. Below we provide an approach for checking this, thus yielding a practical implementation for a necessary test for discrete variables.

\begin{theorem} \label{thm:discrete-iv-test}
Given data on discrete variables $Z$, $X$ and $Y$, their observational probability vector $ f \in \mathcal{F}$, and extreme points of the polytope $B$ containing distributions generatable from a Valid-IV model, the following linear program  serves as a necessary and existence test for instrumental variables:
\begin{align} \label{eqn:discrete-iv-thm}
\sum_{k=1}^{K} \lambda_k.\vec{e_k} = \vec{f}; & \qquad \sum_{j=1}^{K} \lambda_j = 1 ; & \lambda_j \geq 0 \text{  }\forall j \in [1, K]
\end{align}
where $e_1, e_2...e_K$ represents $lmn$-dimensional extreme points of $B$  and $\lambda_1, \lambda_2..\lambda_K$ are non-negative real numbers. If the linear program has no solution, then the data distribution cannot be generated from a Valid-IV causal model.
\end{theorem}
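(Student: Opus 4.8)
The plan is to reduce the theorem to the standard fact that a bounded convex polytope is exactly the convex hull of its extreme points, and then to invoke Bonet's characterization of $B$ as the set of all distributions generatable by a valid-IV model. The feasibility constraints $\sum_k \lambda_k \vec{e_k} = \vec{f}$, $\sum_k \lambda_k = 1$, $\lambda_k \geq 0$ assert precisely that $\vec{f}$ is a convex combination of the extreme points $e_1, \ldots, e_K$; hence the linear program is feasible if and only if $\vec{f} \in \mathrm{conv}(\{e_1, \ldots, e_K\})$. The entire argument therefore hinges on showing that $\mathrm{conv}(\{e_k\}) = B$ and that $B$ coincides with the set of valid-IV-generatable distributions, the latter being exactly what Bonet established.

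First I would establish that $B$ is a bounded polytope, so that the convex-hull representation is legitimate. Since every $\vec{f} \in \mathcal{F}$ satisfies $0 \leq P(X=x_i, Y=y_j \mid Z=z_k) \leq 1$ together with $\sum_{i,j} P(X=x_i, Y=y_j \mid Z=z_k) = 1$ for each $k$, the set $\mathcal{F}$ sits inside a product of probability simplices and is bounded; because $B \subseteq \mathcal{F}$ (every valid-IV distribution is in particular a valid distribution), $B$ is bounded as well. Taking Bonet's result as given, $B$ is a convex polytope whose finite, analytically enumerable extreme-point set is $\{e_1, \ldots, e_K\}$. By the Minkowski--Weyl representation theorem for bounded polytopes, such a polytope equals the convex hull of its extreme points, so $B = \mathrm{conv}(\{e_1, \ldots, e_K\})$.

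Combining these facts yields both directions of the test. If the linear program has no solution, then $\vec{f} \notin \mathrm{conv}(\{e_k\}) = B$, and since $B$ is exactly the set of distributions generatable from a valid-IV model, $\vec{f}$ cannot be produced by any valid-IV causal model, which is the stated necessity claim. Conversely, if the program is feasible, then $\vec{f} \in B$, and Bonet's construction guarantees the existence of a valid-IV model realizing $\vec{f}$; this delivers the existence property, so the test admits a distribution only when some valid-IV model could have generated it.

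The genuinely hard work is carried by the prior results I am permitted to assume, namely Bonet's proof that the valid-IV distributions form a polytope and that its extreme points admit an explicit enumeration. Given those, the only point left to check carefully is the boundedness of $B$, which ensures the $\lambda_k$ can be constrained to sum to one rather than merely forming a conic combination; this is immediate from the simplex constraints above. I expect no further obstacle, since the program is a pure feasibility test and its solvability is, by the chain of equivalences, identical to the membership question $\vec{f} \in B$.
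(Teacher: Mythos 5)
Your proof is correct and follows essentially the same route as the paper's: feasibility of the linear program is exactly membership of $\vec{f}$ in the convex hull of the extreme points of $B$, which by Bonet's characterization equals the set of valid-IV-generatable distributions. The paper's own argument is just a terser version of this (it even loosely says ``linear combination'' where convex combination is meant); your additions --- checking boundedness of $B$ and invoking Minkowski--Weyl explicitly --- merely make the same argument rigorous.
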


\begin{proof}
The proof is based on properties of a convex polytope, which is also a convex set. A point lies inside a convex polytope if it can be expressed as a linear combination of the polytope's extreme points. Therefore, an observed data distribution could not have been generated from a Valid IV model if there is no real-valued solution to Equation~\ref{eqn:discrete-iv-thm}.
\end{proof}

While the test works for any discrete variables, in practice the test becomes computationally prohibitive for variables with large number of discrete levels, because the number of extreme points $K$ grows exponentially with $l$, $m$ and $n$. 
If the number of discrete levels is large, we can an entropy-based approximation instead, as in \cite{chaves2014inferring}.

\subsection{Extending Pearl-Bonet test to include Monotonicity}
Monotonicity is a common assumption made in instrumental variables studies, so it will be useful to extend the necessary test for discrete variables when monotonicity holds. No prior necessary test for monotonicity exists for discrete variables with more than two levels, so here we propose a test for monotonicity that can be used in conjunction with Theorem~\ref{thm:discrete-iv-test}.

As defined in Section~\ref{sec:rw-graphical-tests}, monotonicity implies that:
\begin{align} \label{eqn:discrete-monotonicity}
g(z_1, u) \geq g(z_2, u) \textit{ whenever } z_1 \geq z_2
\end{align}

That is, increasing $Z$ can cause $X$ to either increase or stay constant, but never decrease. Note that the above definition is without any loss of generality. In case $Z$ has a negative effect on $X$, we can do a simple transformation by inverting the discrete levels on $Z$ so that Equation~\ref{eqn:discrete-monotonicity} holds. 

By requiring this constraint on the structural equation between $X$ and $Z$, monotonicity restricts the observed data distribution. We use this observation to test for monotonicity.

\begin{theorem} \label{thm:discrete-monotonicity}
For any data distribution $P(X, Y, Z)$ generated from a  valid-IV model that also satisfies monotonicity, the following inequalities hold:
\begin{align} \label{eqn:discrete-monotonicity-thm}
P(Y=y, X \geq x| Z=z_0) \leq P(Y=y, X \geq x| Z=z_1) \quad ... \quad \leq P(Y=y, X \geq x| Z=z_{l-1}) \quad \forall x,y \nonumber\\
P(Y=y, X \leq x| Z=z_0) \geq P(Y=y, X \leq x| Z=z_1) \quad ... \quad \geq P(Y=y, X \leq x| Z=z_{l-1}) \quad \forall x,y
\end{align}
where $Z$, $X$ and $Y$ are ordered discrete variables of levels $l$, $n$ and $m$ respectively and $z_0 \leq z_1 ... \leq z_{l-1}$.
\end{theorem}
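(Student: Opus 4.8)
The plan is to exploit the two structural ingredients of a monotone valid-IV model—as-if-random and monotonicity of the first stage (Equation~\ref{eqn:discrete-monotonicity})—through a coupling argument that compares the two conditional probabilities unit-by-unit in the unobserved $U$. By transitivity of $\le$ it suffices to compare two adjacent instrument levels $z_k \le z_{k+1}$ and establish $P(Y=y, X\ge x\mid Z=z_k)\le P(Y=y, X\ge x\mid Z=z_{k+1})$ for every $x,y$; the second chain, with the event $X\le x$, is handled symmetrically by reversing every inequality.

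The first step is to use as-if-random to place all instrument levels on a common probability space over the unobserved $U=(u_x,u_y)$. Because $Z\indep U$ in a valid-IV model, conditioning on $Z=z$ leaves the law of $U$ unchanged, and exclusion lets us write $Y$ as $f(X,u_y)$ with no direct dependence on $z$. Hence
\begin{align}
P(Y=y,\,X\ge x\mid Z=z) = \sum_{u} P(u)\,\mathbf{1}\{g(z,u_x)\ge x\}\,\mathbf{1}\{f(g(z,u_x),u_y)=y\}.
\end{align}
This is the crucial reduction: the two probabilities I want to compare are integrals of (possibly different) indicators against the \emph{same} measure $P(u)$, which is exactly what makes a term-by-term comparison legitimate.

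Next I would invoke monotonicity pointwise. For a fixed unit $u$, $g(z_k,u_x)\le g(z_{k+1},u_x)$, so the treatment-threshold indicator $\mathbf{1}\{g(z,u_x)\ge x\}$ is non-decreasing in $z$, while $\mathbf{1}\{g(z,u_x)\le x\}$ is non-increasing. In the binary case this closes the argument immediately: the event $X\ge x$ collapses to $X=1$, so on it $Y=f(1,u_y)$ is independent of $z$, the outcome indicator is constant, and the entire summand is pointwise non-decreasing; summing over $u$ preserves the inequality. The $X\le x$ case collapses to $X=0$ and is symmetric, recovering Equation~\ref{eqn:binary-monotonicity}.

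The hard part—and what I expect to be the main obstacle—is the general discrete case with \emph{interior} thresholds. Once $x$ is not the extreme level, the event $\{X\ge x\}$ no longer pins the realized treatment to a single value: as $z$ increases, $g(z,u_x)$ may jump to a strictly larger value that still satisfies $X\ge x$, so that $\mathbf{1}\{f(g(z,u_x),u_y)=y\}$ can switch from $1$ to $0$ even though the $X$-threshold indicator stays $1$. The summand is then not pointwise monotone and the clean coupling bound fails. As stated, the argument only rigorously delivers the marginal monotonicity of $P(X\ge x\mid Z=z)$ (summing over $y$) together with the two \emph{extreme}-threshold joint inequalities, where $\{X=n-1\}$ grows and $\{X=0\}$ shrinks with $z$ while $Y$ is frozen. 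To cover the interior thresholds I would either attempt to assemble them from the extreme-level inequalities, or search for a compensation mechanism in which units whose outcome indicator flips down are exactly offset by units newly entering $\{X\ge x\}$ from below. Before committing to a full proof I would stress-test the interior inequalities on a small three-level example, since it is precisely here that the coupling argument does not automatically close and the statement may require an additional hypothesis or a restriction to the boundary levels of $X$.
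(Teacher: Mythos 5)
Your suspicion about the interior thresholds is not a gap you failed to close---it is a counterexample waiting to happen, and the three-level stress test you propose settles the matter: Theorem~\ref{thm:discrete-monotonicity} is false as stated once $X$ has more than two levels. Take binary $Z$, $X \in \{0,1,2\}$, binary $Y$, degenerate $U$, with structural equations $X = Z+1$ and $Y = \mathbf{1}\{X=1\}$. This model satisfies exclusion, as-if-random, relevance and monotonicity ($g(1)=2 \geq g(0)=1$), yet for $x=1$, $y=1$,
\begin{equation*}
P(Y=1,\, X\geq 1 \mid Z=0) \;=\; 1 \;>\; 0 \;=\; P(Y=1,\, X \geq 1 \mid Z=1),
\end{equation*}
violating the first chain of Equation~\ref{eqn:discrete-monotonicity-thm}. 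The mechanism is exactly the one you isolated: raising $z$ moves the realized treatment $g(z,u)$ from one value inside $\{X \geq x\}$ to a strictly larger one (here from $X=1$ to $X=2$), flipping the outcome indicator from $1$ to $0$ while the threshold indicator stays at $1$. Mixing this model with a small amount of noise preserves the strict inequality, so the failure is not an artifact of degeneracy. (The theorem's unrestricted ``$\forall x$'' is problematic even at the end points: at $x=0$ the first chain asserts that $P(Y=y\mid Z=z)$ is non-decreasing in $z$ for every $y$, which after summing over $y$ would force $Y$ to be independent of $Z$.)

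Comparing with the paper's own proof shows why this was missed there. That proof factorizes $P(Y=y, X\geq x\mid Z=z)$ as $P(X\geq x\mid Z=z)\,P(Y=y\mid X\geq x)$ and asserts that the second factor does not depend on $Z$; it then argues, exactly as you do, that $P(X\geq x\mid Z=z)$ is non-decreasing in $z$ by nesting of the response-variable sets. The asserted conditional independence is the flawed step: even with no confounding at all, the distribution of the realized treatment \emph{within} the event $\{X\geq x\}$ shifts with $z$ (in the counterexample $P(Y=1\mid X\geq 1, Z=0)=1$ while $P(Y=1\mid X\geq 1, Z=1)=0$), and with confounding there is the additional, familiar problem---acknowledged in Section~\ref{sec:related-work} of the paper itself---that conditioning on $X$ induces dependence between $Z$ and $U$. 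Your coupling argument correctly delimits what actually survives: the first-stage dominance $P(X\geq x\mid Z=z_k) \leq P(X\geq x\mid Z=z_{k+1})$ obtained by summing over $y$, and the two extreme-threshold joint inequalities where the conditioning event pins the treatment value ($\{X=n-1\}$ for the first chain, $\{X=0\}$ for the second) so that the outcome indicator is frozen in $z$; for binary $X$ these are precisely Equation~\ref{eqn:binary-monotonicity}. So your proof of the binary and extreme cases is correct, and your refusal to assert the interior cases is the right call: it is the theorem, not your argument, that needs repair, e.g.\ by restricting the quantifier over $x$ to the extreme levels of $X$. (Note also that requiring $f$ to be monotone in $x$ would not rescue the interior cases: $X=Z+1$ with $Y=\mathbf{1}\{X=2\}$ gives the same violation at $y=0$.)
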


Proof of the theorem is in Appendix B. 
Note that for binary variables, Theorem~\ref{thm:discrete-monotonicity} reduces to $P(Y=y, X = 1| Z=z_0) \leq P(Y=y, X = 1| Z=z_1)$ and $P(Y=y, X =0| Z=z_0) \geq P(Y=y, X =0| Z=z_1)$, same as Equation~\ref{eqn:binary-monotonicity}.

\subsection{Finite sample testing for Pearl-Bonet test }
Finally, Pearl-Bonet test assumes that we can infer conditional probabilities $P(Y,X|Z)$ accurately. However, in any finite observed sample, we will only be able to compute a sample probability estimate. Therefore, we need a statistical test that accounts for the finite sample properties of any observed dataset. There are many tests proposed to deal with finite samples, such as by \cite{kitagawa2015-ivtest,huber2015ivtesting,wang2016binaryivtest,ramsahai2011}. In this paper we use an exact test proposed by  \cite{wang2016binaryivtest}, both for its simplicity and  because it makes no assumptions about the data-generating causal models. This test converts the inequalities of the necessary test into a version of one-tailed Fisher's exact test. As with all null hypothesis tests, the goal is to refute the null hypothesis. Here the null hypothesis is that the conditional probability distribution satisfies the inequalities of the Pearl-Bonet test. 
We then quantify the likelihood of seeing the obseved data under this null hypothesis, thus providing us with a p-value for the test. Because we are testing 4 inequalities at once, our analysis can be prone to multiple comparisons. Therefore, Wang et al. recommend a significance level of $\alpha/2$ for each test, where $\alpha$ is the desired significance level. 

However, this test does not work under monotonicity assumption. We extend their method for monotonicity, by using the same transformation to convert monotonicity inequalities to the Fisher's exact test. Again, to prevent false positives due to multiple comparisons, it would be ideal to choose a smaller significance level for each inequality, but the results we present are without any correction.

\section{SIMULATIONS: HOW POWERFUL IS THE NPS TEST?}
\label{sec:simulations}

We now report simulation results for the NPS test, with the goal of determining the extent to which NPS test can distinguish between a valid and invalid instrumental variable. We evaluate the two parts of the NPS test separately. First, we present simulations for testing the power of the Pearl-Bonet necessary test for different violations of the IV assumptions, and then move on to evaluating the Validity Ratio for simulated datasets.

\subsection{Evaluating the necessary test}
Since the Pearl-Bonet test is a necessary test, it can admit datasets generated by Invalid-IV models. Therefore, here our goal is to estimate how often data generated by Invalid-IV models pass the Pearl-Bonet test. To do so, we consider different Invalid-IV models and check whether observed data distributions generated from these models pass the test.  Instead of testing on a particular Invalid-IV model, we consider simulations over types of causal models that one might encounter in empirical IV studies.  That is, we consider different \emph{model configurations} based on key properties of an IV causal model such as the strength of the instrument, monotonicity of relationship between the instrument and outcome, and similarly between the cause and outcome. Under each of these model configurations, we would like to know how likely it is that an instrument is valid, given that it passes the Pearl-Bonet necessary test.

We consider model configurations in each of the three types of violations: exclusion, as-if-random, and both violated. For exclusion, we start with two cases---when the instrument has a monotonic effect on the outcome, and when both instrument and cause have a monotonic relationship with outcome---and construct different configurations as we vary the strength of these relationships. These model configurations are motivated by encouragement design studies~\citep{west2008encouragement,sovey2011ivreview}, where it is plausible to assume a non-decreasing relationship between the instrument and outcome. We then relax the non-decreasing assumption to study more general scenarios. For as-if-random, we consider configurations based on the strength of relationship between confounders and the instrument, or equivalently, between response variables for the instrument on the one hand,  and the cause and outcome on the other. Finally, we consider configurations where both exclusion and as-if-random are violated.  

For all model configurations, we sample uniformly at random $N=200$ invalid-IV causal models each. We do so by identifying the specific violation in each configuration and then using the appropriate conditions on response variables from Section~\ref{sec:resp-framework-invalid-iv}. In effect, we sample response variables to sample different causal models. Then, for each causal model, we generate the true probabilities, $P(X,Y|Z)$ required to evaluate the Pearl-Bonet test, using the corresponding response variables. We compute true probabilities instead of sampling a dataset, to eliminate errors in estimating probabilities from data (due to sampling variation in simulating a dataset of $<Z, X, Y>$ from each model). 
Finally, we compute the fraction of invalid-IV causal models that pass the Pearl-Bonet necessary test for each model configuration.

In addition to checking IV validity, we also estimate the resultant bias in the IV estimate, when using data distribution from an invalid-IV causal model.   It could be possible that an instrument is invalid in the strict sense defined above, but still provides a causal estimate with low bias.  
To estimate the causal effect $X \rightarrow Y$, we use the Wald estimator \citep{wald1940}, which for binary variables, can be written as \citep{balke1993nonparametric}:
$$\hat{W} = \frac{P(Y=1|Z=1) - P(Y=1|Z=0)}{P(X=1|Z=1)-P(X=1|Z=0)}$$
Because the causal effect between binary variables ranges from $[-1, 1]$, we bound the estimate within this interval. We will compare the effectiveness of the Pearl-Bonet test at different values of \emph{instrument strength}, which is defined as  $P(X=1|Z=1)-P(X=1|Z=0)$. Note that the instrument strength also appears in the denominator of the Wald estimator.

We characterize the response variables for Valid-IV and Invalid-IV models below.

\subsubsection*{\textmd{Valid-IV: Both conditions are satisfied}}

When both Exclusion and As-if-random conditions are satisfied, $R_Y$ is a 4-level discrete variable as in Equation~\ref{eqn:validiv-structural-y}. Because $R_Z \indep \{R_X, R_Y\}$, we sample $\theta_{R_Z}$ independently and separately sample the joint distribution vector of $\theta_{R_X, R_Y}$.

\subsubsection*{\textmd{Invalid-IV: At least one of the conditions is violated}}
When as-if-random assumption is not violated (such as when $Z$ is randomized), we sample $\theta_{R_Z}$ independently as for a Valid-IV model. However,  since Exclusion may be violated, $\theta_{R_X, R_Y}$ will be a 4x16-level discrete variable as in Equation~\ref{eqn:invalidiv-structural-y}. 
Otherwise, if Exclusion is not violated, then $\theta_{R_Y}$ remains a 4-level discrete variable. However, $R_Z$ is no longer independent of $(R_X, R_Y)$ because as-if-random may be violated. Therefore, we will sample a joint probability distribution vector $\theta_{R_Z, R_X, R_Y}$ from all possible joint probability vectors. Since our goal is to only generate invalid IV models, we reject any generated probability vector where $R_Z$ turns out to be independent of $R_X, R_Y$.  
When both conditions are violated, $R_Y$ is a 16-level discrete variable and $R_Z$ is not independent of $(R_X, R_Y)$. 
Thus, we sample a (2x4x16)-dimensional probability vector $\theta_{R_Z, R_X, R_Y}$. 

For all simulations, we assume that $Z$, $X$ and $Y$ are binary variables. Further, since monotonicity is a required assumption for obtaining  the interpretation of a local average causal effect \citep{angrist1994identification}, we also  assume monotonicity throughout. Below we present results separately for the three kinds of violations of the IV conditions: exclusion only, as-if-random only, and both violated.

\begin{figure}[t]
\centering
	\subfloat[Pearl-Bonet test with varying fraction of nondecreasing effect of Z on Y]{\includegraphics[width=0.45\textwidth]{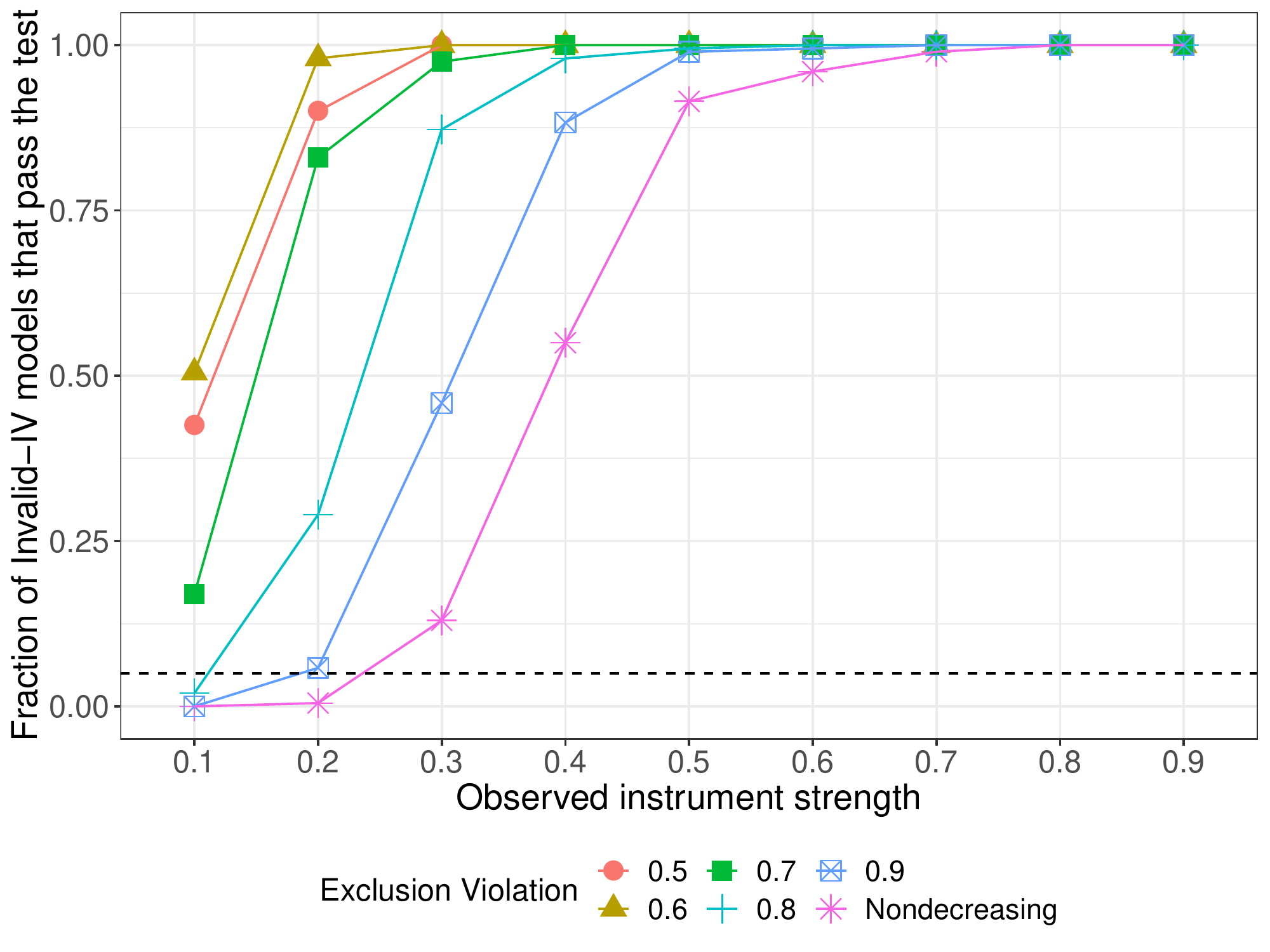}	\label{fig:fpr-nondecreasing-zonly}}%
	\qquad
	\subfloat[Corresponding bias of Wald estimator]{\includegraphics[width=0.45\textwidth]{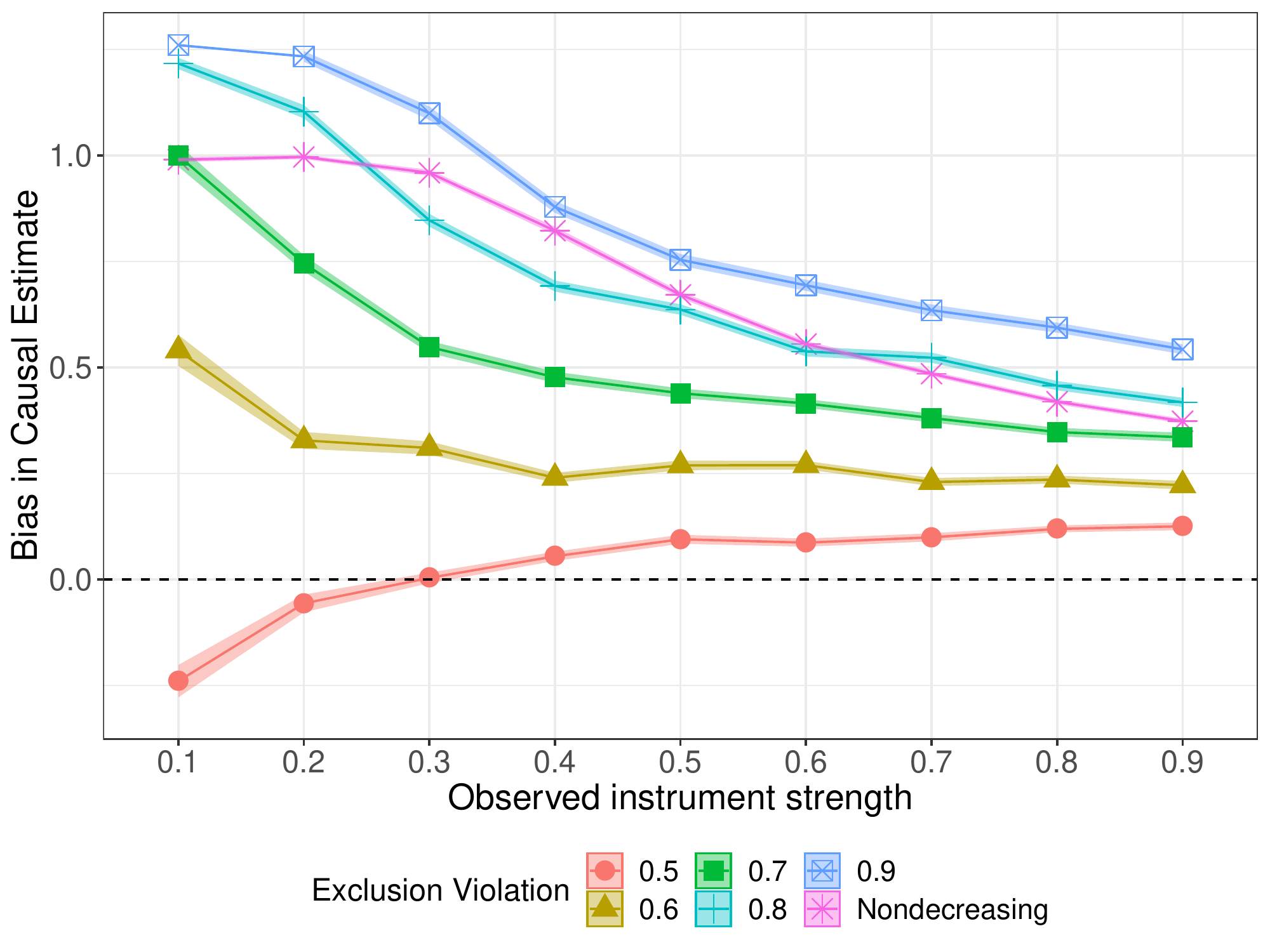} 	\label{fig:wald-nondecreasing-zonly}}

	\subfloat[Pearl-Bonet test with varying fraction of nondecreasing effect of both Z and X on Y]{\includegraphics[width=0.45\textwidth]{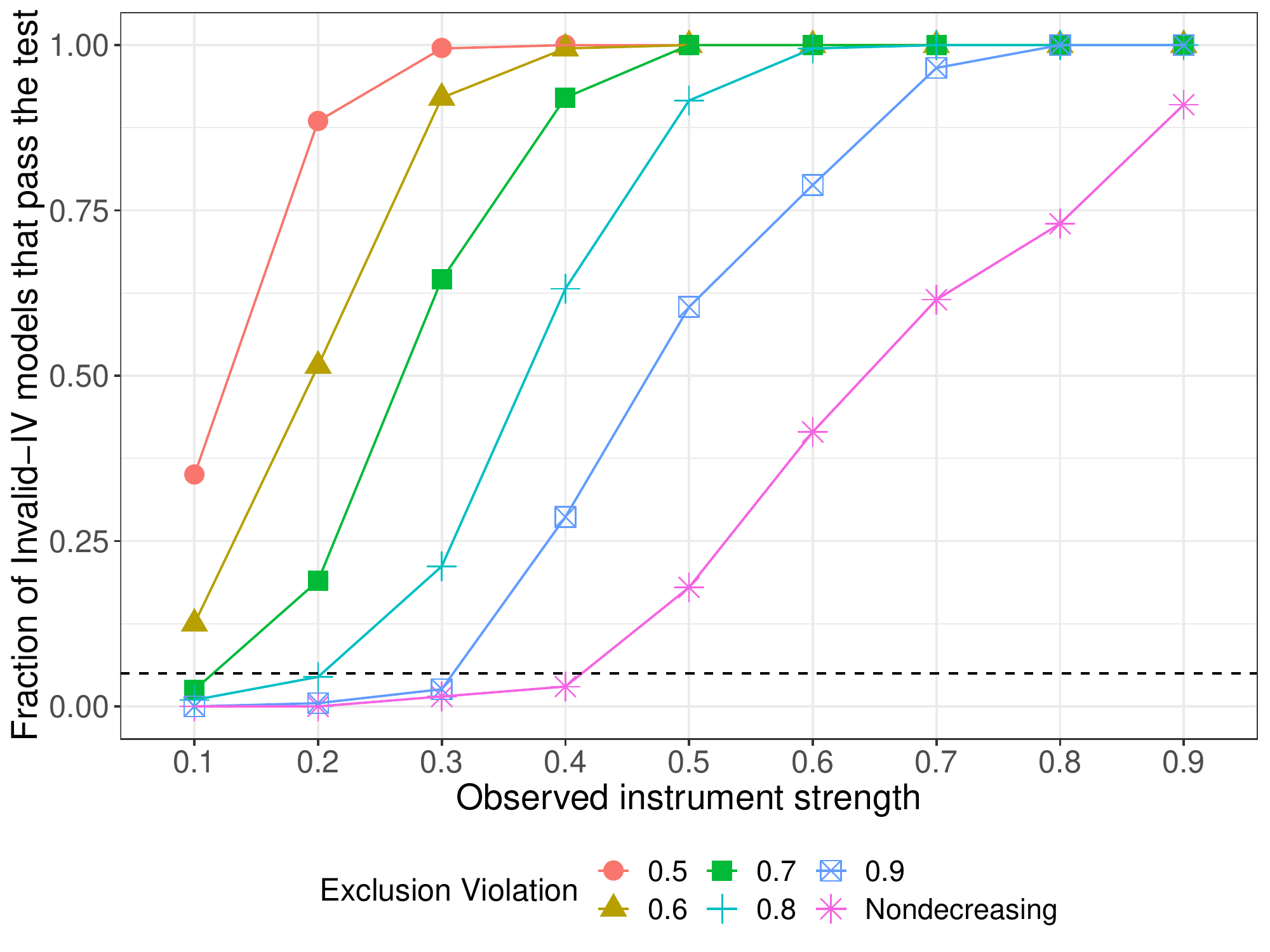}	\label{fig:fpr-nondecreasing}}%
	\qquad
	\subfloat[Corresponding bias of Wald estimator]{\includegraphics[width=0.45\textwidth]{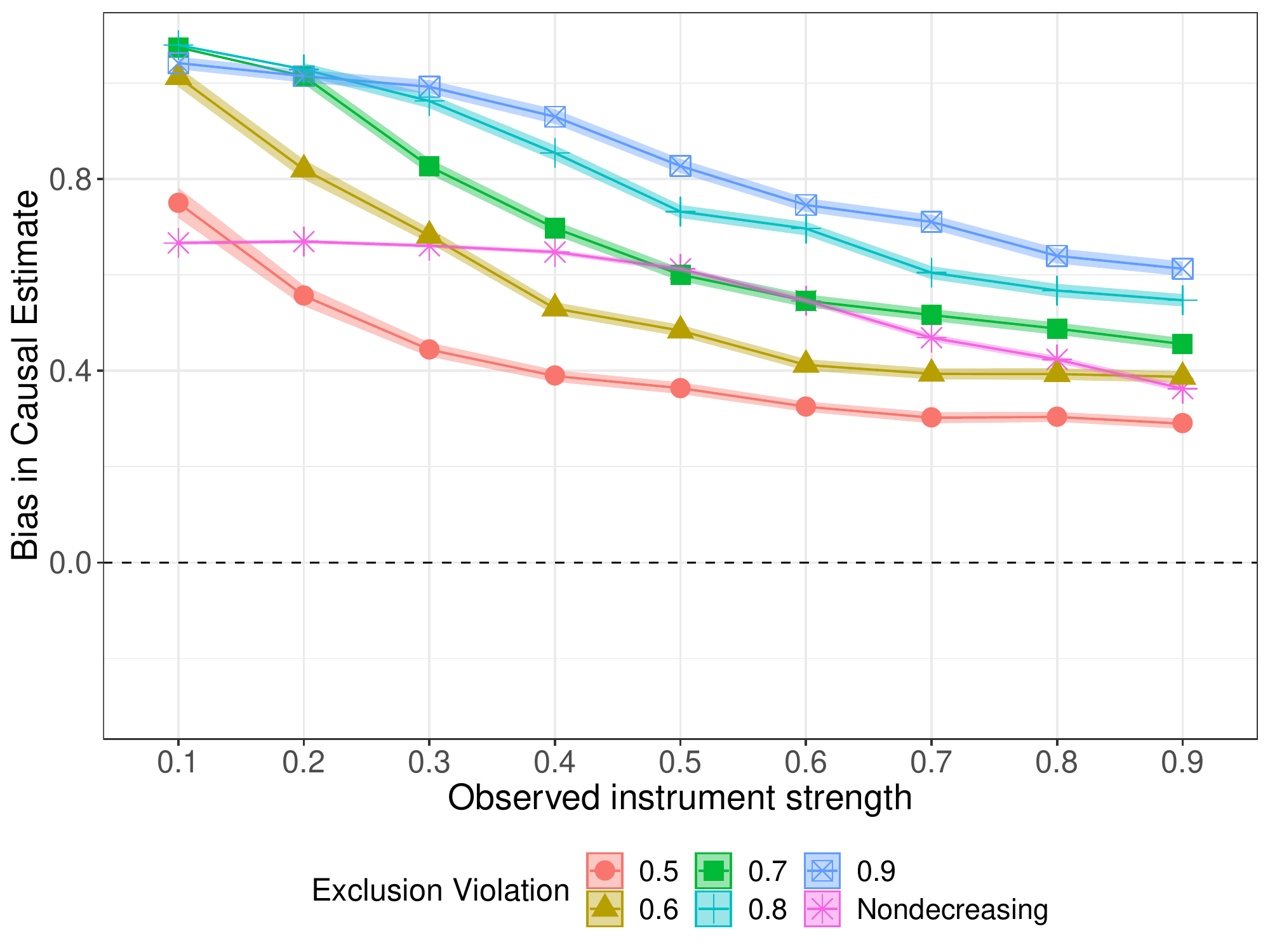} 	\label{fig:wald-nondecreasing}}
	\caption{Testing for the exclusion condition. Top panel corresponds to the scenario where $Z$ has a non-decreasing effect on $Y$, and the bottom panel corresponds to the scenario where both $Z$ and $X$ have a non-decreasing effect on $Y$. When the extent of nondecreasing-effect is 50\%, the test cannot filter out Invalid-IV models; however, the bias is also close to zero. The power of the test increases with increasing extent of the non-decreasing effect constraint. 
    In both scenarios, as the instrument strength is increased, the test becomes less effective. }

\end{figure}

\subsubsection{Exclusion may be violated}
We first test for violation of exclusion only. That is, we assume that as-if-random is satisfied (e.g. through random assignment). As described in Section~\ref{sec:resp-framework-invalid-iv}, violation of the exclusion restriction implies that $y=f(z, x, u)$ and thus there can be 16 different $r_y$ levels. Following the NPS Algorithm, we sample $r_x$ and $r_y$ jointly and sample $r_z$ independently. 

The remaining detail is how to sample invalid-IV models that violate exclusion condition. This is non-trival because the degree of violation of the exclusion restriction can vary based on known properties of the underlying causal model.  We take one of the 
simplest properties of the unobserved true causal model---the direction of effect from Z or X to Y---and characterize the power of the Pearl-Bonet test as we vary this property. First we will consider a scenario where either of $Z$ or $X$  has a non-decreasing effect on $Y$ and then gradually weaken this requirement to obtain a more general scenario.  This is motivated by encouragement design IV studies where we may know apriori that $Z$ or $X$ have a non-decreasing effect on $Y$ because there is no plausible mechanism that leads to a decrease in $Y$ with increase in $Z$ or $X$. 

However, in  other studies, completely ruling out model configurations where the non-decreasing property does not hold is too strong a condition.  We therefore relax this restriction and instead stipulate the percentage of data points where this restriction is satisfied. 
We do so by fixing the probability of the relevant response variables which correspond to a nondecreasing effect.
Driving this percentage down to 50\% essentially provides the general case, where the direction of the effect from Z to Y is equally likely to be positive or negative.

\subsubsection*{Extent of non-decreasing effect of instrument on outcome}
First, we simulate the model configurations where exclusion is violated by varying the extent to which $Z$ has a non-decreasing effect on $Y$. Let us first consider the scenario where the non-decreasing effect is strict: the instrument cannot cause the outcome to decrease. Figure~\ref{fig:fpr-nondecreasing-zonly} shows the fraction of such Invalid-IV models that pass the test at varying  observed instrument strength (marked as the `Nondecreasing' line). At low instrument strength, less than 5\% of invalid-IV models pass the necessary test. 
This simulation result indicates that if we observe empirically that a weak instrument passes the necessary test, it is likely to be a valid instrument assuming that $Z$ cannot decrease $Y$. However, as instrument strength increases, utility of the Pearl-Bonet test decreases. 

To relax the non-decreasing assumption, we simulate $\alpha$-non-decreasing cases such that $\alpha$ fraction of the units have a non-decreasing relationship; the rest do not. When $\alpha=0.5$, there is an equal chance of Z decreasing or increasing the value of Y. Each of the different lines in Figure~\ref{fig:fpr-nondecreasing-zonly} corresponds to a class of Invalid-IV models with a specified probability of non-decreasing effect between Z and Y. 
We observe that the utility of the Pearl-Bonet test decreases as $\alpha$ decreases. Even at  $\alpha=0.8$, however, the test is only able to identify invalid-IV models with less than $5\%$ error for instruments with strength less than or equal to $0.1$. When non-decreasing relationship does not exist (e.g., $\alpha=0.5$), the test performs poorly and cannot identify an invalid-IV model reliably. Thus, as we relax the strictness of the non-decreasing effect assumption, more invalid-IV models are passed by the necessary test and thus the test remains inconclusive. 

Contrasting these results with estimated bias in the Wald estimate of the causal effect $X\rightarrow Y$ provides more context. Even when the test is unable to detect violation of exclusion, it is also likely that the bias is relatively low (Figure~\ref{fig:wald-nondecreasing-zonly}). The magnitude of the bias is larger for weak instruments and for higher values of $\alpha$, both scenarios where the Pearl-Bonet test is the most discriminative. When the observed strength of the instrument is high, even clearly invalid-IV models lead to comparatively lower bias (less than $0.6$).

\subsubsection*{Extent of non-decreasing effect of both instrument and cause on outcome}
In some IV studies, we may know that both instrument $Z$ and cause $X$ have a non-decreasing effect on the outcome.  In such cases, we can strengthen the above assumption by assuming that both $Z$ and $X$ have a non-decreasing effect on $Y$.  
Figure~\ref{fig:fpr-nondecreasing} shows the fraction of invalid-IV models that pass the Pearl-Bonet test under these conditions. When the non-decreasing condition is satisfied strictly---that is, neither Z nor X can have an increasing effect on Y---less than 5\% of invalid-IV models pass the test up to a maximum instrument strength of $0.4$. 
The discriminatory power of the Pearl-Bonet test for other scenarios also increases compared to Figure~\ref{fig:fpr-nondecreasing-zonly}. For thresholds $\alpha$ of non-decreasing effect at least $0.7$, fraction of incorrectly identified Invalid-IV models lies below $5\%$ at instrument strength of 0.1. Similar to the previous results for bias, Figure~\ref{fig:wald-nondecreasing} shows that bias is highest for weak instruments or when the probability of having a non-decreasing effect is the highest. Happily, in both of these situations, the Pearl-Bonet test is the most effective at filtering out Invalid-IV models.

The lack of Pearl-Bonet test's effectiveness with a strong instrument $Z$ is not surprising: in the limit, $Z$ could be identical to X (an experiment with full compliance) and then Pearl-Bonet test inequalities (Equation~\ref{eqn:binary-monotonicity}) are satisfied trivially because the RHS will be 0. Clearly, these inequalities will be most discriminative when the instrument is weak. As we will see, this pattern will be consistent in the all results we obtain. Similarly, we saw that bias in the causal estimate is highest for weak instruments; this trend also repeats across our simulations, consistent with past results that show even small violations in IV conditions can lead to big finite sample biases in the IV estimates, especially when the instrument is weak \citep{bound1995-ivproblems}.

\begin{figure}[t]
\centering
	\subfloat[Pearl-Bonet test at varying severity of as-if-random violation]{\includegraphics[width=0.45\textwidth]{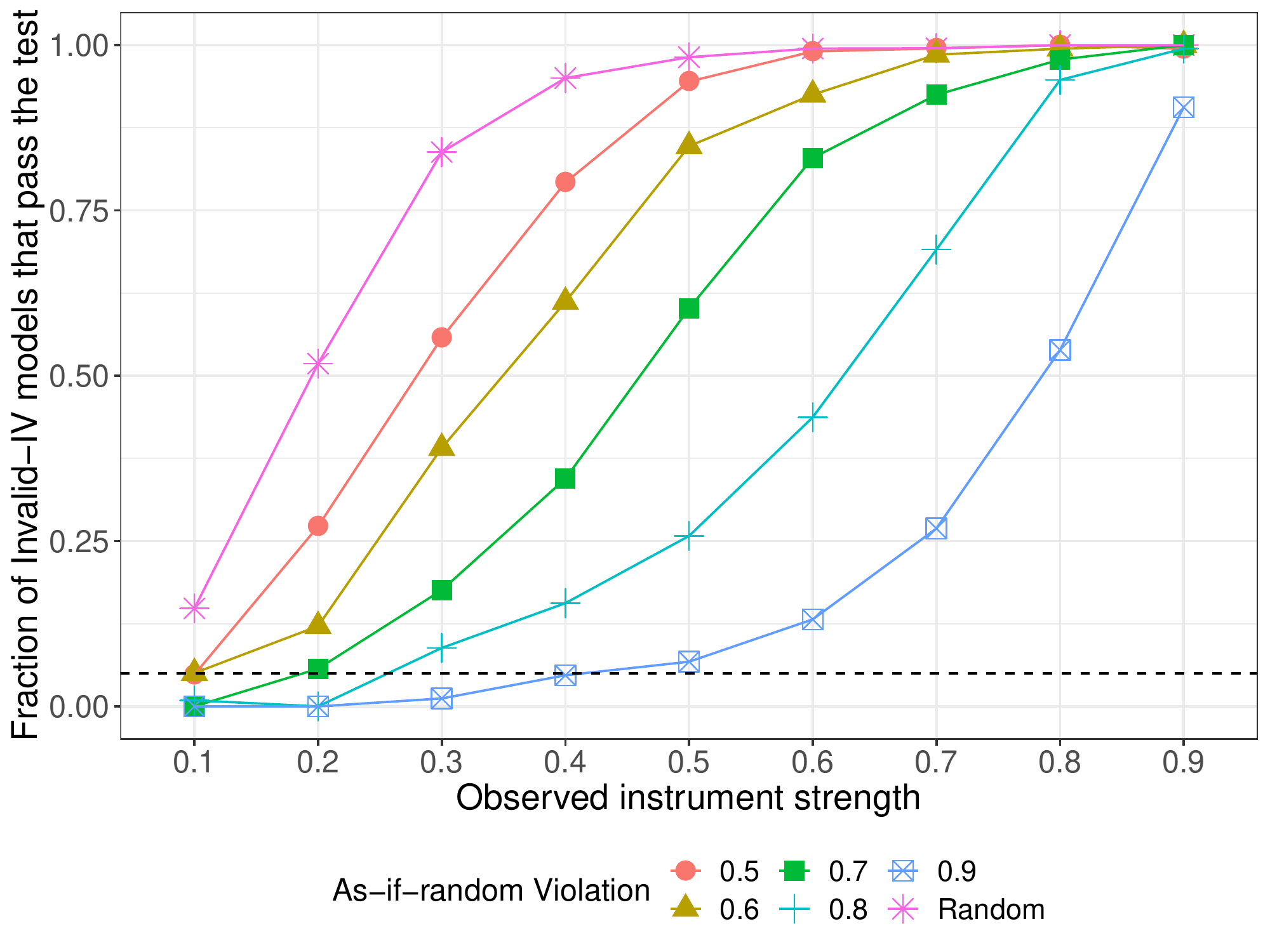}	\label{fig:fpr-asifrandom}}%
	\qquad
	\subfloat[Corresponding bias of Wald estimator]{\includegraphics[width=0.45\textwidth]{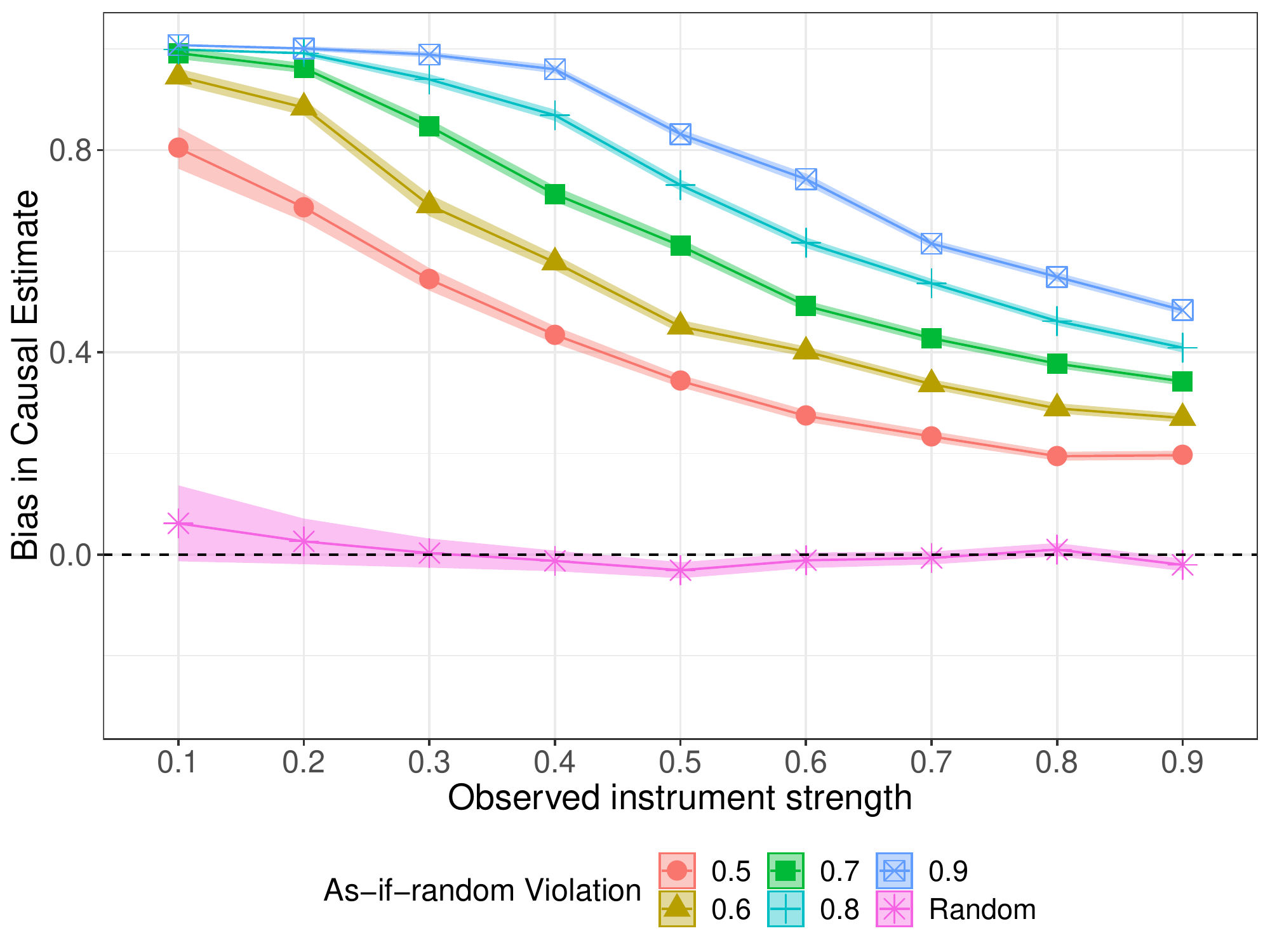} 	\label{fig:wald-asifrandom}}
	\caption{Testing for the as-if-random condition. Varying the mutual information between $R_Z\text{-}R_Y$. As the severity of as-if-random violation---mutual information---is increased, power of the Pearl-Bonet test in detecting Invalid-IV models increases and so does bias in the resultant IV estimate. 
	}

\end{figure}

\subsubsection{As-if-random may be violated}\label{sec:uniform-asifrandom-res}
Violation of the as-if-random condition implies that $r_z$ is not independent of $r_x$ and $r_y$. Here we assume that Exclusion is not violated. Following Algorithm~\ref{alg:nps}, we generate a joint distribution over $r_z$, $r_x$ and $r_y$ variables, sampling them uniformly at random. 
As with the exclusion restriction, there can be a number of ways to define the strength of an as-if-random violation, depending on how we specify the dependence between $R_Z$,  $R_X$ and $R_Y$. 
For the results presented, we define the strength of the violation as the mutual information between $r_x$ and $(r_x, r_y)$. When as-if-random is satisfied, mutual information will be zero.  
As we increase the mutual information, violation of as-if-random is expected to become more and more severe. 
Since correlation between $R_Z$ and $R_Y$ is necessary and sufficient for a violation of the as-if-random condition (but not correlation between $R_Z$ and $R_X$), we modulate the severity of violation by changing the correlation between $R_Z$ and $R_Y$. To do so, we vary a single conditional probability, $P(r_y=3|r_z=1)$ for simplicity; similar results can be obtained by varying other probabilities. We choose $P(r_y=3|r_z=1)$ because of the intuitive property that when it is high, Z and Y will also be highly correlated. 

Figure~\ref{fig:fpr-asifrandom} shows the results of the Pearl-Bonet test when as-if-random condition is violated. When we sample $P(R_Z, R_Y, R_Y)$ uniformly at random, the test is unable to distinguish effectively between invalid-IV and valid-IV models. Even at low  values of instrument strength ($<=0.2$), nearly half of invalid-IV models pass the Pearl-Bonet test. 
However, we also see the Wald estimator is reasonably accurate at all levels of instrument strength, even though the as-if-random condition is not satisfied. This indicates  that complete uniform sampling of causal models does not introduce a strong enough as-if-random violation to be either detected by the test or result in a noticeable biased estimate.

When the mutual information is increased beteen $R_Z$ and $R_Y$, we find that the discriminatory power of the Pearl-Bonet test increases. When the mutual information threshold is $>=0.7$, instruments with strength up to 0.2 have an errror rate of roughly $5\%$. 
Thus, the test is more powerful for stronger violations of the as-if-random assumption. 
That said, the test is unable to capture all violations that lead to noticeable bias. For instance, at a threshold of $0.5$, bias in the causal estimate can be as high as $0.5$, but the necessary test is unable to detect invalid-IV models more than $50\%$ of the time.

\begin{figure}[t]
\centering
\subfloat[Pearl-Bonet test with uniformly sampled models that violate both conditions]{\includegraphics[width=0.45\textwidth]{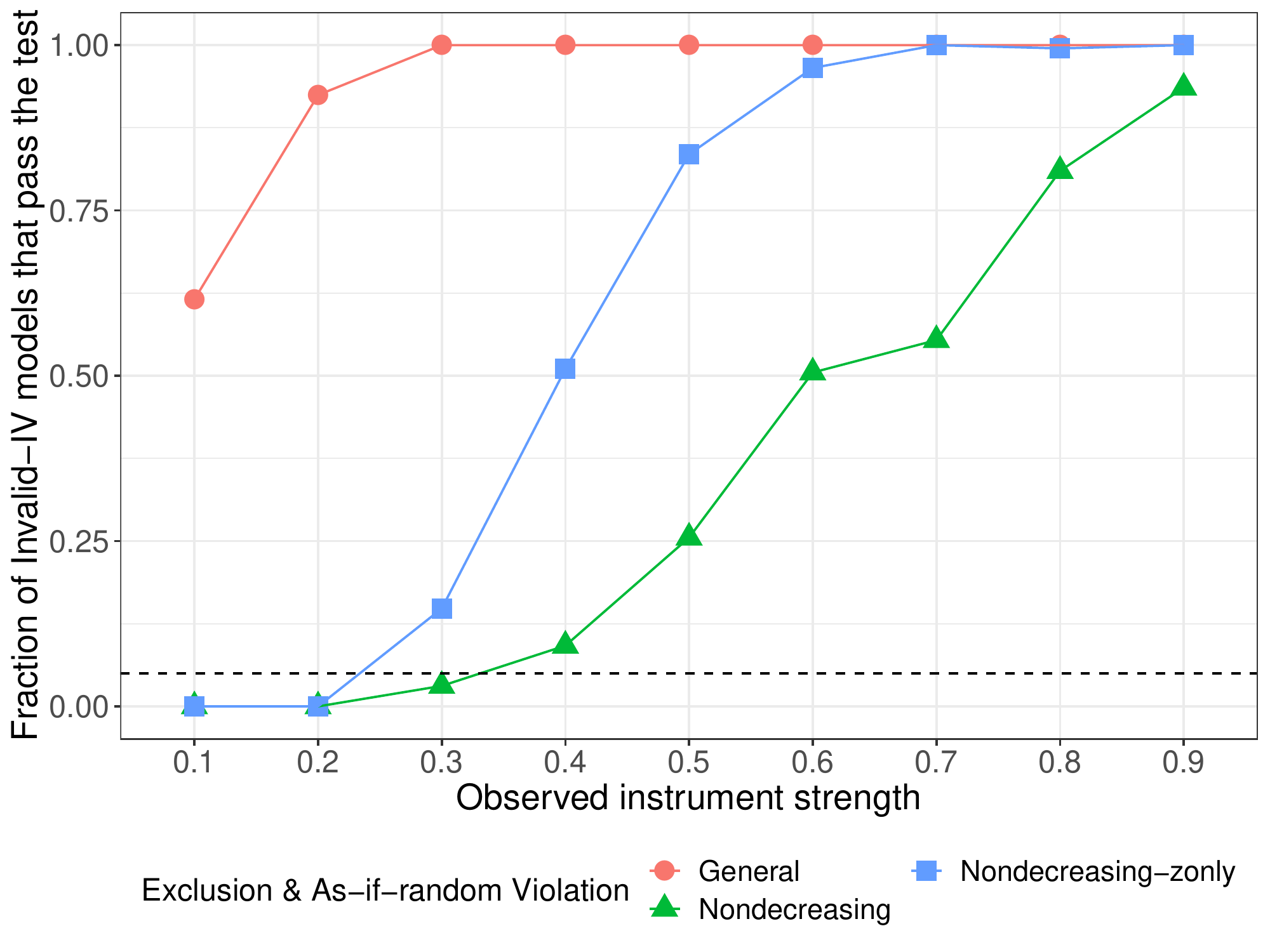}	\label{fig:fpr-both-exclusion}}%
	\qquad
	\subfloat[Corresponding bias of Wald estimator]{\includegraphics[width=0.45\textwidth]{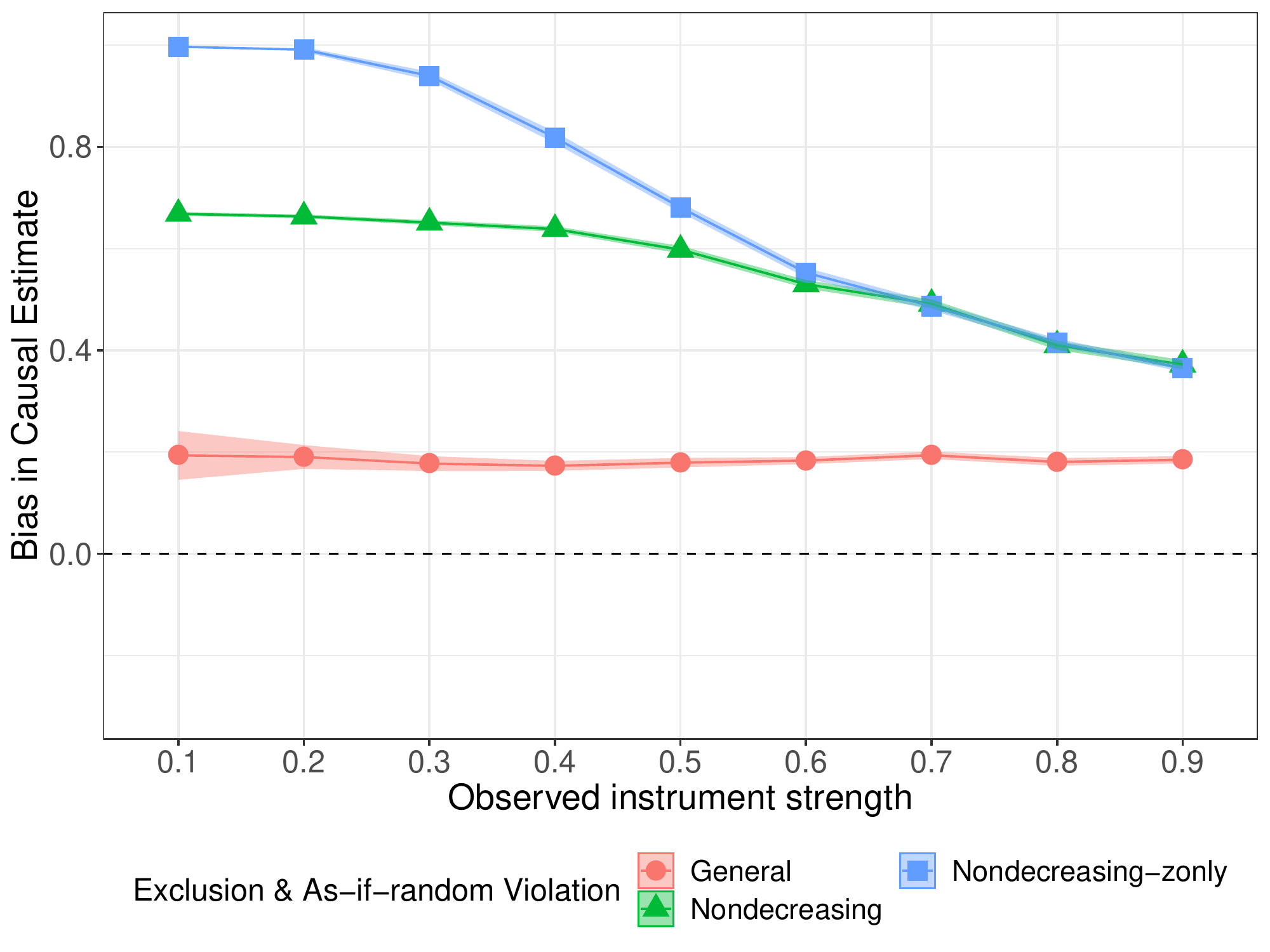} 	\label{fig:wald-both-exclusion}}

	\subfloat[Pearl-Bonet test with varying severity of as-if-random violation]{\includegraphics[width=0.45\textwidth]{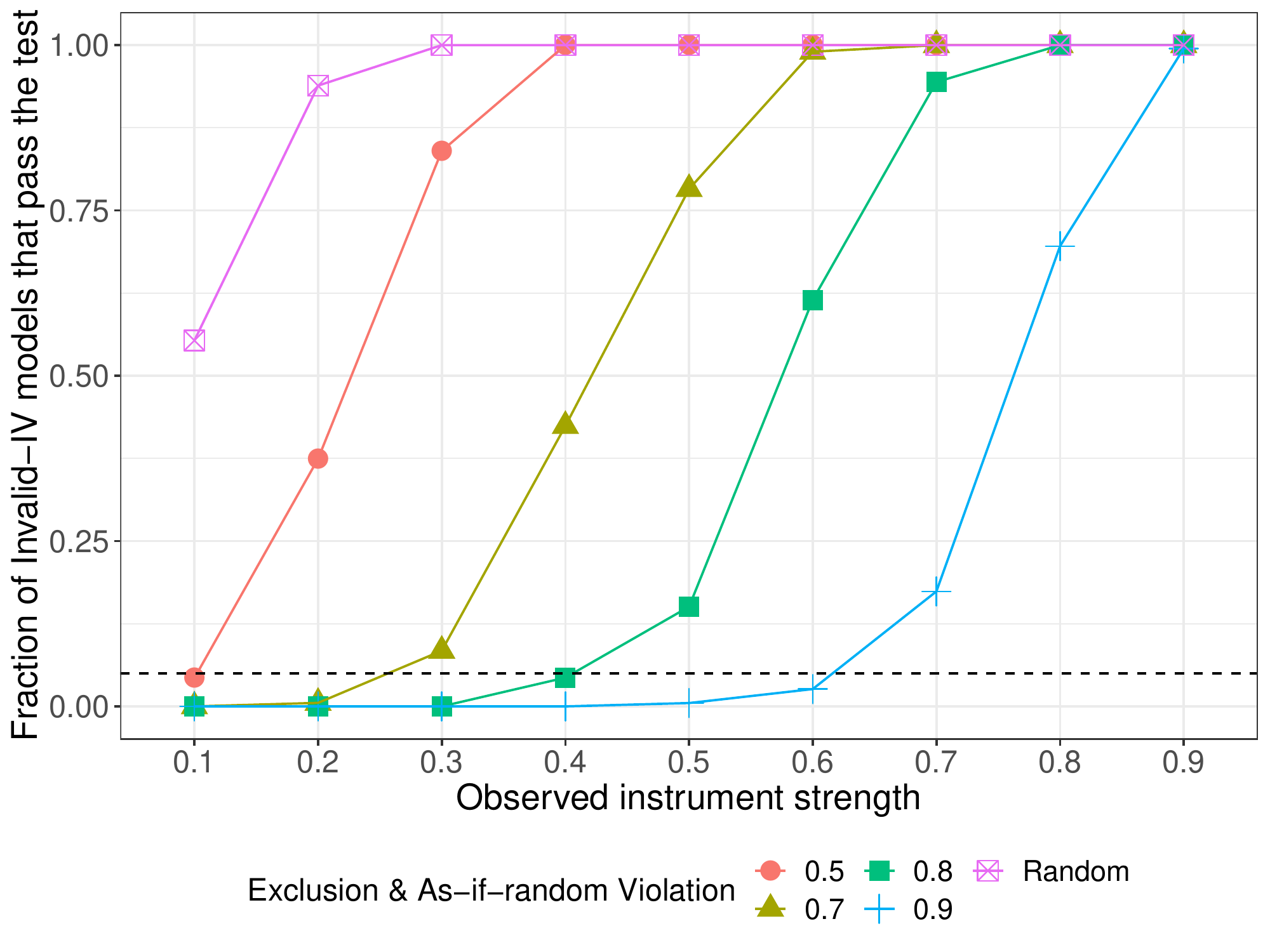}	\label{fig:fpr-both-asifrandom}}%
	\qquad
	\subfloat[Corresponding bias of Wald estimator]{\includegraphics[width=0.45\textwidth]{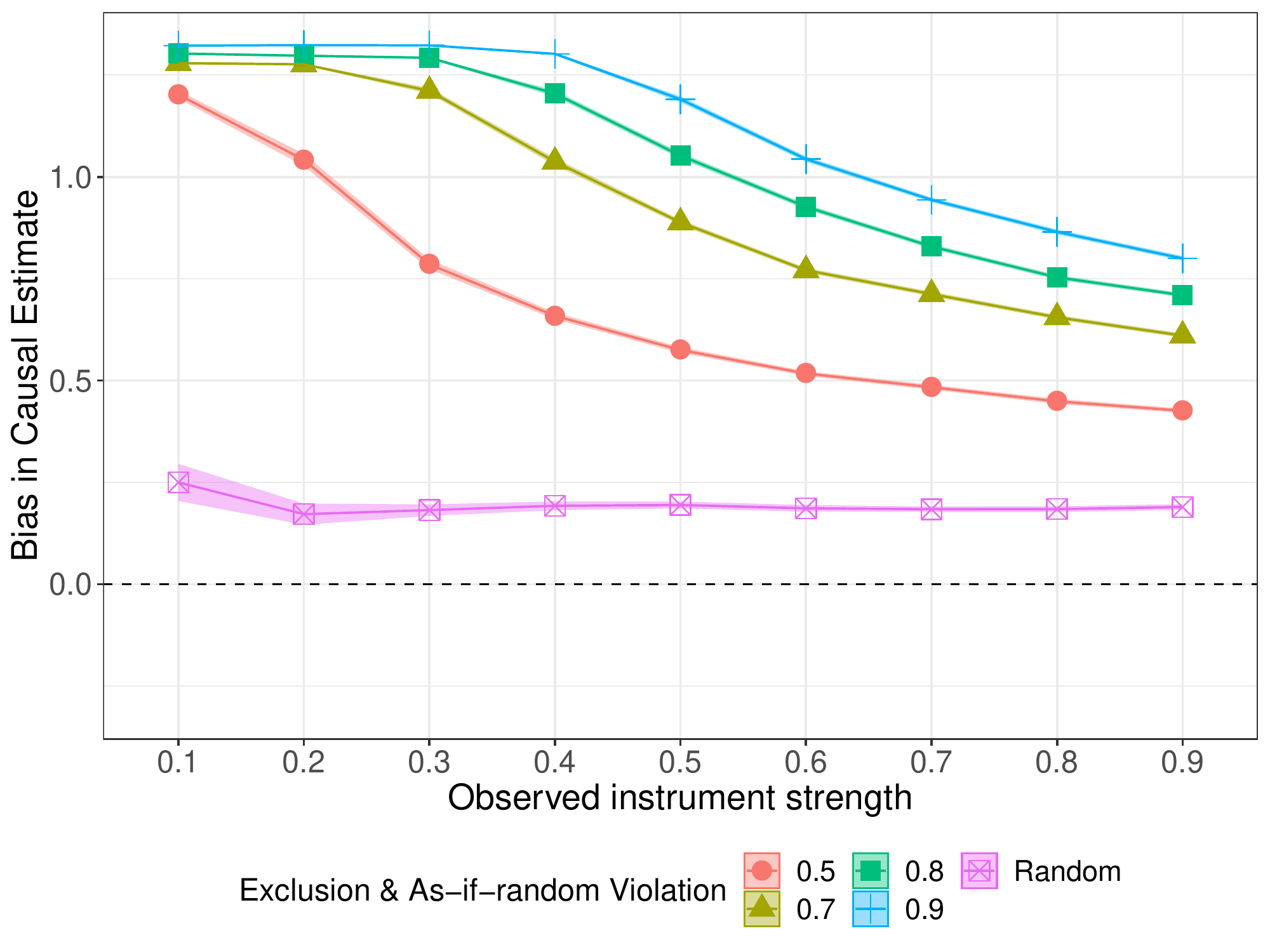} 	\label{fig:wald-both-asifrandom}}
    \caption{Both violated: Exclusion and as-if-random. Top panel corresponds to Invalid-IV models that violate both conditions. Different lines represent whether $Z \rightarrow Y$ is nondecreasing, $Z\rightarrow Y$ and $X\rightarrow Y$ are nondecreasing, or that none of them are nondecreasing. The test is more effective with  nondecreasing-effect constraints.   The bottom panel shows the power of the Pearl-Bonet test as the severity of as-if-random violation is increased. Stronger violations are more likely to be filtered out by the Pearl-Bonet test.}
\end{figure}

\subsubsection{Both exclusion and as-if-random may be violated}
Finally, we consider the case when both conditions may be violated. First, let us consider the straightforward case when both exclusion and as-if-random violating models are uniformly sampled. The red line in Figure~\ref{fig:fpr-both-exclusion} shows that the Pearl-Bonet  test performs poorly for such invalid-IV models; even for weak instrument strength, the test can correctly identify invalid-IV models less than $40\%$ of the time. Fortunately, the bias is also negligible (Figure~\ref{fig:wald-both-exclusion}), indicating that uniform violation does not lead to a noticeable bias in causal IV estimates.

When we stipulate that Z can only have a non-decreasing effect on Y, as in the above subsection, discriminatory power of the test improves. The error rate for identifying invalid-IV models is less than $5\%$ for instruments with strength up to 0.2. However, the bias also shoots up. When the observed instrument strength is high (say 0.5), bias in the causal estimate is over 0.6, but the necessary test can correctly identify an invalid-IV model less than $20\%$ of the time. Assuming that both $Z$ and $X$ have a non-decreasing effect on Y provides slightly better results. Instruments of strength up to $0.4$ have error rates nearly 5\% and the bias also decreases.

When we modulate the severity of the as-if-random condition (while keeping exclusion violation uniformly at random), the utility of the Pearl-Bonet test improves substantially (Figure~\ref{fig:fpr-both-asifrandom}). For thresholds at least $0.7$, the test misses less than $5\%$ of the invalid-IV models at an instrument strength of $0.2$. Bias is also high for these thresholds, but we have a higher chance of correctly filtering out invalid-IV models.

\subsubsection{Summary of results}
Two key patterns emerge. First, the test is more powerful in recognizing violations that also lead to a substantial bias. This is encouraging because the kind of violations that bias the causal estimate are exactly the invalid-IV datasets we want to eliminate. On average, these results suggest that when the Pearl-Bonet test is inconclusive, it is unlikely that applying the Wald Estimator will lead to substantial bias in the causal estimate. Conversely, in cases when the Pearl-Bonet test has high discriminatory power, eliminating invalid-IV data models will avoid computing Wald Estimates with high bias. 

Second, the above results show that detecting the violation of IV conditions is sensitive to the strength of the instrument. This may seem as a big limitation; however, in most observational studies, instruments with high strength are rare. For instance, in economics, \citep{staiger1994instrumental} recommend an F-value of $>10$ to prevent weak instrument bias. At such a low value for $F$, the instrument is likely to have to low correlation with X. Similarly, in epigenetics, $R^2$ of 0.1 between Z and X is typical \citep{pierce2010power}. 
In these low strength regimes, the Pearl-Bonet test can be effective in  testing for validity of an instrumental variable. 

\subsection{An example open problem for binary instrumental variables}
The above results are indicative and  they do not provide evidence on how the NPS test will perform on a particular single dataset.
Moreover, computing the Validity-Ratio for datasets that pass the Pearl-Bonet test can further improve effectiveness in detecting Valid-IV models.  Therefore, we now simulate datasets and check whether estimating the Validity Ratio can correctly identify whether they contain a valid instrumental variable or not.   To start with, we consider the following causal model from \cite{palmer2011ivbounds} where the Pearl-Bonet necessary test fails to detect violation of IV assumptions. 

\begin{align}
    \centering
    Z \sim & Bern(0.5) \nonumber \\
    U \sim & Bern(0.5) \nonumber\\
    X \sim & Bern(p_X); p_X = 0.05 + 0.1Z+0.1U \nonumber\\
    Y_0 \sim & Bern(p_0); p_0=0.1 + 0.05X + 0.1U \nonumber\\
    Y_1 \sim & Bern(p_1); p_1=0.1 + 0.2Z + 0.05X + 0.1U \nonumber\\
    Y_2 \sim & Bern(p_2); p_2 = 0.1 + 0.05Z + 0.05X + 0.1U 
\end{align}
where $Z$, $X$, $Y_i$ are the instrument, cause and outcome respectively and all variables are binary.
There can be three possible datasets depending on which Y is chosen as the outcome: $D_0(Z, X, Y_0)$, $D_1(Z, X, Y_1)$,$D_2(Z, X, Y_2)$. $Z$ is a valid instrument only when the outcome is $Y_0$, not for $Y_1$ and $Y_2$ because they violate the exclusion restriction. Although Pearl-Bonet test is able to rule out $D_1$ as an invalid-IV dataset, Palmer et al. find that it is inconclusive for $D_0$ and $D_2$. 

We validate the same three datasets using the NPS test  by simulating 2000 data points from each of their causal models. Table~\ref{tab:palmer-results} shows that comparing Validity-Ratio can be used to identify the datasets for which $Z$ is a valid instrument. We assume a uniform prior over models  within  valid-IV and invalid-IV model classes and use the equation from Corollary~\ref{cor:uniform-prior}. Further, in the absence of any additional information, we can assume an equal probability of the instrument being valid or invalid ($P(M_1)=P(M_2)$). The second and third columns show the log marginal likelihood for invalid-IV models when either of exclusion or as-if-random is violated. This leads to the Validity Ratio shown in the fifth column, as a ratio of marginal likelihood of the Valid-IV model class over marginal likelihood of the Invalid-IV model class. Validity Ratio is the highest (nearly 20) for $D_0$ and the lowest ($<10^{-13}$) for $D_2$, thereby clearly distinguishing between the two datasets. Dataset $D_1$ has a Validity Ratio less than 1, indicating that it is less likely to be a valid instrument, especially in comparison to dataset $D_0$.

In practice, a common goal is to select a single instrument among multiple candidate instruments. Results from the NPS test indicate that $D_0$ should be chosen; it has the highest Validity-Ratio among candidate datasets. Further, given that the Validity-Ratio is greater than 1, it is also likely to be a valid instrument. In general, to determine validity, one way would be to prespecify standard thresholds for the Validity-Ratio above which an instrument is valid, as suggested by \cite{kass1995bayes}. However, we deliberately refrain from providing standard thresholds, because the judgment for validity of an instrument will anyways depend on the prior assumed for Invalid-IV and Valid-IV model classes. We recommend instead to interpret the ratio of marginal likelihoods as a \emph{benchmark} for priors: a Validity Ratio of $20$ assuming uniform priors indicates that for $D_0$ to be an invalid-IV dataset, a researcher's prior on finding an invalid-IV dataset should be at least $20$ times as strong as the prior for valid-IV dataset. In addition, note that the Validity-Ratio may also be sensitive to the assumption of uniform priors over models within invalid-IV and valid-IV classes. We leave exploration of other priors over models for future work.

\begin{table}
    \centering
    \begin{tabular}{ccccc}
        \toprule
        & \multicolumn{3}{c}{Log Marginal Likelihood} & \\
        \cmidrule(lr){2-4}
        Dataset & Exclusion Violated & As-if-random Violated & Valid IV & Validity Ratio \\
        \midrule
        $D_0: Z, X, Y_0$   & -3080 & -3086 & \textbf{-3077} & 20.1 \\ 
        $D_1: Z, X, Y_1$ & -3168 & \textbf{-3161} & -3163 & 0.13 \\
        $D_2: Z, X, Y_2$ & \textbf{-3366} & -3367 & -3397 & $3.4$x$10^{-14}$ \\
        \bottomrule
    \end{tabular}
    \caption{Validity Ratio estimates for an example open problem proposed for testing binary instrumental variables. The NPS test can distinguish between valid-IV ($D_0$) and invalid-IV ($D_1$, $D_2$) datasets. Bold values denote the maximum marginal likelihood for each dataset.}
    \label{tab:palmer-results}
\end{table}

\begin{figure}[t]
    \centering
    \includegraphics[width=0.7\textwidth]{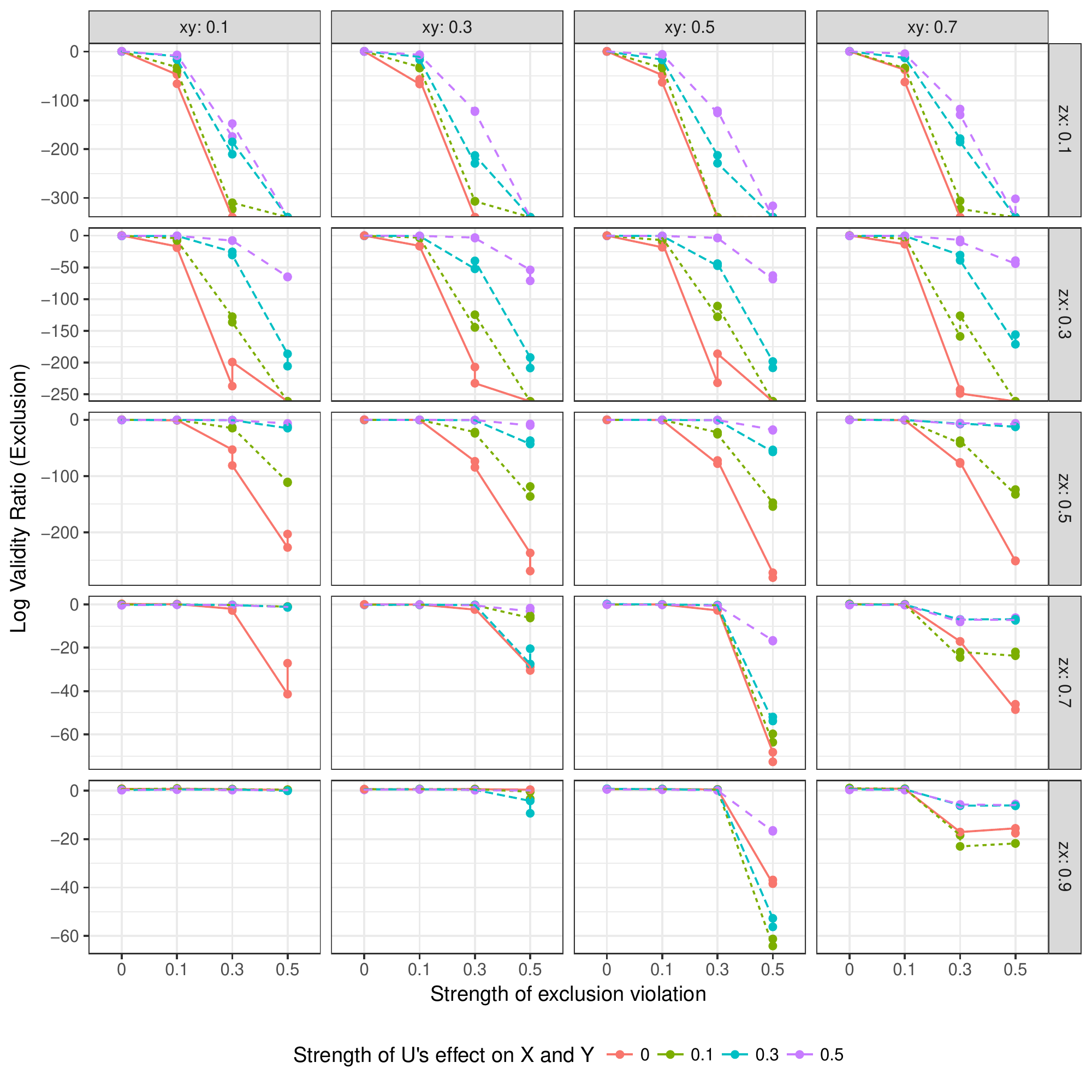}
    \caption{Log Validity-Ratio computed from the NPS test on simulated binary datasets where only exclusion is potentially violated. Strength of exclusion violation increases on the x-axis. \emph{(Rows)} zx denotes the direct effect of Z on X. \emph{(Columns)} xy denotes the direct effect of X on Y. }
    \label{fig:palmersim-excl-only}
\end{figure}
\subsection{Simulating a broad range of binary datasets}
Motivated by the example from \cite{palmer2011ivbounds}, we now construct a set of datasets that cover a broad spectrum of possible datasets with binary variables. We do so by changing the parameters of the Palmer et al.'s example model presented above. Z and U are generated from a Bernoulli distribution as before, but parameter for effect of Z on X can have five different values: $\{0.1, 0.3, 0.5, 0.7, 0.9\}$. Similarly, the effect of X on Y takes values in this set. Each of U's effect on X, U's effect on Y, U's effect on Z, and Z's effect on Y takes on values from the set $\{0, 0.1, 0.3, 0.5\}$. For simplicity, we assume that U's effect on X and Y is the same. Combined, these parameters lead to 5x5x4x4x4=1600 simulations, each of which yields a different causal model. From each causal model, we generate an i.i.d. dataset with size=50000 of  $<Z, X, Y>$ tuples.  

These simulated datasets span the range of datasets with a valid or invalid instrument. When the parameters for the effect of U on Z and the effect of Z on Y are zero, the causal model contains a valid instrument. Otherwise, it contains an invalid instrument. We make the same assumptions as before: equal prior probability of an invalid or valid instrument, and a uniform prior over causal models within both Valid-IV and Invalid-IV model classes. On each dataset, we compute the Validity-Ratio using the equation from Corollary~\ref{cor:uniform-prior}.\footnote{Unlike the NPS algorithm, here we compute the Validity-Ratio for all datasets for comparison, irrespective of whether they pass the Pearl-Bonet necessary test.}  

\subsubsection{NPS test can detect exclusion violation, except when instrument is strong}
We first look at violation of the exclusion restriction. For this case, we consider all datasets where as-if-random is satisfied (i.e., effect of $U$ on $Z$ is zero). Figure~\ref{fig:palmersim-excl-only} shows the log Validity-Ratio as the strength of the exclusion violation is increased. We find that when the parameter for effect of Z on X (\emph{instrument strength}) is below 0.5, Validity Ratio is below 1 consistently even for minor violation of the exclusion restriction. This holds true even as the true causal effect is varied: scanning horizontally through the rows shows a similar trend. In addition, the detectability of exclusion violation depends on the effect of confounders U on X and Y. When confounders do not affect X and Y (red line), Validity Ratio is the most sensitive to violations of exclusion. As the effect of confounders increases, Validity Ratio becomes less sensitive in detecting exclusion violation. In addition to detecting violations, the NPS test also correctly returns a Validity Ratio close to 1 or higher for valid instruments. The mean Validity-Ratio across all valid instruments is 1.8, with a maximum value of 12. There are some valid-IV models for which the Validity-Ratio falls below 1, but in all cases it is higher than 0.3 (or roughly, -1 on the log scale). 

Above results indicate that exclusion can be tested by inspecting the Validity-Ratio as long as the instrument is not too strong (effect parameters $<0.5$).  Further, the NPS test is able to identify violation of the exclusion restriction best when there is little confounding between X and Y. However, this is still a secondary effect and even at a confounding effect of U on X and Y at 0.5, the NPS test can successfully identify instruments that violate exclusion.

\begin{figure}[t]
    \centering
    \includegraphics[width=0.7\textwidth]{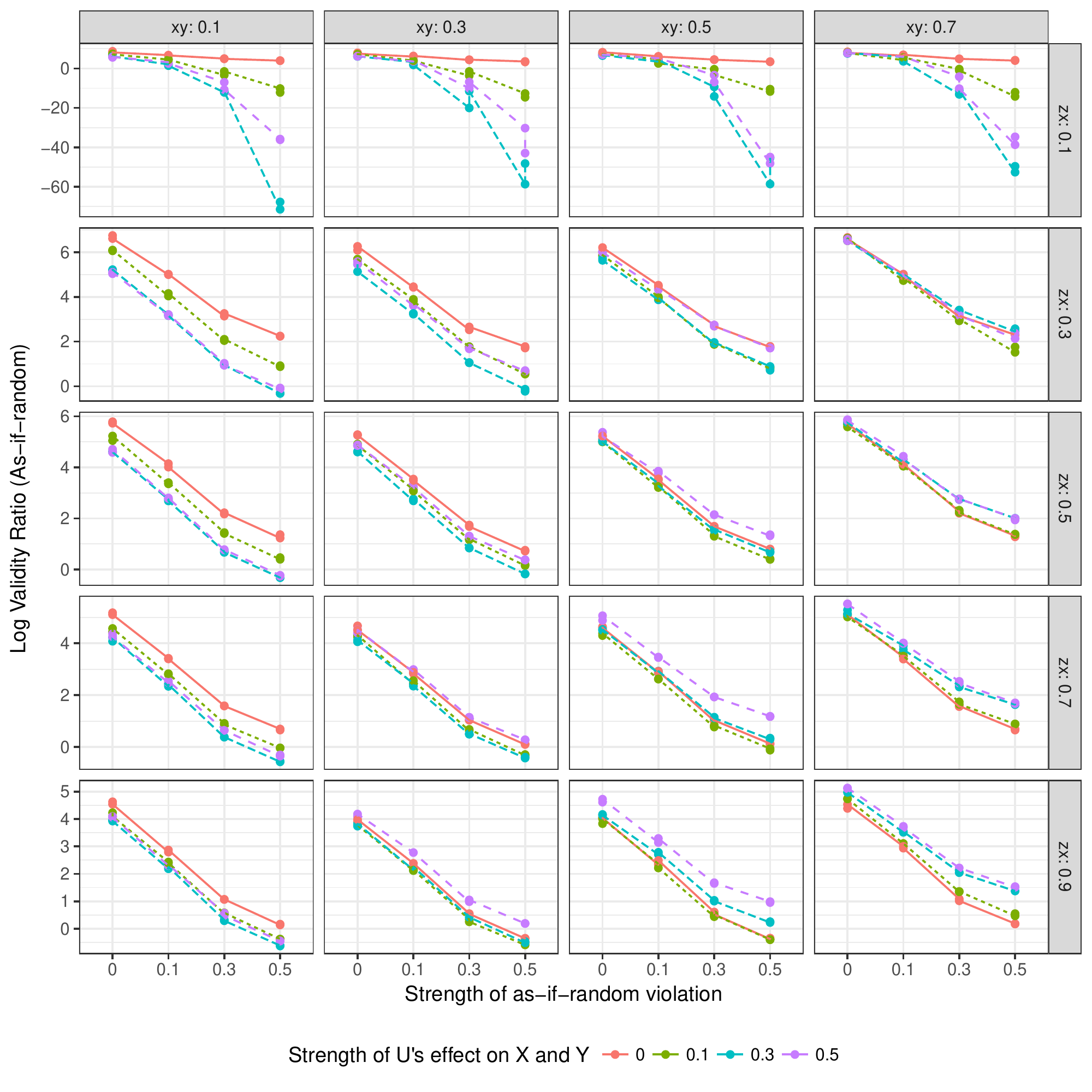}
    \caption{Log Validity-Ratio computed from the NPS test on simulated binary datasets where only as-if-random is potentially violated. Strength of as-if-random violation increases on the x-axis. \emph{(Rows)} zx denotes the direct effect of Z on X. \emph{(Columns)} xy denotes the direct effect of X on Y. }
    \label{fig:palmersim-air-only}
\end{figure}

\subsubsection{As-if-random is hard to detect, except when instrument is very weak}
Next, we look at violation of the as-if-random restriction only. For this case, we consider all datasets where the exclusion condition is satisfied (i.e., effect of Z on Y is zero). The obtained log Validity-Ratio as the parameter for effect of U on Z is varied is shown in Figure~\ref{fig:palmersim-air-only}. We find that violation of the as-if-random is harder to detect than exclusion. When the instrument is very weak (effect parameter for Z on X is 0.1), the Validity-Ratio goes below 1 as the strength of as-if-random violation is increased. This result is consistent even as the direct causal effect from X to Y is varied. Interestingly, the detection of a violation is better when there is a strong confounding effect of U on X and Y, and worse when the confounding effect is near zero. We conjecture that this is because the test is trying to detect an effect of U on Z and it is helpful if U also has a non-trivial on X and Y to gain comparison from. As for the exclusion assumption above, Validity-Ratio for datasets with a valid instrument is close to or higher than 1, indicating that they are probably valid. 

However, in the case where instrument strength increases to 0.3 and above, the Validity Ratio stays above 1 even when as-if-random is violated. As we found in Section~\ref{sec:uniform-asifrandom-res}, these results indicate that the NPS test is unable to detect violations of the as-if-random assumption. As in that section, it is also likely that violations of as-if-random assumption lead to lesser bias in causal estimation than the exclusion assumption.

\begin{figure}[t]
    \centering
    \includegraphics[width=0.7\textwidth]{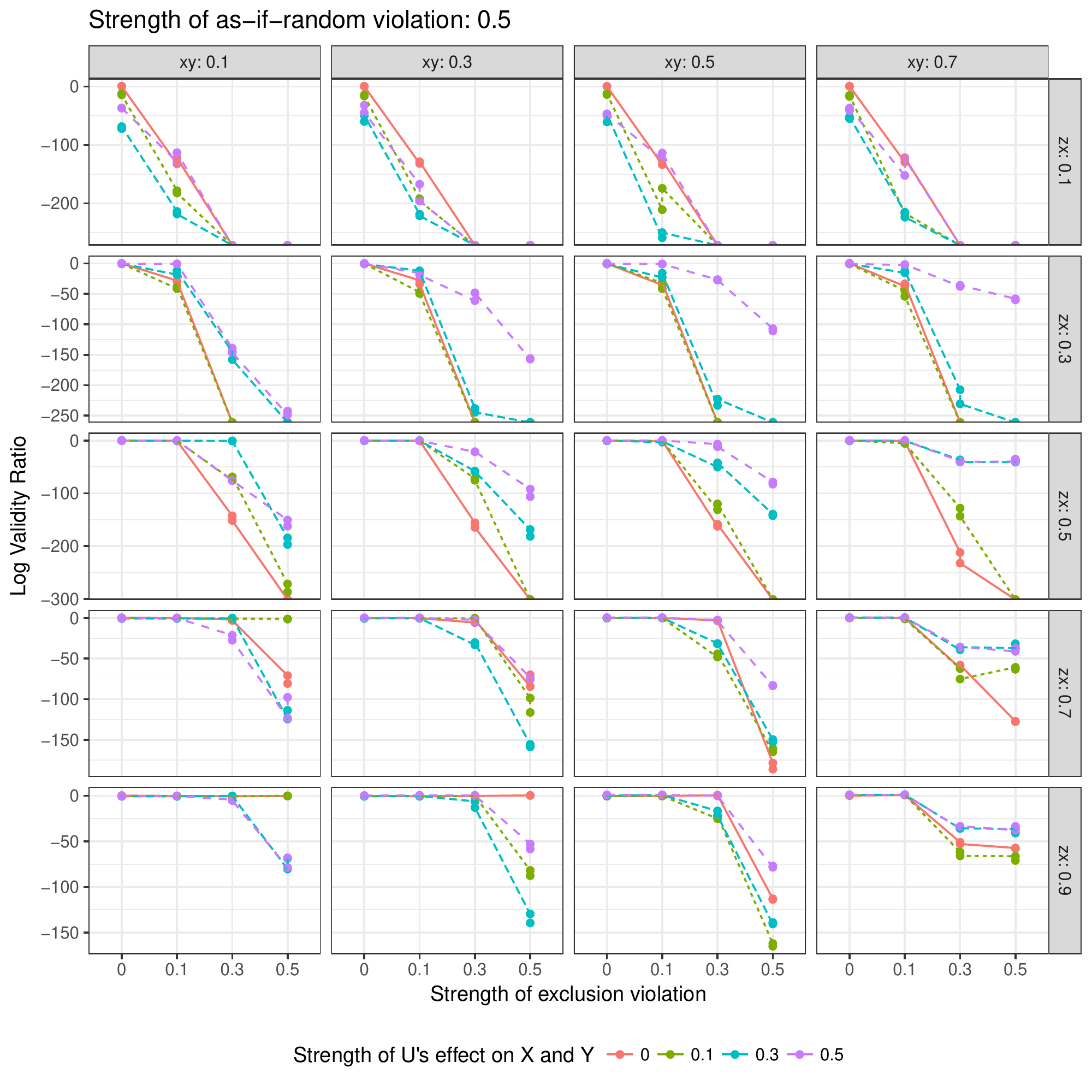}
    \caption{Log Validity-Ratio computed from the NPS test on simulated binary datasets where both exclusion and as-if-random are potentially violated. Strength of as-if-random violation is fixed at 0.5. \emph{(Rows)} zx denotes the direct effect of Z on X. \emph{(Columns)} xy denotes the direct effect of X on Y. }
    \label{fig:palmersim-excl-air}
\end{figure}
\subsubsection{Violations of both assumptions is easier to detect}
Finally, we look at the case when both exclusion and as-if-random are violated. Here we considered all simulated datasets. Figure~\ref{fig:palmersim-excl-air} shows the log Validity-Ratio as the strength of exclusion violation varies, for a fixed as-if-random violation of 0.5 (i.e., the parameter for U's effect on Z is 0.5). When both exclusion and as-if-random are violated, it becomes easier to identify datasets with invalid instruments. Even when the is instrument is moderately strong (effect of Z on X is 0.7), we find that Validity Ratio quickly drops to less than 1 as the strength of exclusion violation increases. This pattern is consistent as the true causal effect of X on Y is varied across datasets. When the instrument's effect on X is the strongest at 0.9, NPS test can still detect violations of exclusion with a severity higher than 0.3.  
Detection of invalid instruments becomes weaker as the strength of as-if-random violation is decreased. We include results for other values of the as-if-random violation (i.e., effect of U on Z) in Appendix C. 

\subsubsection{Summary of results}
Overall, these results show that the NPS test can detect violations of exclusion, but violations of as-if-random are not detected as reliably. Further, if both exclusion and as-if-random are violated, the NPS test can detect them more easily. In all cases, distinguishing between an invalid and valid instrument is more reliable when the instrument's effect on the cause is weak, as is the case with many observational studies. 
Finally, for all datasets that contained a valid instrument, the NPS test correctly returns a Validity Ratio close to or higher than 1, thus indicating the probable validity of the instrument.

\section{USING NPS TEST TO VALIDATE PAST IV STUDIES}
\label{sec:applications}
In this section we use the NPS test to evaluate empirical studies based on instrumental variables. We select two  seminal and highly cited studies on instrumental variables and a sample of recent studies from a leading economics journal, \emph{American Economic Review}.

\begin{figure}[t]
\subfloat[Unconditional IV]{\includegraphics[scale=0.45]{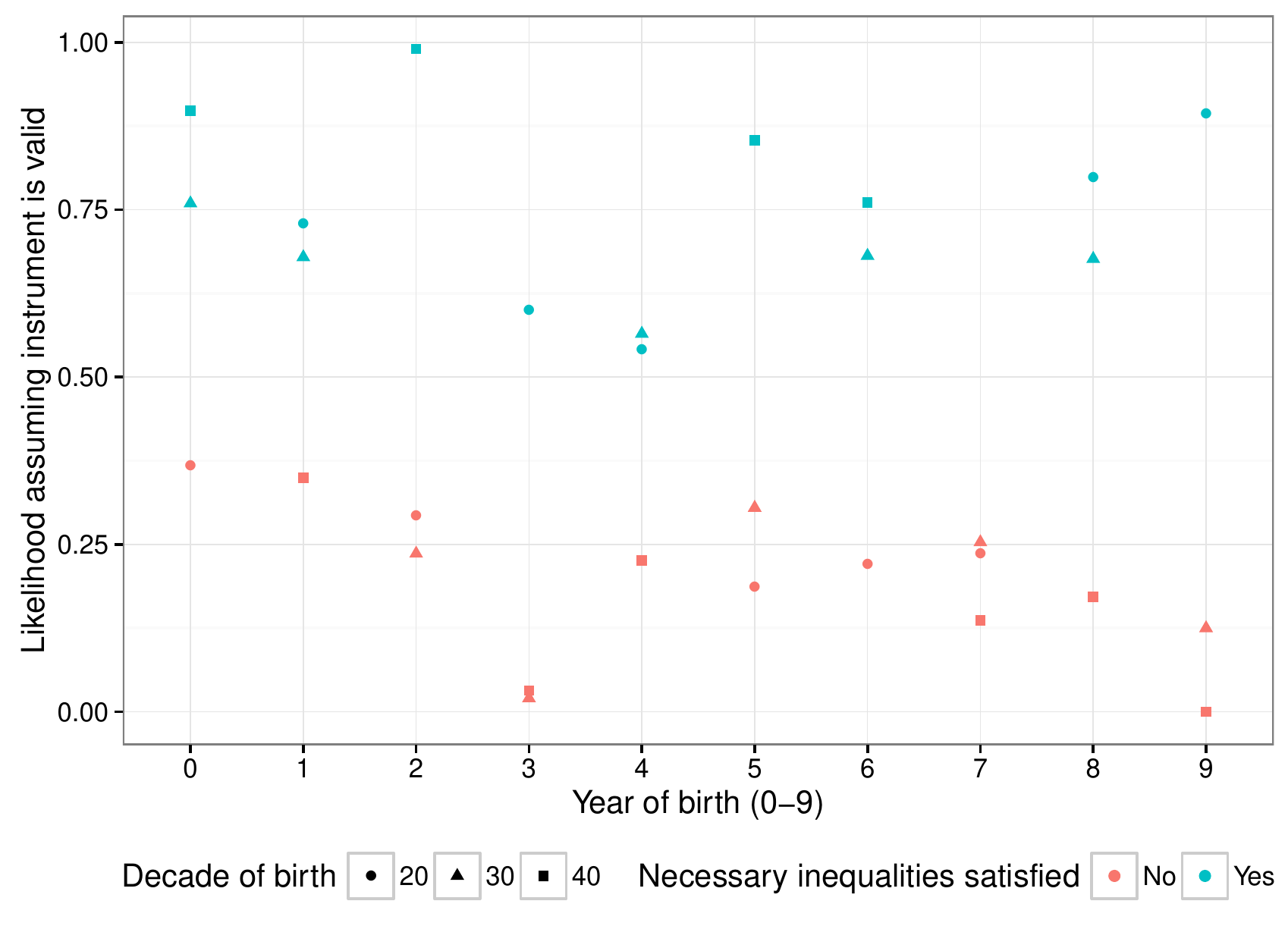} \label{fig:angrist-unconditional-iv}} %
    \qquad
    \subfloat[Conditional IV]{\includegraphics[scale=0.45]{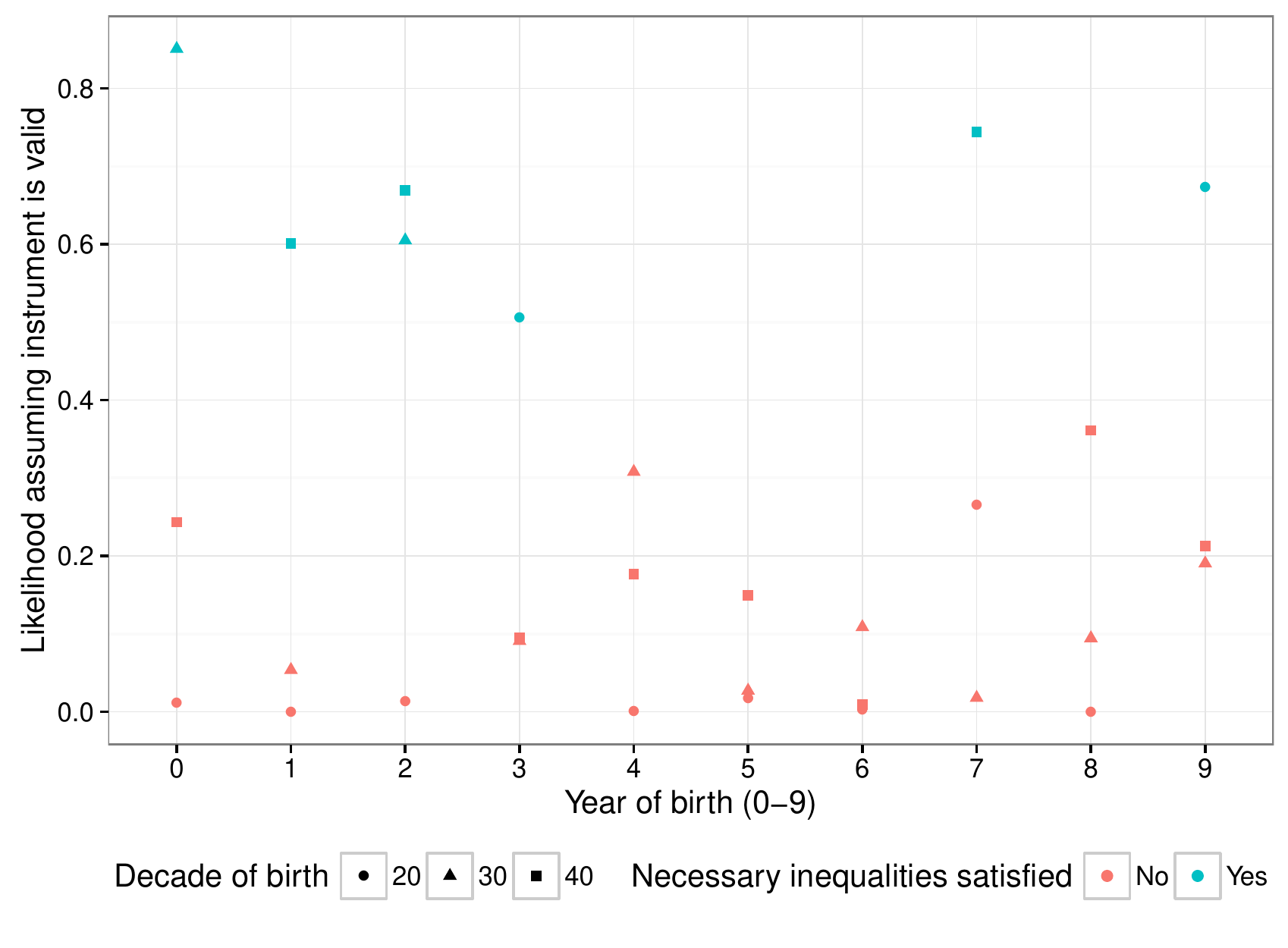} \label{fig:angrist-conditional-iv} }
    \caption{p-values from the Pearl-Bonet necessary test for instruments used in a past IV study. Left panel shows unconditional instruments, while the right panel instruments conditioned on relevant covariates. Assuming a p-value threshold of $0.05$, many of the instruments are determined to be invalid by the  test.} 
\end{figure}

\subsection{Seminal IV studies on the effect of schooling on wages}
Returns of schooling on future income was one of the first applications that instrumental variable studies were applied to. We apply the NPS test to two of the seminal instrumental variable studies \citep{angrist1991,card1993}. These studies are both highly cited, yet concerns about their validity  continue until today \citep{bound1995-ivproblems,buckles2013season}. We show how the NPS test can provide evidence and help evaluate the instruments used in these studies. 

\subsubsection{Effect of compulsory schooling on future wages}
The first paper estimates the effect of compulsory schooling on future earnings of students \cite{angrist1991}. In the original analysis, $Z$ is the quarter of birth, which was binarized to indicate that the student was either born in the first quarter or the last three quarters. $X$ is years of schooling and $Y$ is the (log) weekly earnings of individuals. Using yearly cohorts, the authors define a separate instrumental variable for each year of birth. The causal effect of interest is the effect of years of schooling on future earnings.\footnote{Dataset available at \url{http://economics.mit.edu/faculty/angrist/data1/data/angkru1991}}
Years of schooling and earnings are both reported as continuous numbers, in a bounded range. We follow the simplest discretization by binarizing both these variables at their mean. To check robustness against outliers, we also tried using median as the cutoff and obtained similar results. 

As in the original paper, we first use the IV method without conditioning on any covariates. Applying the NPS Algorithm shows that nearly half of the instruments do not satisfy the inequalities of the necessary test. However, some of these may be due to sampling variability. At a significance level of $\alpha=0.05$,   3 yearly instruments do not pass the necessary test, as Figure~\ref{fig:angrist-unconditional-iv} shows. This indicates that one or more of the three assumptions---monotonicity, exclusion and as-if-random---is violated, at least when all variables are binary. The null hypothesis in the necessary test is that an instrument is valid. 

In practice, however, unconditional instrumental variables are rare. The original paper proceeds to construct conditional instrumental variables based on related covariates in the dataset. We replicate conditioning on covariates by implementing a ``partialling out'' technique \citep{baum2007partialout}. Based on the Frisch-Waugh-Lovell theorem, the partialling out technique provides the equivalent effect of conditioning on covariates, provided the underlying causal model is linear (which the authors assume). Figure~\ref{fig:angrist-conditional-iv} shows the results of the necessary test when all instruments are conditioned on covariates. Contrary to intuition, a bigger fraction of conditional instruments are now invalid at the $5\%$ significance level using the necessary test. Combined, our results on unconditional and conditional IVs suggest that many of the instruments used in the original analysis may not be valid, at least under the transformation when treatment and outcome are binarized. Note that these results do not necessarily invalidate the analysis in the paper: the instruments in the original data may still be valid when X and Y are continuous, as binarizing the variables can also lead to violation of one of the assumptions. 

We also estimate the Validity Ratio for the instruments that pass the necessary test. We find that more than one quarter of the instruments have a Validity-Ratio lower than $0.001$, with the minimum and maximum validity ratio being $1.1$x$10^{-7}$ and $0.5$ respectively. Thus, while some of the instruments may be probably valid, the NPS test indicates that many others are likely to be invalid. In this example study, the NPS test can be used to select the top-ranked probably valid instruments for analysis, and filter out the rest.  

\subsubsection{Return of college education on future earnings}
Another early study in the instrumental variables literature was on the effect of college education on future earnings of students \citep{card1993}. This study used a person's distance from college as an instrument to estimate the causal effect of college education.
We follow a similar protocol as above. $Z$ is distance from college, which was already binarized in the original study. $X$ is years of education and $Y$ is the log weekly wages.\footnote{Dataset available at \url{http://davidcard.berkeley.edu/data_sets.html}} 
After binarizing $X$ and $Y$, we find that data from Card's study does not pass the necessary test for IV validity. This is corroborated by estimating the Validity-Ratio for the same data. For both mean and median split for binarization, the Validity Ratio is lower than 0.003.

However, in their analysis, Card also considers a conditional instrumental variable that conditions on a number of secondary variables, such as race and geography. When we use the partialling out technique from above to condition on these variables and rerun the NPS test, we find that the dataset passes the necessary test and yields a Validity Ratio of 0.2, indicating a higher likelihood of being valid than the unconditional instrument.

\begin{table}
    \centering
    \begin{tabular}{lrlll}
	\toprule
        Study name & Num. Observations & IV Strength & Pearl-Bonet Test & Validity Ratio \\
	\midrule
	\textbf{Randomized Experiment} &&&& \\
	\multicolumn{5}{l}{\emph{National Job Training Partnership Act (JTPA) Study \citep{abadie2002jtpa}}} \\
	& 5102 & 0.58 & Pass & 3.4 \\
	
	\textbf{Instrumental Variable Studies} &&&& \\
	\multicolumn{5}{l}{\emph{Effect of rural electrification on employment in South Africa (2011) \citep{dinkelman2011southafrica}}} \\
	--Type 0 & 1816 & 0.1 & Pass & 3.6\\
	--Type 1 & 1816 & 0.1 & Fail (p=0.26) & 0.002 \\
	--Type 2 & 1816 & 0.16 & Fail (p=0.16) & 0.0009\\
	--Type 3 & 1816 & 0.05 & Fail (p=0.11) &  0.001\\

	\multicolumn{5}{l}{\emph{Effect of Chinese import competition on local labor markets (2013) \citep{autor2013china}}}  \\
        --Outcome(population change) & 1444 & 0.59 & Pass & 0.3\\
        --Outcome(employment) & 1444 & 0.59 & Pass & 0.3\\

	\multicolumn{5}{l}{\emph{Effect of credit supply on housing prices (2015) \citep{favara2015credithousing}}} \\
        --Outcome(nloans)  & 11107 & -0.009 &  Fail (p=0.003) & 0.011 \\
        --Outcome(vloans) & 11107 & -0.003 & Fail (p=0.005) & 0.006  \\
        --Outcome(lir) & 11107 & -0.01 & Fail (p=0.004) &  0.0004 \\

	\multicolumn{5}{l}{\emph{Effect of subsidy manipulation on Medicare premiums (2015) \cite{decarolis2015medicare}}} \\ 
        --Unconditioned& 170 & 0.60 & Pass & 1.02  \\
        --Conditioned & 170 & 0.42 & Pass & 0.04 \\

	\multicolumn{5}{l}{\emph{Effect of Mexican immigration on crime in United States (2015) \citep{chalfin2015mexico}}} \\
        --Unconditioned & 182 &0.50 & Pass & 0.07 \\
        --Conditioned & 182 & 0.22 & Pass & 0.005 \\
	\bottomrule
    \end{tabular}
    \caption{Results of the NPS test on recent studies from the American Economic Review. Many of the instruments fail the necessary test and have comparatively low Validity-Ratios, indicating that they may be invalid, at least under the binary transformation.}
    \label{tab:recent-aer-studies}
\end{table}

\subsection{Recent IV studies in the American Economic Review}
We now apply the NPS test to validate more recent IV studies. To select recent studies, we searched for papers published in the American Economic Review from 2011-2015 that had `instrumental variable" or ``instrument" mentioned in their title or abstract. From this set, we filtered out studies that did not provide full datasets for replication, leaving us with five studies on the causal effect of diverse economic treatments such as rural electrification \citep{dinkelman2011southafrica}, credit supply \citep{favara2015credithousing}, subsidy manipulation \citep{decarolis2015medicare}, foreign import competition \citep{autor2013china}, and foreign immigration \citep{chalfin2015mexico}. As a comparison benchmark, we also include an instrumental variable study based on data from a randomized experiment \citep{abadie2002jtpa}, which almost surely should pass the NPS test. 

For each of the studies, we use code provided by the authors to construct a dataset of three variables $(Z, X, Y)$, where Z is the instrument. If the authors condition on covariates, then we use the partialling out technique to process the $(Z, X, Y)$ dataset. Finally, we binarize each variable at its mean, unless it is already binarized.  Table~\ref{tab:recent-aer-studies} presents results from the NPS test. First, we find that the randomized experiment passes the Pearl-Bonet test and obtains a Validity Ratio of 3.4, thus providing evidence for a probably valid instrument. In contrast, 2 out of 5 studies do not pass Pearl-Bonet test when binarized. They also report significantly low estimates of the Validity Ratio. For the other three studies, the Validity Ratio indicates a measure of their probable validity. 
As an example, the conditional instrument for the study on Mexican immigration obtains a Validity Ratio of 0.005, providing evidence for the invalidity of the instrument (i.e.,  the data does not support validity of the instruments used, at least under the transformation of binary variables).  In the remaining two studies the Validity Ratio is close to 1 so we cannot reject the validity of the instrument using the NPS test, thereby proving inconclusive about their validity.

\section{DISCUSSION AND FUTURE WORK}
\label{sec:discussion}


We presented a probably sufficient test for instrumental variables using necessary tests proposed by past work. Simulation results show that the test is more effective for detecting violation of the exclusion assumption, and that  effectiveness of the test increases as the strength of the instrument decreases. Therefore, while the NPS test cannot always verify whether an instrument is valid, it is more effective when the instrument is weak. Fortunately, many observational studies are based on instruments with low $Z$-$X$ correlation, where NPS can be applied. 

Nevertheless, the proposed test has several limitations. First, it relies on the specification of a prior over causal models for both the Valid-IV and Invalid-IV model classes. In this paper we chose a uniform prior but it is possible to choose other priors. If sufficient data is available, a possible method is to split the sample into two and use the first sample to estimate relative likelihoods of different models~\citep{ohagan1995fractional}. This estimated likelihood can then be used as the prior over models in the estimation phase. It will be useful to study the sensitivity of the Validity-Ratio to changes in this prior, which we leave for future work. 
Second, the proposed implementation of the NPS test works only for discrete variables. Extensions to continuous variables can increase the applicability of this test. Third, even for discrete variables, the test is often inconclusive. If the Validity-Ratio lies close to 1 (from -1 to 0 on the log scale), then we are unable to distinguish between valid and invalid instruments. Based on the simulation results, we conjecture that in such cases the resultant causal estimate will not have high bias even for invalid instruments, but this claim needs more evidence. In addition, we would also like to study if there is a natural threshold that can be set for identifying valid instruments in such cases. 

More generally, the NPS test is an example of a general Bayesian testing framework for causal models. 
Looking forward, the proposed test can be used to compare potential instruments for their validity, allow transparent comparisons between multiple IV studies, and enable a data-driven search for natural experiments~\citep{sharma2016-splitdoor}.  


\acks{We acknowledge Jake Hofman and Duncan Watts for their valuable feedback throughout the course of this work. We also thank Miro Dudik, Akshay Krishnamurthy, Justin Rao, Vasilis Syrgkanis and Michael Zhao for helpful suggestions.
}

\appendix
\section*{Appendix A} \label{app:derive-ratio}
Details on computing the Validity Ratio. 
\subsection*{Calculating the numerator}
When both conditions are satisfied, the numerator of Equation~\ref{eqn:implement-ratio-valid-iv} can be written as:
\begin{align} \label{eqn:implement-numerator-valid-iv}
	P(D|\theta) &= \prod_{j=1}^{Q} (\sum_{r_{zxy}='000'}^{'133'} P(R_{Z} = r_{Z})(P(Z=z_j| \theta, r_{z}) P(R_{XY}=r_{xy})(P(X=x_j, Y=y_j | \theta, r_{zxy} ))^{Q_j}  \nonumber \\
	       &=\prod_{j=1}^{Q} (\sum_{r_{z}='0'}^{'1'} P(R_{Z} = r_{Z})(P(Z=z_j| \theta, r_{z}))^{Q_j} (\sum_{r_{xy}='00'}^{'33'}P(R_{XY}=r_{xy})(P(X=x_j, Y=y_j | \theta, r_{xy} ))^{Q_j} 
\end{align}

Note that for a fixed value of $R_Z$, $Z$ can be uniquely determined. Similarly, for a given value of $Z$ and $R_{XY}$, $X$ and $Y$ can be deterministically evaluated. Thus, the above expression reduces to:
\begin{align}
	P(D| \theta) &=\prod_{j=1}^{Q} (\sum_{r_{z}='0'}^{'1'} \theta_{r_z})^{Q_j} (\sum_{r_{xy}='00'}^{'33'} \theta_{r_{xy}} )^{Q_j}  \nonumber \\ 
			  	                  &= \theta_{rz=0}^{Q_0} (\theta_{rxy=00}+\theta_{rxy=20}+\theta_{rxy=02}+\theta_{rxy=22})^{Q_0} \nonumber \\
				   & \text{\ \ \ \ \ }\theta_{rz=0}^{Q_1} (\theta_{rxy=01}+\theta_{rxy=21}+\theta_{rxy=03}+\theta_{rxy=23})^{Q_1} \nonumber  \\
				   & \text{\ \ \ \ \ }\theta_{rz=0}^{Q_2} (\theta_{rxy=11}+\theta_{rxy=10}+\theta_{rxy=31}+\theta_{rxy=30})^{Q_2} \nonumber  \\
				   & \text{\ \ \ \ \ }\theta_{rz=0}^{Q_3} (\theta_{rxy=12}+\theta_{rxy=13}+\theta_{rxy=32}+\theta_{rxy=33})^{Q_3} \nonumber  \\
				   & \text{\ \ \ \ \ }\theta_{rz=1}^{Q_4} (\theta_{rxy=00}+\theta_{rxy=02}+\theta_{rxy=10}+\theta_{rxy=12})^{Q_4}  \nonumber \\
				   & \text{\ \ \ \ \ }\theta_{rz=1}^{Q_5} (\theta_{rxy=01}+\theta_{rxy=03}+\theta_{rxy=11}+\theta_{rxy=13})^{Q_5} \nonumber  \\
				   & \text{\ \ \ \ \ }\theta_{rz=1}^{Q_6} (\theta_{rxy=20}+\theta_{rxy=30}+\theta_{rxy=21}+\theta_{rxy=31})^{Q_6} \nonumber  \\
				   & \text{\ \ \ \ \ }\theta_{rz=1}^{Q_7} (\theta_{rxy=22}+\theta_{rxy=32}+\theta_{rxy=23}+\theta_{rxy=33})^{Q_7}
\end{align}

The above equation leads to the following simplification for the numerator of Equation~\ref{eqn:implement-ratio-valid-iv}.
\begin{align}
	\int_{M1:m \text{ is valid}} P(D|m) dm &= \iint_{\theta_{rz},\theta_{rxy}} \prod_{j=1}^{Q} \theta_{rz=z}^{Q_j} (\theta_{rxy=a} + \theta_{rxy=b}+\theta_{rxy=c} + \theta_{rxy=d})^{Q_j} d\theta_{rz} d\theta_{rxy} \nonumber \\
	  &= \int \prod_{j=1}^{Q} \theta_{rz=z}^{Q_j} d\theta_{rz} \int  \prod_{j=1}^{Q} (\theta_{rxy=a} + \theta_{rxy=b}+\theta_{rxy=c} + \theta_{rxy=d})^{Q_j} d\theta_{rxy} 
\end{align}

The above integral has a form equivalent to the hyperdirichlet integral~\citep{hankin2010hyperdirichlet}, for which no tractable closed form exists except in a few special cases.\footnote{We say \textit{tractable} because it is possible to decompose the hyperdirichlet integral into a sum of exponentially many dirichlet integrals \citep{cooper1992bayesian}, but that will not be computationally feasible. } We therefore resort to approximate methods for estimating the integral. For binary X, Y and Z, the maximum dimension of the integral will be 16, so we recommend using approximate integral techniques over the unit simplex. \citep{genz2003cubature} For discrete variables, monte carlo methods for estimating marginal likelihood,  such as annealed importance sampling \citep{neal2001annealed} or nested sampling \citep{skilling2006nested,feroz2009multinest} may be more appropriate.

\subsection*{Calculating the denominator}
We can calculate the denominator of Equation~\ref{eqn:implement-ratio-valid-iv} in a similar way as the numerator, except that the exact integral expression will vary based on the extent of violation of as-if-random and exclusion restrictions. 

\subsubsection*{When Exclusion is violated}
Following Equations~\ref{eqn:implement-ratio-valid-iv}, the denominator can be expressed similarly to \ref{eqn:implement-numerator-valid-iv}. The only difference is that $\theta_{rxy}$ will be 4x16=$64$-dimensional integral.
\begin{align} \label{eqn:ml-exclusion-violated}
	P(D|\theta) 
	       &=\prod_{j=1}^{Q} (\sum_{r_{z}='0'}^{'1'} P(R_{Z} = r_{Z})(P(Z=z_j| \theta, r_{z}))^{Q_j} (\sum_{r_{xy}='0,0'}^{'3,15'}P(R_{XY}=r_{xy})(P(X=x_j, Y=y_j | \theta, r_{xy} ))^{Q_j} 
\end{align}

However, the above formulation corresponds to a full violation of the exclusion condition, assuming all $\theta_{ry}\in \{0,1,2,3,...,15\} \setminus \{0,3,12,15\}$ are non-zero. In practice, exclusion can be violated even if a single $R_Y$ in that set is non-zero.
Therefore, a stronger and realistic way of estimating marginal likelihood under an invalid-IV is to compute the maximum of all marginal likelihoods for causal models where one of the $R_Y$ corresponding to an exclusion violation is non-zero. This would mean computing 12 integrations over 4x5=20 dimensions, one for each for nonzero $R_Y$ that results in an exclusion violation.

\subsubsection*{When As-if-random is violated}
Fortunately, when as-if-random condition is violated, we can obtain a closed form solution for the integral. Proceeding from Equation~\ref{eqn:implement-general-ml}, we cannot simplify the marginal likelihood as a product of two independent integrals and thus obtain: 
\begin{align} \label{eqn:implement-air-violated}
	\int P(D|\theta) d\theta &=
	\int_{M2} \prod_{j=1}^{Q} (\sum_{r_{zxy}=0,0,0}^{1,3,3} P(R_{XYZ} = r_{xyz}) P(Z=z_j, X=x_j, Y=y_j | \theta, r_{zxy} ))^{Q_j} 
\end{align}

This, however, means that each $\theta_{rzxy}$ occurs exactly once in the integral, allowing a transformation of the integral to a dirichlet integral. 
The closed form integral is given by,
\begin{align} \label{eqn:ml-air-violated}
\int P(D|\theta) d\theta &= \frac{\prod_{j=1}^{Q} \Gamma(4+Q_j)}{(\Gamma(4))^Q \Gamma(\sum_{j=1}^{Q} \Gamma(4+Q_j)))}
\end{align}
where $\Gamma(n)=factorial(n-1)$ is the Gamma function.

\subsubsection*{When both are violated}
When both exclusion and as-if-random conditions are violated, we can again obtain a closed form solution. The integral expression is similar to that for as-if-random violation (Equation~\ref{eqn:implement-air-violated}), except that the number of dimensions of theta increases to 2x4x16=128. The denominator of the Validity Ratio can be evaluated as:

\begin{align} \label{eqn:ml-both-violated}
	\int P(D|\theta) d\theta &=
	\int_{M2} \prod_{j=1}^{Q} (\sum_{r_{zxy}=0,0,0}^{1,3,15} P(R_{XYZ} = r_{xyz}) P(Z=z_j, X=x_j, Y=y_j | \theta, r_{zxy} ))^{Q_j} \nonumber \\
	&= \frac{\prod_{j=1}^{Q} \Gamma(16+Q_j)}{(\Gamma(16))^Q \Gamma(\sum_{j=1}^{Q} \Gamma(16+Q_j)))}
\end{align}

\section*{Appendix B}\label{app:monotonicity-proof}
\begin{theorem} 
For any data distribution $P(X, Y, Z)$ generated from a  valid-IV model that also satisfies monotonicity, the following inequalities hold:
\begin{align} \label{eqn:discrete-monotonicity-thm}
P(Y=y, X \geq x| Z=z_0) \leq P(Y=y, X \geq x| Z=z_1) \quad ... \quad \leq P(Y=y, X \geq x| Z=z_{l-1}) \quad \forall x,y \nonumber\\
P(Y=y, X \leq x| Z=z_0) \geq P(Y=y, X \leq x| Z=z_1) \quad ... \quad \geq P(Y=y, X \leq x| Z=z_{l-1}) \quad \forall x,y
\end{align}
where $Z$, $X$ and $Y$ are ordered discrete variables of levels $l$, $n$ and $m$ respectively and $z_0 \leq z_1 ... \leq z_{l-1}$.
\end{theorem}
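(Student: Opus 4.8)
The plan is to work directly with the structural representation underlying the response-variable framework of Section~\ref{sec:sample-causal-models}. For a valid-IV model I write $Y=f(X,U_y)$ (exclusion guarantees $f$ does not depend on $Z$) and $X=g(Z,U_x)$ as in Equation~\ref{eqn:validiv-structural-eqns}, collect $U=(U_x,U_y)$, and record the two hypotheses as: as-if-random gives $Z \indep U$, and monotonicity gives $g(z',u_x)\ge g(z,u_x)$ whenever $z'\ge z$. The first move is to use as-if-random to strip the confounding out of the conditional law. Since $Z\indep U$,
\[
P(Y=y,\, X \ge x \mid Z=z) \;=\; \int P(u)\, \mathbf{1}\!\left[g(z,u_x)\ge x\right]\, \mathbf{1}\!\left[f(g(z,u_x),u_y)=y\right]\, du,
\]
and symmetrically for the event $\{X\le x\}$. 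This rewrites each conditional probability as the expectation, against the \emph{fixed} law $P(u)$, of an indicator that depends on $z$ only through the realized treatment $g(z,u_x)$.

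Next I would compare the two integrands at levels $z_a\le z_b$ pointwise in $u$. Monotonicity gives $x_a:=g(z_a,u_x)\le g(z_b,u_x)=:x_b$, so the threshold indicator already satisfies $\mathbf{1}[x_b\ge x]\ge \mathbf{1}[x_a\ge x]$. I would then bucket each unit by how $(x_a,x_b)$ sits relative to the threshold $x$: for \emph{non-switchers} with $x_a=x_b$ the outcome $f(x_a,u_y)=f(x_b,u_y)$ is unchanged and the contributions cancel; for \emph{switchers} with $x_a<x\le x_b$ the lower level contributes $0$ to the event $\{X\ge x\}$ while the higher level contributes a nonnegative indicator, pushing the difference in the claimed direction. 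Integrated over $u$, this is exactly the complier argument that yields Pearl's binary inequalities (Equation~\ref{eqn:binary-monotonicity}) when specialized to $l=n=m=2$, and it already proves monotonicity of the $Y$-marginal $P(X\ge x\mid Z=z)$.

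The step I expect to be the main obstacle is controlling the joint event with $Y=y$ held fixed once $X$ has more than two levels. A unit can satisfy $x\le x_a<x_b$, i.e. both treatment values lie strictly above the threshold while the realized treatment still increases; because monotonicity constrains $g$ but not $f$, the arbitrary map $f(\cdot,u_y)$ may send $f(x_b,u_y)\ne f(x_a,u_y)$, so the pointwise term $\mathbf{1}[f(x_b,u_y)=y]-\mathbf{1}[f(x_a,u_y)=y]$ can be negative and the clean binary bound fails. The crux is thus whether these sign-indefinite ``interior'' contributions aggregate correctly, and I would pursue two routes: (i) telescoping over the threshold $x$, exploiting that summing the first-line inequalities over all $y$ collapses to the already-signed marginal $P(X\ge x\mid Z=z)$; and (ii) checking whether a monotone treatment-response condition on $f$ is tacitly required. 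I would flag this last point as genuinely delicate, since a single deterministic type with $g(z_a,u_x)=1$, $g(z_b,u_x)=2$ and a non-monotone outcome map $f(1,\cdot)=1$, $f(2,\cdot)=0$ already reverses the $y=1$, $x=1$ instance of the claimed inequality; accordingly I would expect the honest proof either to add monotonicity of the $X\to Y$ response or to restrict the statement to threshold events that genuinely telescope.
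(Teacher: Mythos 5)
Your analysis is essentially correct, and the obstacle you flag in your final paragraph is not a gap in your own argument but a genuine flaw in the theorem as stated. Your counterexample is decisive: take a single deterministic type $u$ with $g(z_0,u)=1$, $g(z_1,u)=2$ (monotone) and $f(1,u)=1$, $f(2,u)=0$. This is a valid-IV model (exclusion and as-if-random hold trivially), yet $P(Y=1, X\geq 1\mid Z=z_0)=1 > 0 = P(Y=1, X\geq 1\mid Z=z_1)$, contradicting the first chain of inequalities. The binary case, which your switcher/complier argument proves correctly, is special precisely because there the non-trivial events $\{X\geq 1\}=\{X=1\}$ and $\{X\leq 0\}=\{X=0\}$ pin down the realized value of $X$, so non-switchers contribute identically at both instrument levels; once $X$ has three or more levels a unit can move within $\{X\geq x\}$ and flip its outcome, and your route (i) cannot rescue this: summing over $y$ only recovers the marginal inequality for $P(X\geq x\mid Z=z)$, which is true but strictly weaker than the joint claim. (Strictly speaking, even the binary statement with the quantifier $\forall x$ includes the trivial threshold $\{X\geq x_{\min}\}$, for which the claim would force $P(Y=y\mid Z=z)$ to be monotone in $z$ and fails; the paper's binary reduction to Equation~2 silently drops these trivial events.)

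The paper's own proof fails exactly where you predicted. It factorizes $P(Y=y, X\geq x\mid Z=z)$ as $P(X\geq x\mid Z=z)\,P(Y=y\mid X=x, Z=z)$ and then asserts this equals $P(X\geq x\mid Z=z)\,P(Y=y\mid X\geq x)$, i.e., that $Y$ is independent of $Z$ given the event $\{X\geq x\}$. That step is wrong twice over: conditioning on $\{X\geq x\}$ does not fix the realized value of $X$, so $Y=f(X,U_y)$ still depends on $Z$ through $X$ (this is what your counterexample exploits, with no confounding at all); and even conditioning on $X=x$ exactly, $Y \not\indep Z \mid X$ in an IV model with confounders---a fact the paper itself emphasizes in Section~2 when explaining why $Z\indep Y\mid X$ is not a usable test. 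After this invalid factorization, the remainder of the paper's proof establishes only the marginal monotonicity of $P(X\geq x\mid Z=z)$, the same statement you prove. The theorem is salvageable only by weakening it: for multi-level $X$ the valid inequalities are those for the extreme thresholds, $P(Y=y, X=x_{\max}\mid Z=z)$ non-decreasing and $P(Y=y, X=x_{\min}\mid Z=z)$ non-increasing in $z$, since those events do pin down $X$; and note that even imposing monotone treatment response of $f$ in $x$ does not repair the general statement, since a type with $f(1,u)=0$, $f(2,u)=1$ reverses the $y=0$ instance. So where you suspected the honest proof must add an assumption or restrict the thresholds, the honest answer is that the paper's proof does neither and is incorrect.
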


\begin{proof}
Consider the first set of inequalities with $P(Y=y, X\geq x|Z=z_k)$, for some $X=x$ and $Y=y$. Based on the structure of a Valid-IV causal model (Figure~\ref{fig:std-iv-model}), we can factorize $P(Y, X|Z)$ as:
\begin{align*}
P(Y=y, X \geq x|Z=z) = P(X \geq x|Z=z) P(Y=y|X=x,Z=z) = P(X \geq x|Z=z) P(Y=y|X \geq x)
\end{align*}
$P(Y=y|X \geq x)$ is independent of $Z$. Therefore, as $Z$ varies, $P(Y=y, X \geq x|Z=z)$ only depends on $P(X \geq x, Z=z)$.

Using the structural equations for $x=g(z, u)$ from Equation~\ref{eqn:validiv-structural-x}, we obtain for any $x$ and $z \in \{z_0, z_1, ...z_{l-1}\}$:
\begin{align} \label{eqn:discrete-mono-proof-1}
P(X \geq x| Z=z_k) = P(R_X : g(z_k,u) \geq x) 
\end{align}

By monotonicity, we know that $g(z_{k2},u) \geq g(z_{k1},u)$ whenever $z_{k2} \geq z_{k1}$. Thus, we can write:
\begin{equation} \label{eqn:discrete-mono-proof-2}
g(z_{k1},u) \geq x \Rightarrow g(z_{k2},u) \geq x \qquad \text{if } z_{k2} \geq z_{k1}
\end{equation}

Combining Equations~\ref{eqn:discrete-mono-proof-1} and \ref{eqn:discrete-mono-proof-2}, for any $k$, we can argue that the set of response variables $r_x$ that satisfy $g(z_k,u) \geq x$ will always be smaller than the set of response variables that satisfy  $g(z_{k+1},u) \geq x$. Therefore, we obtain the following inequality:
\begin{align*}
P(X \geq x| Z=z_k) = P(R_X : g(z_k,u) \geq x) \leq P(R_X : g(z_{k+1},u) \geq x) = P(X \geq x| Z=z_{k+1})
\end{align*}

Iterating over $k \in \{0,1..., l-1\}$ will provide us the first set of inequalities stated in the Theorem. We can follow a similar reasoning to derive the second set of inequalities with $P(Y=y, X\leq x|Z=z_k)$.

\end{proof}

\section*{Appendix C}\label{app:extra-sim-plot}
Here we provide two additional simulation plots for different strengths of violation of the as-if-random condition (0.1 and 0.3). 
\begin{figure}[h!]
    \centering
    \includegraphics[width=0.7\textwidth]{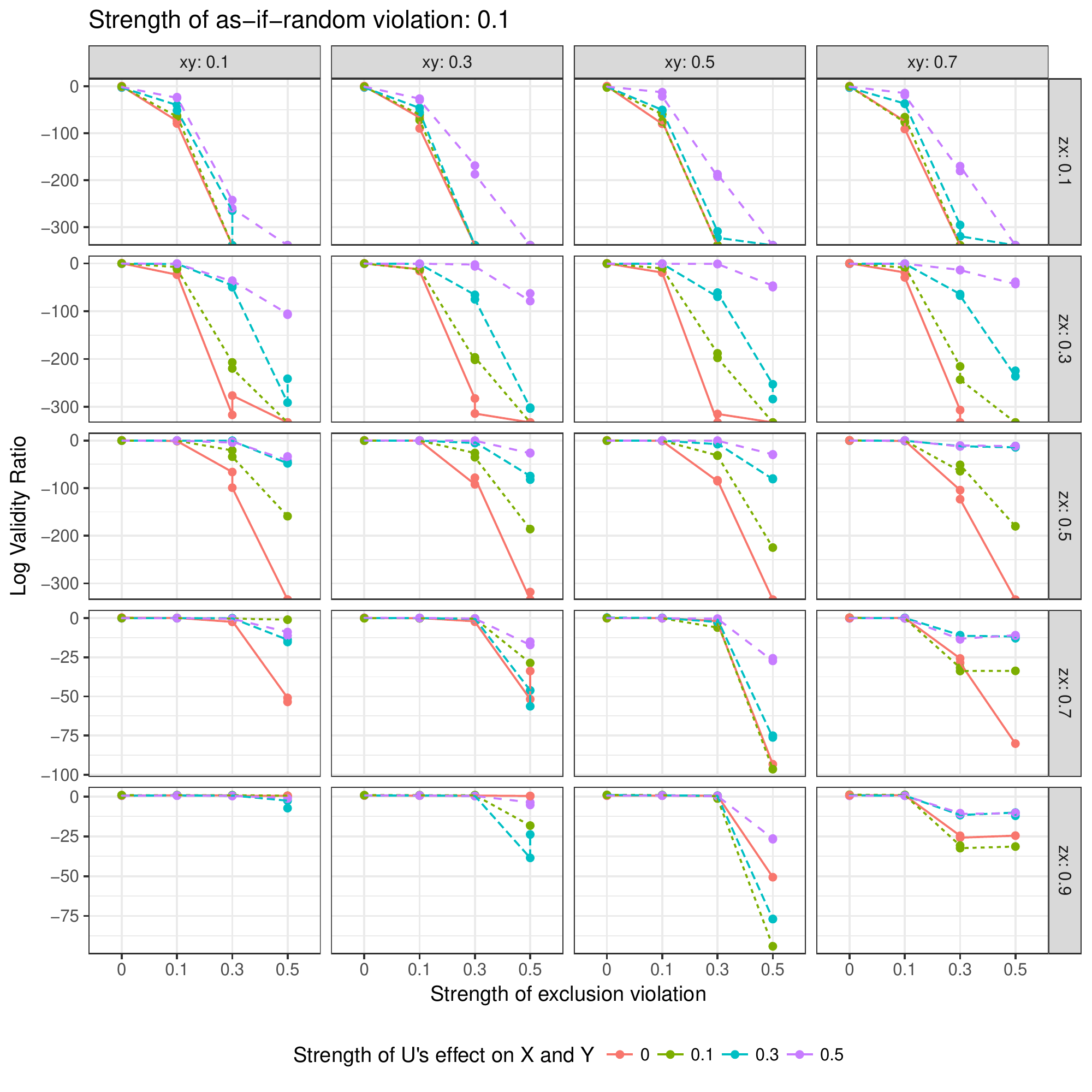}
    \caption{Log Validity-Ratio computed from the NPS test on simulated binary datasets where both exclusion and as-if-random are potentially violated. Strength of as-if-random violation is fixed at 0.1 \emph{(Rows)} zx denotes the direct effect of Z on X. \emph{(Columns)} xy denotes the direct effect of X on Y. }
    \label{fig:palmersim-excl-air}
\end{figure}
\begin{figure}[h!]]
    \centering
    \includegraphics[width=0.7\textwidth]{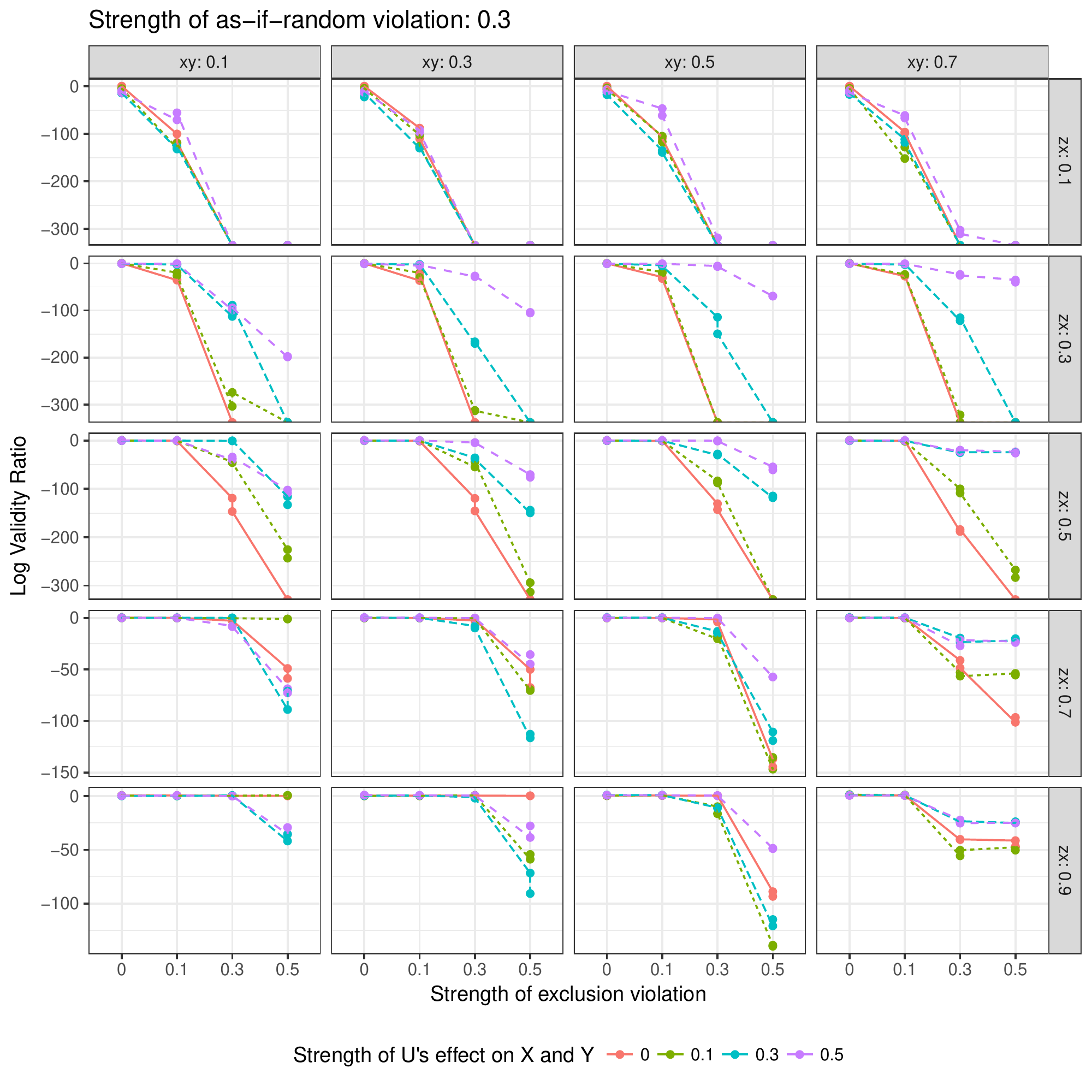}
    \caption{Log Validity-Ratio computed from the NPS test on simulated binary datasets where both exclusion and as-if-random are potentially violated. Strength of as-if-random violation is fixed at 0.3. \emph{(Rows)} zx denotes the direct effect of Z on X. \emph{(Columns)} xy denotes the direct effect of X on Y. }
    \label{fig:palmersim-excl-air}
\end{figure}

\newpage
\vskip 0.2in
\bibliography{iv-papers}
\end{document}